\begin{document}

\newtheorem{definition}{Definition}
\renewcommand{\algorithmicrequire}{\textbf{Requires}}
\newtheorem{theorem}{Theorem}
\newtheorem{lemma}{Lemma}
\newtheorem{axiom}{Axiom}
\newtheorem{example}{Example}
\newtheorem{corollary}{Corollary}
\newtheorem{property}{Property}
\newtheorem{proof}{Proof}

\newcommand{\partitle}[1]{\medskip \noindent \textbf{#1.}}
\newcommand{\subpartitle}[1]{\medskip \emph{#1.}}
\newcommand{\topcaption}{%
\setlength{\abovecaptionskip}{0pt}%
\setlength{\belowcaptionskip}{0pt}%
\caption}

\title{Skyline Diagram: Efficient Space Partitioning for Skyline Queries}

\author{Jinfei Liu,
        Juncheng Yang,
        Li Xiong,
        Jian Pei,~\IEEEmembership{fellow,~IEEE,}
        Jun Luo,
        Yuzhang Guo,
        Shuaicheng Ma,
        and Chenglin Fan
\IEEEcompsocitemizethanks{
\IEEEcompsocthanksitem Jinfei Liu, Li Xiong, Yuzhang Guo, and Shuaicheng Ma are with Emory University.\protect\\
E-mail: \{jinfei.liu, lxiong, yuzhang.guo, and shuaicheng.ma\}@emory.edu}

\IEEEcompsocitemizethanks{ \IEEEcompsocthanksitem Juncheng Yang is with Carnegie Mellon University.\protect\\
E-mail: jasonyang@cmu.edu}

\IEEEcompsocitemizethanks{ \IEEEcompsocthanksitem Jian Pei is with Simon Fraser University.\protect\\
E-mail: jpei@cs.sfu.ca}

\IEEEcompsocitemizethanks{ \IEEEcompsocthanksitem Jun Luo is with Machine Intelligence Center, Lenovo Group Limited.\protect\\
E-mail: jluo1@lenovo.com}

\IEEEcompsocitemizethanks{ \IEEEcompsocthanksitem Chenglin Fan is with UT Dallas.\protect\\
E-mail: cxf160130@utdallas.edu}

%\thanks{The authors would like to thank the anonymous reviewers for their helpful comments. This research is supported in part by the Patient-Centered Outcomes Research Institute (PCORI) under award ME-1310-07058, the National Institute of Health (NIH) under award R01GM114612, and an NSERC Discovery grant.}
\thanks{Manuscript received XXXXXX; revised XXXXXX.}
}

\IEEEtitleabstractindextext{
\begin{abstract}
Skyline queries are important in many application domains. In this paper, we propose a novel structure \emph{Skyline Diagram}, which given a set of points, partitions the plane into a set of regions, referred to as skyline polyominos. All query points in the same skyline polyomino have the same skyline query results. Similar to \emph{$k^{th}$-order Voronoi diagram} commonly used to facilitate $k$ nearest neighbor ($k$NN) queries, skyline diagram can be used to facilitate skyline queries and many other applications. However, it may be computationally expensive to build the skyline diagram. By exploiting some interesting properties of skyline, we present several efficient algorithms for building the diagram with respect to three kinds of skyline queries, quadrant, global, and dynamic skylines. In addition, we propose an approximate skyline diagram which can significantly reduce the space cost. Experimental results on both real and synthetic datasets show that our algorithms are efficient and scalable.
\end{abstract}

\begin{IEEEkeywords}
Skyline, Voronoi, Diagram, Queries.
\end{IEEEkeywords}
}
%------------------------------------------------------------------------------------------------------------------------------------------

\maketitle
%---------------------------------------------------------------------------------------------------------------------------------------------------------------------------------------------------------------------------------------------------------------
%---------------------------------------------------------------------------------------------------------------------------------------------------------------------------------------------------------------------------------------------------------------
\section{Introduction}\label{sec:introduction}
Similarity queries are fundamental queries in many applications which retrieve similar objects given a query object. One class of the similarity queries, $k$NN queries, have been extensively studied which retrieve the $k$ nearest (or most similar) objects based on a predefined distance or similarity metric. For objects with multiple attributes, the similarity or distance on different attributes are typically aggregated with predefined weights. In many scenarios, it may not be clear how to define the relative weights in order to aggregate the attributes. {\em Skyline}, also known as {\em Maxima} in computational geometry or {\em Pareto} in business management, is important for multi-criteria decision making or multi-attribute similarity retrieval. Without assuming any relative weights of the attributes, the skyline of a set of multi-dimensional data points consists of all objects that are not dominated by any others, i.e., no other objects are better (or more similar to the query object) in at least one dimension and at least as good (as similar) in all dimensions.

\partitle{Skyline Queries}
There are many example applications that skyline queries may be desired. For instance, a physician who is treating a heart disease patient may wish to retrieve similar patients based on their demographic attributes and diagnosis test results in order to enhance and personalize the treatment for the patient. A car dealer who wishes to price a used car competitively may attempt to retrieve all similar cars (competitors) on the market based on a set of attributes such as mileage and year.  For simplicity, we use the running example below to illustrate the skyline definition as well as algorithm descriptions throughout the paper.

\begin{figure}[htb]
 \centering
 \includegraphics[width=0.3\textwidth]{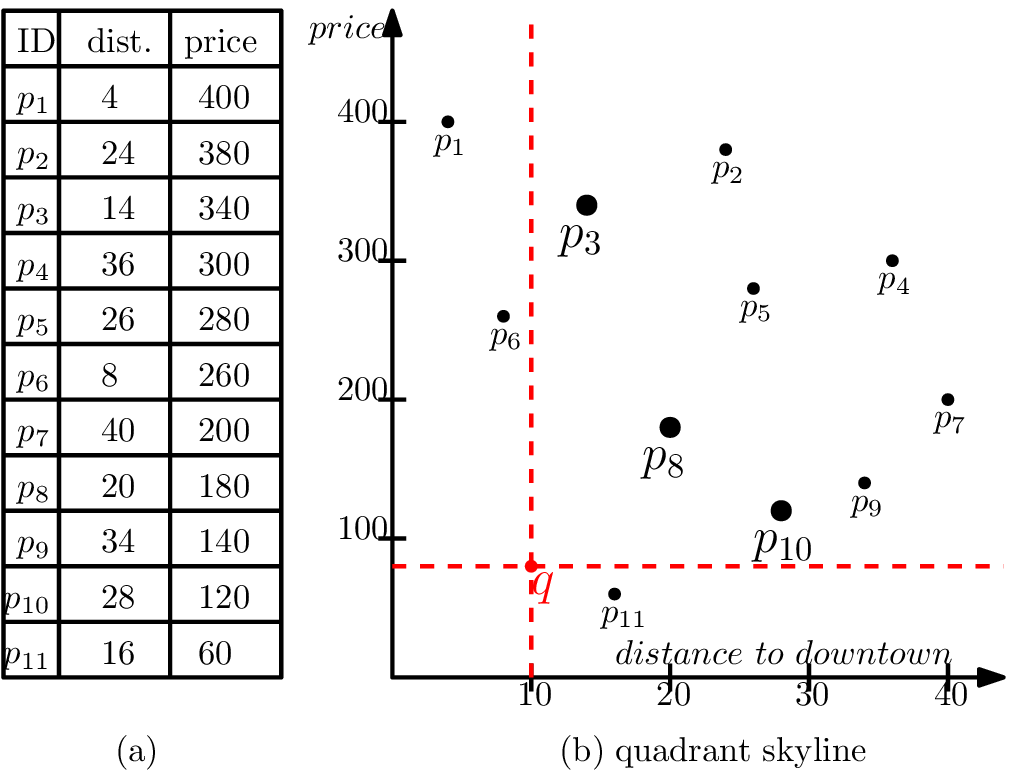}
 \includegraphics[width=0.3\textwidth]{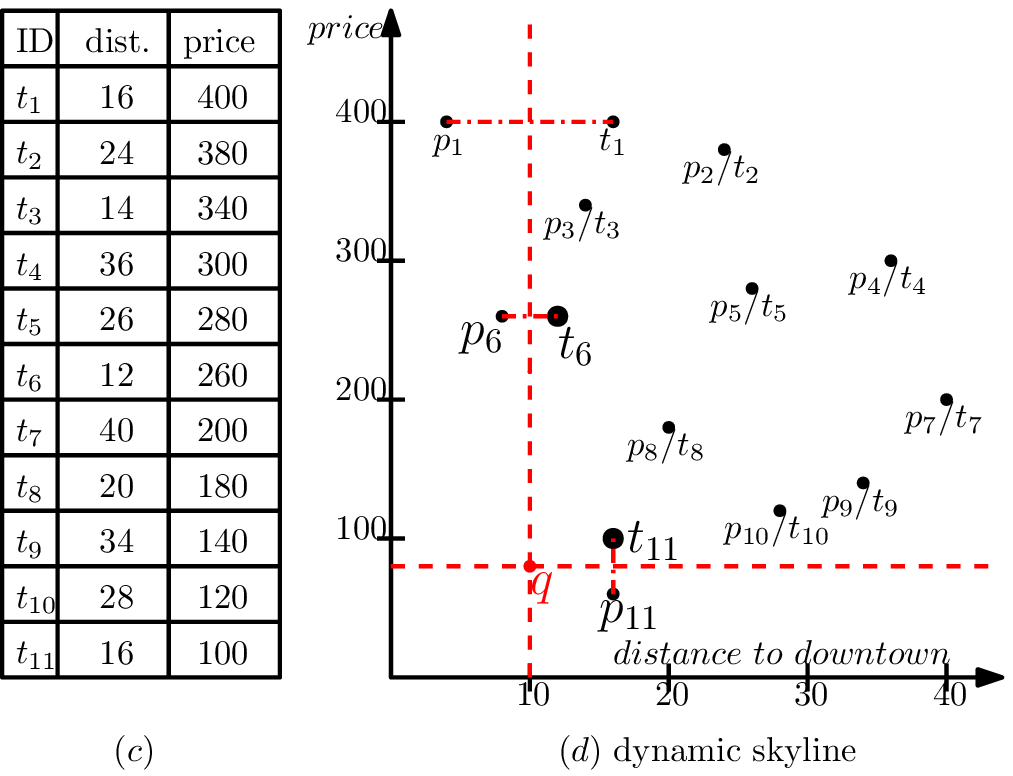}
 \vspace{-0.08in}
 \captionsetup{font={scriptsize}}\caption{A skyline example of hotels.}
 \label{fig:skyH}
\end{figure}

Consider a hotel manager who wishes to retrieve all competing hotels that are similar to his/her hotel with respect to price and distance to downtown.  Figure \ref{fig:skyH}(a) illustrates a dataset $P=\{p_1,p_2,...,p_{11}\}$, each point representing a hotel with two attributes: the distance to downtown and the price. Figure \ref{fig:skyH}(b) shows the corresponding points in the two-dimensional space.

Given a query hotel $q=(10,80)$, if we only consider the hotels with higher price and longer distance to downtown, i.e., the points in the first quadrant with $q$ as the origin, the skyline points are $p_3,p_8,p_{10}$ as shown in Figure \ref{fig:skyH}(b) (we refer to this as {\em quadrant skyline}).
If we consider all hotels, we can compute the skyline in each quadrant independently, i.e., only considering dominance within each quadrant, and take the union which is $p_3,p_8,p_{10}, p_6, p_{11}$ (we refer to this as {\em global skyline}). Alternatively, if we consider the absolute difference to the query point on each dimension, hence a point can dominate another point in a different quadrant, we have {\em dynamic skyline}\footnote{we follow the name conventions in the literature \cite{DBLP:conf/vldb/DellisS07} for these different types of skyline queries.}. To compute dynamic skyline, we can map all data points to the first quadrant with $q$ as the origin and the distance to $q$ as the mapping function, and then compute the traditional skyline from all the mapped points. The mapped points with $t_i[j]=|p_i[j]-q[j]|+q[j]$ on each dimension $j$ are shown in Figure \ref{fig:skyH}(c) and (d). It is easy to see that $t_6$ and $t_{11}$ are skyline in the mapped space, which means $p_6$ and $p_{11}$ are the dynamic skyline with respect to query $q$. We note that dynamic skyline is always a subset of global skyline since the mapped points may dominate some points that are otherwise global skyline.

\partitle{Skyline Diagram}
Given the importance of such skyline queries, it is desirable to precompute the skyline for any random query point to facilitate and expedite such queries in real time. Voronoi diagram \cite{DBLP:journals/tc/ChazelleE87} is commonly used to compute and facilitate $k$NN queries.  Inspired by the Voronoi diagram which captures the regions with same $k$NN query results, we propose a fundamental structure in this paper, referred to as {\em skyline diagram}, to capture the query regions with the same skyline result and to facilitate skyline queries.

\begin{figure}[htb]
\begin{minipage}[t]{0.48\linewidth}
\centering
\includegraphics[width=1.6in]{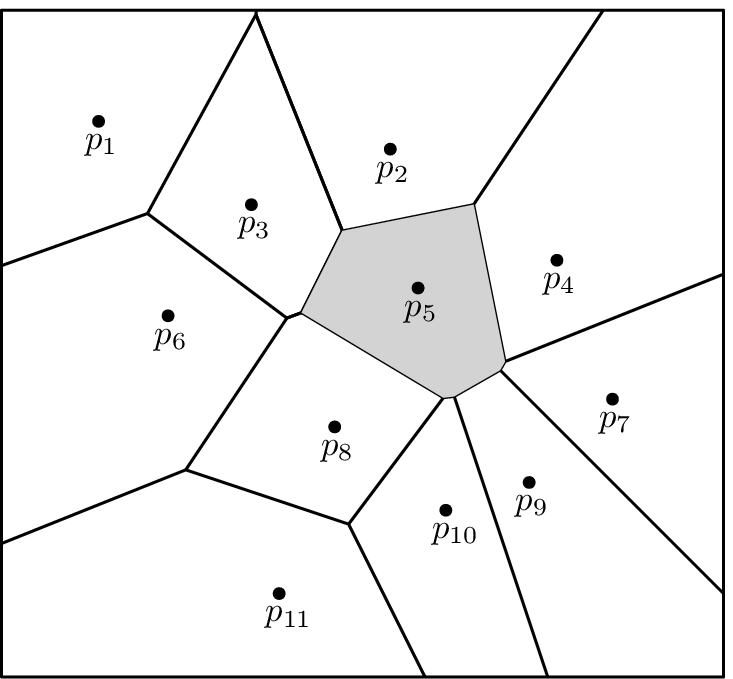}
\captionsetup{font={scriptsize}}
\caption{Voronoi diagram of $k$NN queries.} \label{fig:voronoi}
\end{minipage}%
\begin{minipage}[t]{0.48\linewidth}
\centering
\includegraphics[width=1.6in]{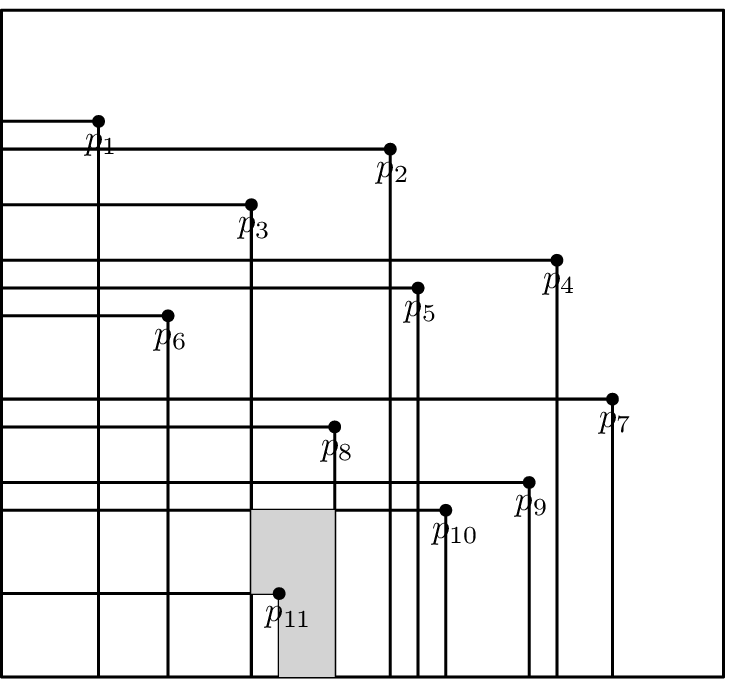}
\captionsetup{font={scriptsize}}
\caption{Skyline diagram of quadrant skyline.} \label{fig:skycell}
\end{minipage}
\end{figure}

Given a set of points (seeds), Voronoi diagram (as shown in Figure \ref{fig:voronoi}) partitions the plane into a set of polygons corresponding for each point, each \emph{query point} in the region is closer to the point than to any other points. These regions are called Voronoi cells. In other words, the query points in the same Voronoi cell have the same nearest neighbor which is the point in the cell. For example, the query points in the shaded region in Figure \ref{fig:voronoi} have $p_5$ as the nearest neighbor. This is the case of $k$NN query where $k=1$, similarly, $k^{th}$-order Voronoi diagram can be built for $k$NN queries ($k>1$), where the query points in each Voronoi cell have the same $k$NN results (may not correspond to the point in the cell as in the Voronoi diagram).

Analogously, given a set of points (seeds), our proposed skyline diagram partitions the plane into a set of regions, which we call \emph{skyline polyominos}, and the query points in each skyline polyomino have the same skyline results. Figure \ref{fig:skycell} shows an example skyline diagram for quadrant skyline queries given the same points. The query points in the shaded region have the same skyline result of $p_8,p_{10}$.

Given the precomputed skyline diagram, skyline queries can be quickly answered in real time. In addition, it can be used for other applications such as: 1) to facilitate the computation of reverse skyline queries \cite{DBLP:conf/vldb/DellisS07, DBLP:journals/tkde/WangXCL12}, similar to using Voronoi diagram for reverse $k$ nearest neighbor (R$k$NN) queries \cite{DBLP:journals/pvldb/SharifzadehS10}, 2) to authenticate skyline results from outsourced computation, similar to using Voronoi diagram for authenticating $k$NN queries \cite{DBLP:conf/icde/YiuLY11}, and 3) to enable efficient Private Information Retrieval (PIR) based skyline queries, similar to using Voronoi diagram for PIR based kNN queries \cite{DBLP:conf/dasfaa/WangMHX15}.

\partitle{Challenges}
While there are many applications of skyline diagram, it is non-trivial to compute the diagram.  For quadrant or global skyline queries, a straightforward approach is to draw vertical and horizontal grid lines crossing each point, which divides the plane into $O(n^2)$ cells. We can easily show that each of these cells has the same skyline since there are no points within the cell that would change the dominance relationship of the points.  Thus, we can compute the skyline for each cell, each requiring $O(n\log n)$ time. The time complexity of such a baseline algorithm is $O(n^3\log n)$ which is not efficient.

The time complexity of computing the skyline diagram for dynamic skyline can be significantly higher. Because of the mapping function, a straightforward approach is to draw horizontal and vertical bisector lines of each pair of points on each dimension, in addition to the grid lines crossing each point. These resulting subcells are guaranteed to have the same dynamic skyline since there are no points or mapped points in each subcell that would change the dominance relationship of the points. Since the plane is divided into $O({n\choose 2}^2)$ subcells, such a baseline algorithm requires $O(n^5\log n)$ complexity which is prohibitively high.

\partitle{Contributions} In this paper, we formally define a novel structure, skyline diagram, which enables precomputation of skyline queries as well as other applications. We study the skyline diagram with respect to three different skyline query definitions, quadrant, global, and dynamic skyline, and propose efficient algorithms. To facilitate the presentation, we focus on the algorithms for two-dimensional space first and if not specifically mentioned, all time complexities refer to the case of two dimensions, then briefly show that our proposed algorithms are extensible to high-dimensional space. We summarize our contributions as follows.

\begin{itemize}
\item We define a novel structure, skyline diagram, to enable precomputation of skyline queries. The skyline diagram consists of skyline regions, referred to as skyline polyominos, each of them corresponding to the same set of skyline result.  Similar to Voronoi diagram for $k$NN queries, skyline diagram has many applications including precomputation of skyline queries, reverse skyline queries, authentication of outsourced skyline queries, and PIR based skyline queries.

\item To compute the skyline diagram for quadrant/global skyline, we present a baseline algorithm with $O(n^3)$ time complexity and define an important notion of skyline cell. Furthermore, based on the observation of some interesting properties, we propose two improved $O(n^3)$ algorithms, which perform much better than the baseline algorithm in practice. Finally, we quantify the exact relationship between the skyline results of neighboring cells, and present an $O(n^2)$ sweeping algorithm which further improves the performance.

\item To compute the skyline diagram for dynamic skyline, we first present a baseline algorithm with $O(n^5)$ time complexity and define an important notion of skyline subcell. Furthermore, based on the observation that dynamic skyline query result is a subset of global skyline, we present an improved subset algorithm utilizing the skyline diagram of global skyline, which requires $O(n^5)$ but is better in practice. Finally, based on the relationship of the skyline results of neighboring subcells, we present a scanning algorithm which achieves $O(n^4\log n)$ time.

\item To significantly reduce the space cost, we propose the approximate skyline diagram by only requiring each skyline polyomino to have approximately the same skyline result. We present two heuristic algorithms, Bottom-Up Merging (BUM) algorithm and Top-Down Partitioning (TDP) algorithm, to efficiently compute the approximate skyline diagram with different tradeoffs.

\item We conduct comprehensive experiments on real and synthetic datasets. The experimental results show our proposed algorithms are efficient and scalable for both the exact skyline diagram and the approximate skyline diagram.%
\end{itemize}

\partitle{Organization}
The rest of the paper is organized as follows. Section \ref{sec:relatedwork} presents the related work. Section \ref{sec:pre} introduces some background knowledge and formally defines skyline diagram. The algorithms for computing the skyline diagram for quadrant/global skyline and dynamic skyline are presented in Sections \ref{sec:oneway} and \ref{sec:dynamic} respectively. We report the experimental results and findings in Section \ref{sec:experiments}. Section \ref{sec:conclusion} concludes the paper.
%---------------------------------------------------------------------------------------------------------------------------------------------------------------------------------------------------------------------------------------------------------------
%---------------------------------------------------------------------------------------------------------------------------------------------------------------------------------------------------------------------------------------------------------------
\section{Related Work}\label{sec:relatedwork}

The skyline computation problem was first studied in computational geometry \cite{DBLP:journals/jacm/KungLP75} which focused on worst-case time complexity. \cite{DBLP:conf/compgeom/KirkpatrickS85, DBLP:journals/ipl/LiuXX14} proposed output-sensitive algorithms achieving $O(n\log v)$ in the worst-case where $v$ is the number of skyline points which is far less than $n$ in general. Since the introduction of the skyline operator by B\"{o}rzs\"{o}nyi et al. \cite{DBLP:conf/icde/BorzsonyiKS01}, skyline has been extensively studied in the database field \cite{DBLP:conf/sigmod/ChanJTTZ06, DBLP:conf/vldb/DellisS07, DBLP:conf/cikm/YuQL0CZ17, DBLP:journals/tods/LianC10, DBLP:journals/pvldb/LiuXPLZ15, DBLP:conf/icde/LiuY0P17, DBLP:conf/cikm/LiuZXLL15, DBLP:conf/vldb/PeiJLY07, DBLP:conf/vldb/PeiJET05, DBLP:journals/tods/PeiYLJELWTYZ06, DBLP:journals/tkde/WangXCL12}.

The most related works to our skyline diagram are the ``safe zone'' for location-based skyline queries \cite{DBLP:journals/tkde/HuangLOT06,DBLP:conf/icde/LeeH09,DBLP:journals/tkde/LinXH13,DBLP:conf/edbt/CheemaLZZ13}. Huang et al. \cite{DBLP:journals/tkde/HuangLOT06} proposed the first work on continuous skyline query processing. Given a set of $n$ data points $<x_i,y_i;v_{xi},v_{yi};p_{i1},...,p_{im}> (i=1,...,n)$, where $x_i$ and $y_i$ are positional coordinates in two-dimensional space, $v_{xi}$ and $v_{yi}$ are the velocity in the $X$ and $Y$ dimensions, while $p_{ij}(j=1,...,m)$ are the $m$ static nonspatial attributes, which will not change with time. For a query point $q$ starting from $(x_q,y_q)$ moving with $(v_{qx},v_{qy})$, $q$ poses continuous skyline query while moving, and the queries involve both distance and all other static dimensions. Such queries are dynamic due to the change in spatial variables. In their solution, they compute the skyline for $x_q,y_q$ at the start time $0$. Subsequently, continuous query processing is conducted for each user by updating the skyline instead of computing from scratch. Lee et al. \cite{DBLP:conf/icde/LeeH09} studied a similar problem to \cite{DBLP:journals/tkde/HuangLOT06}. Both of them rely on the assumption that the velocities of the moving points are known. Generally speaking, they compute the skyline for query points moving on a line segment. Lin et al. \cite{DBLP:journals/tkde/LinXH13} studied a problem of computing the skyline for a range. They employed the similar idea for authenticating skyline queries in \cite{DBLP:conf/cikm/LinXH11, DBLP:journals/tkde/LinXHL14}. Cheema et al. \cite{DBLP:conf/edbt/CheemaLZZ13} proposed a safe zone for a query point $q$. A safe zone is the area such that the results of a query $q$ remain unchanged as long as the query lies inside the area. Both \cite{DBLP:journals/tkde/LinXH13} and \cite{DBLP:conf/edbt/CheemaLZZ13} studied the location-based skyline problem with $m$ static attributes and one dynamic attribute, which is the distance to the query point.

The main difference between the above work and our skyline diagram \cite{DBLP:conf/icde/LiuY0PL18} is that they only consider one dynamic attribute, while in our case all attributes can be dynamic. The skyline polyomino can be considered as a generalization of the safe zone in two or high-dimensional space. Furthermore, it is non-trivial to extend these query techniques from one dynamic attribute to two or high-dimensional case, as fundamentally these algorithms convert the problem to nearest neighbor queries for the single dynamic attribute and utilize Voronoi diagram. Compared to \cite{DBLP:conf/icde/LiuY0PL18}, in this paper, we propose an approximate skyline diagram which can significantly reduce the space cost while introducing a small amount of approximation in the skyline query result. Furthermore, we propose two heuristic algorithms, bottom-up merging algorithm and top-down partitioning algorithm, to efficiently compute the approximate skyline diagram with different tradeoffs.

%---------------------------------------------------------------------------------------------------------------------------------------------------------------------------------------------------------------------------------------------------------------
%---------------------------------------------------------------------------------------------------------------------------------------------------------------------------------------------------------------------------------------------------------------
\section{Preliminaries and Problem Definitions}\label{sec:pre}

In this section, we introduce our skyline diagram definition and related concepts as well as their properties which will be used in our algorithm design. For reference, a summary of notation is given in Table \ref{tab:notations}.

\vspace{-1.5em}

\begin{table}[h]\centering
\captionsetup{font={scriptsize}}\caption{The summary of notations.}\label{tab:notations}
\vspace{-1em}
{%
\begin{tabular}{|c|c|}
\hline
Notation & Definition\\
\hline
$P$   & dataset of $n$ points\\
\hline
$p_i[j]$ & the $j^{th}$ attribute of $p_i$\\
\hline
$q$   & query point\\
\hline
$n$ & number of points in $P$\\
\hline
$d$ & number of dimensions in $P$\\
\hline
$s_i$ & domain size of $i^{th}$ dimension\\
\hline
$C_{i,j}$ &  Cell with bottom left corner coordinate $(i,j)$\\
\hline
$Sky(C_{i,j})$  & the skyline of Cell $C_{i,j}$\\
\hline
$SC_{i,j}$ &  Subcell with bottom left corner coordinate $(i,j)$\\
\hline
$Sky(SC_{i,j})$  & the skyline of Subcell $C_{i,j}$\\
\hline
$Skyline(P')$ & the skyline of dataset $P'$\\
\hline
%$PO_{i,j}$ & Polyomino with bottom left corner coordinate $(i,j)$\\
%\hline
%$Sky(PO_{i,j})$ & the skyline of Polyomino $PO_{i,j}$\\
%\hline
\end{tabular}}
\end{table}%

\begin{definition}(\textbf{Skyline}).
Given a dataset $P$ of $n$ points in $d$-dimensional space. Let $p$ and $p'$ be two different points in $P$, we say $p$ dominates $p'$, denoted by $p\prec p'$, if for all $i$, $p[i]\leq p'[i]$, and for at least one $i,~p[i]< p'[i]$, where $p[i]$ is the $i^{th}$ dimension of $p$ and $1\leq i\leq d$.  The skyline points are those points that are not dominated by any other point in $P$.
\end{definition}

\begin{definition}(\textbf{Dynamic Skyline Query} \cite{DBLP:conf/vldb/DellisS07}).\label{def:dynamicskyline}
Given a dataset $P$ of $n$ points and a query point $q$ in $d$-dimensional space. Let $p$ and $p'$ be two different points in $P$, we say $p$ dominates $p'$ with regard to the query point $q$, denoted by $p\prec p'$, if for all $i$, $|p[i]-q[i]|\leq |p'[i]-q[i]|$, and for at least one $i,~|p[i]-q[i]|<|p[i]-q[i]|$, where $p[i]$ is the $i^{th}$ dimension of $p$ and $1\leq i\leq d$. The skyline points are those points that are not dominated by any other point in $P$.
\end{definition}

The traditional skyline computation is a special case of dynamic skyline query where the query point is the origin. On the other hand, computing dynamic skyline given a query point $q$ is equivalent to computing the traditional skyline after transforming all points into a new space where $q$ is the origin and the absolute distances to $q$ are used as mapping functions.
Take Figure \ref{fig:skyH} as an example, given a query point $q=(10,80)$, $p_6$ dominates $p_1$ because $p_{6}$'s corresponding point $t_{6}$ in the mapped space dominates $p_1$'s corresponding point $t_{1}$. Because no other points can dominate $t_{6}$ and $t_{11}$, the result of dynamic skyline query given $q$ is $\{p_6,p_{11}\}$.

The dynamic skyline query considers the dominance among all points.  Given a query point, if we consider each quadrant divided by the query point independently, i.e., only consider dominance among points within the same quadrant, we can define global skyline query below.

\begin{definition}(\textbf{Global Skyline Query} \cite{DBLP:conf/vldb/DellisS07}).\label{def:globalskyline}
Given a dataset $P$ of $n$ points and a query point $q$ in $d$-dimensional space. The query point $q$ divides the $d$-dimensional space into $2^d$ quadrants. Let $p$ and $p'$ be two different points \emph{in the same quadrant} of $P$, we say $p$ dominates $p'$ with regard to the query point $q$, denoted by $p\prec p'$, if for all $i$, $|p[i]-q[i]|\leq |p'[i]-q[i]|$, and for at least one $i,~|p[i]-q[i]|<|p'[i]-q[i]|$, where $p[i]$ is the $i^{th}$ dimension of $p$ and $1\leq i\leq d$. The skyline points are those points that are not dominated by any other point in $P$.
\end{definition}

Given a query point, we refer to the global skyline from a single quadrant as {\bf {\em Quadrant Skyline Query}}.  In other words, the global skyline is the union of the quadrant skyline from all quadrants. Back to Figure \ref{fig:skyH}, given the query point $q$, the quadrant skyline is $\{p_3,p_8,p_{10}\}$ in the first quadrant, $\{p_6\}$ in the second quadrant,  $\emptyset$ in the third quadrant, and $\{p_{11}\}$ in the fourth quadrant. The global skyline is the entire set of $\{p_3,p_6,p_8,p_{10},p_{11}\}$.  It is easy to see that the dynamic skyline is a subset of the global skyline. This property will be used in our algorithm design for dynamic skyline diagram.

Similar to the definition of Voronoi cell and $k^{th}$-order Voronoi diagram for $k$NN query, we define the skyline polyomino and skyline diagram for skyline query as follows.
\begin{definition}(\textbf{Skyline Polyomino}).  A polyomino $SP_i$ is a skyline polyomino (hereinafter to be referred as skymino), if given any two query points $q_a$ and $q_b$ in $SP_i$, $q_a$'s skyline result $Sky(q_a)$ equals to $q_b$'s skyline result $Sky(q_b)$, while for any query point $q_c$ outside $SP_i$, the skyline result $Sky(q_c)$ of $q_c$ is not equal to $Sky(q_a)$.
\end{definition}

\begin{definition}(\textbf{Skyline Diagram}).
Given a dataset $P$ of $n$ points (seeds) $p_1,...,p_n$. We define the Skyline Diagram of $P$ as the subdivision of the plane into a set of polyominos with the property that any query points in the same polyomino have the same skyline query result.
\end{definition}

\partitle{Problem Statement}
Given $n$ points, we aim to compute the skyline diagram for quadrant/global skyline queries and dynamic skyline queries efficiently.
%---------------------------------------------------------------------------------------------------------------------------------------------------------------------------------------------------------------------------------------------------------------
%---------------------------------------------------------------------------------------------------------------------------------------------------------------------------------------------------------------------------------------------------------------

\section{Skyline Diagram of Quadrant and Global Skyline}\label{sec:oneway}

In this section, we present detailed algorithms for computing skyline diagram of quadrant in two-dimensional space and briefly show that they are extensible to high-dimensional space. Note that global skyline can be simply computed by taking a union of all quadrant skylines. We first show an $O(n^3)$ baseline algorithm and define an important notion of skyline cell, which will be used by all our proposed algorithms. We then present two improved algorithms based on directed skyline graph and relationship between neighboring cells. Both algorithms have $O(n^3)$ time complexity but they are much faster than the baseline in practice. Finally, we quantify the exact relationship between the skyline results of neighboring cells, and present an $O(n^2)$ sweeping algorithm which further improves the performance. For two-dimensional space, we use $x$ and $y$ to denote the two dimensions (instead of the $j^{th}$ attribute as listed in Table \ref{tab:notations}).

\subsection{Baseline Algorithm}
We first show a baseline algorithm for computing skyline diagram and introduce an important notion, skyline cell.  The key for computing skyline diagram is to find regions such that any query points in the same region have the same skyline result. Intuitively, we can find small regions that are guaranteed to have the same results and then merge them to form bigger regions.

\partitle{Skyline Cell}
If we draw one horizontal and one vertical line over each point, these $O(n)$ grid lines divide the plane into $O(n^2)$ cells. For example, in Figure \ref{fig:oneway}, the horizontal and vertical lines over each of the 11 points divide the plane into $144$ cells. It is clear that any query points inside each cell are guaranteed to have the same quadrant/global skyline because there are no points in the cell that would change the dominance relationship of the points with respect to the query point.  We name the cell as Skyline Cell.

\begin{definition}(\textbf{Skyline Cell}).
The horizontal and vertical lines over each point divide the plane into skyline cells.  Any query points in the same skyline cell have the same skyline results for quadrant/global skyline.
\end{definition}

\begin{figure}[H]
 \centering
 \includegraphics[width=0.3\textwidth]{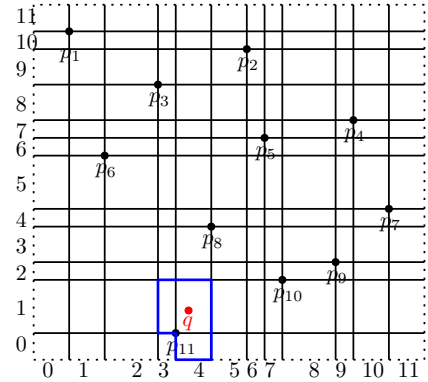}
 \captionsetup{font={scriptsize}}\caption{Quadrant skyline query.}
 \label{fig:oneway}
\end{figure}

\partitle{Finding skyline for each skyline cell}
Since we know that query points in each skyline cell have the same skyline results, we can employ any skyline algorithm to compute the skyline for each cell. Given a cell $C_{i,j}$, we denote $Sky(C_{i,j})$ as its skyline result. We can then merge the skyline cells with the same results to form skyline polyominos.  Since the skyline computation given $n$ points for each cell takes $O(n\log n)$ time and there are $O(n^2)$ skyline cells, the total time complexity is $O(n^3\log n)$. If the $n$ points are sorted on $x$-coordinate, we can compute the skyline for one cell in $O(n)$ time. Therefore, the total time can be reduced to $O(n^3)$.  This baseline algorithm is shown in Algorithm \ref{Alg:oneBasic}. After the points are sorted (Line 1), the steps for computing skyline in $O(n)$ based on ordered points are shown in Lines 5-12, where $g_{i,j}$ is the left lower intersection of skyline cell $C_{i,j}$.

\begin{algorithm}[thb]\scriptsize
\caption{The baseline algorithm for skyline diagram of quadrant skyline queries.} \label{Alg:oneBasic}
\SetKwInOut{Input}{input}\SetKwInOut{Output}{output}

\Input{a set of $n$ points and skyline cells $C_{i,j}$.}
\Output{skyline of each skyline cell $Sky(C_{i,j})$.}

sort the points in ascending order on $x$-coordinate\;
\For{i=0 to n}{
    \For{j=0 to n}{
        \For{k=1 to n}{
            \If{$p_k[x]>g_{i,j}[x]\&\&p_k[y]>g_{i,j}[y]$}{
            add $p_k$ to the candidate list\;
            }
        }
        choose the first element $p_{first}$ as the first skyline\;
        $p_{temp}=p_{first}$\;
        \For{l=2 to $|candidate~ list|$}{
            \If{$p_l[y]<p_{temp}[y]$}{
            add $p_l$ to skyline pool\;
            $p_{temp}[y]=p_l[y]$\;
            }
        }
        \Return skyline pool as $Sky(C_{i,j})$\;
}}
\end{algorithm}

\partitle{Merging skyline cells into skyline polyominos}
Once we have the skyline results for each cell, we can merge the cells with same results to form skyline polyominos.  For each skyline cell, we search its upper and right cells and combine those cells if they share the same skyline. The entire merging requires $O(n^2)$ time.

\begin{example}
In Figure \ref{fig:oneway}, the skyline cells $C_{4,0}$, $C_{4,1}$, and $C_{3,1}$ share the same skyline result $\{p_8,p_{10}\}$, and hence are combined to form a skyline polyomino.
\end{example}

\partitle{Complexity}
As we analyzed above, finding skyline phase requires $O(n^3)$ time, and merging phase requires $O(n^2)$ time. Therefore, the total time complexity for the baseline algorithm is $O(n^3)$. We have $O(n^2)$ skyline cells or skyminos and each skymino requires $O(n)$ to store. Thus, the space complexity is $O(n^3)$. The above analysis assumes attribute domain is unlimited.  In practice, the data attributes often have a domain with limited size (or can be discretized), hence the actual complexity is also bounded by the domain size (the number of possible values) of each dimension. Given a domain size $s$, the number of skyline cells is bounded by $O(min(s^2,n^2))$, hence both the time and space complexity for the baseline algorithm is $O(min(s^2,n^2)n)$. We note that the remaining algorithms have the same space complexity due to the output structure in this section.

\subsection{Directed Skyline Graph Algorithm}

In the baseline algorithm, we need to compute skyline for each skyline cell from scratch which is costly. In this subsection, based on the observation of some interesting relationships of the skyline results of neighboring cells, we propose an incremental algorithm utilizing the directed skyline
graph for computing skyline for neighboring cells. Note that the merging step of the skyline cells remains the same as the baseline.

Our algorithm is based on the key observation that when moving from one cell to its neighboring upper or right cell, the only point that will cause the skyline result to change is the point on the crossed grid line. For example, in Figure \ref{fig:oneway}, given cell $C_{0,0}$, the skyline is $\{p_1,p_6,p_{11}\}$. When moving to its right cell $C_{1,0}$ across the $p_1$ grid line, the new result is the skyline of the remaining points after removing $p_1$, that is $\{p_6,p_{11}\}$. Similarly, when moving from $C_{0,0}$ to its upper cell $C_{0,1}$ across the $p_{11}$ grid line, the new result is the skyline of the points after removing $p_{11}$, that is $\{p_1, p_6,p_{10}\}$.
Based on this observation, we propose to use a data structure called the directed skyline graph to facilitate the incremental computation of the skyline from one cell to its neighboring cell.

\begin{figure}[!htb]
\begin{minipage}[t]{0.5\linewidth}
\centering

\includegraphics[width=1.6in]{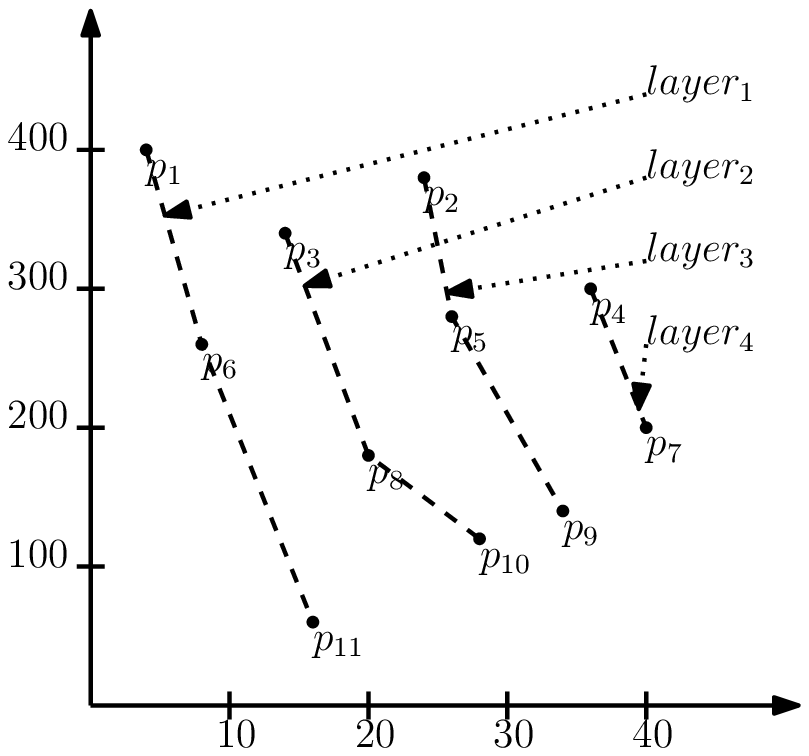}
\captionsetup{font={scriptsize}}
\caption{Skyline layers.} \label{fig:layer}
\end{minipage}%
\begin{minipage}[t]{0.5\linewidth}
\centering
\includegraphics[width=1.6in]{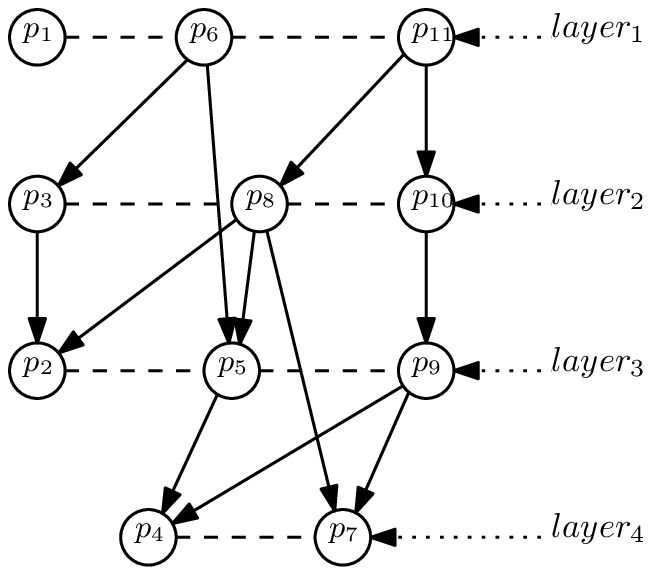}
\captionsetup{font={scriptsize}}
\caption{Directed skyline graph.} \label{fig:directedgraph}
\end{minipage}
\end{figure}

We first briefly describe the directed skyline graph (DSG) adapted from \cite{DBLP:journals/pvldb/LiuXPLZ15} and explain how it can be used to facilitate the incremental skyline computation and then present our algorithm utilizing the graph for computing the skyline for all skyline cells.

Given $n$ points, we first compute its skyline layers by employing the skyline layer algorithm from \cite{DBLP:journals/pvldb/LiuXPLZ15}. The skyline layers of our running example are shown in Figure \ref{fig:layer}. The first skyline layer consists of all skyline points in the original dataset.  The second skyline layer consists of all skyline points of the remaining points after removing the points from the first skyline layer.  And similarly for the remaining skyline layers. There are several properties for skyline layers: 1) the points on the same layer cannot dominate each other, 2) the points on a lower layer may dominate the points on a higher layer, and 3) the points on a higher layer cannot dominate the points on a lower layer. Based on these skyline layers, we obtain the directed skyline graph which captures all the direct dominance relationships between the points as shown in Figure \ref{fig:directedgraph}. For example, $p_6$ directly dominates $p_3$ and $p_5$.  We note that the directed skyline graph algorithm from \cite{DBLP:journals/pvldb/LiuXPLZ15} includes both direct and indirect dominance relationships (e.g., $p_6$ dominates $p_4$ indirectly). We adapted it such that we only include the direct links which are needed to solve our problem.

\begin{algorithm}[thb] \scriptsize \caption{The directed skyline graph algorithm for skyline diagram of quadrant skyline queries.} \label{Alg:oneDirGraph}
\SetKwInOut{Input}{input}\SetKwInOut{Output}{output}

\Input{a set of $n$ points and skyline cells $C_{i,j}$.}
\Output{skyline of each skyline cell $Sky(C_{i,j})$.}

compute the directed skyline graph DSG\;
$Sky(C_{0,0})=Sky(P)$\;

\For{i=0 to n-1}{
    tempDSG=DSG\;
    \For{j=1 to n}{
        delete the point $p_j$ between $C_{i,j-1}$ and $C_{i,j}$ from DSG\;
        delete the link between $p_j$ and its directed children\;
        $Sky(C_{i,j})$ = $Sky(C_{i,j-1})$ - $p_j$ + the children of $p_j$ without any remaining parent\;
    }
    DSG=tempDSG\;
    delete the point $p_j$ between $C_{i,0}$ and $C_{i+1,0}$ from DSG\;
    delete the link between $p_j$ and its directed children\;
    $Sky(C_{i+1,0})$ = $Sky(C_{i,0})$ - $p_j$ + the children of $p_j$ without any remaining parent\;
}
\end{algorithm}

We now show how we can incrementally compute the skyline from one cell to its neighboring cell utilizing the skyline graph. When moving from one cell to its right neighboring cell across the grid line over $p$, there are two changes in the skyline result caused by the point $p$: 1) $p$ is no longer a skyline point, 2) new skyline points may appear since they are not dominated by $p$ anymore with respect to the query point in the new cell. So all we need to do is to remove $p$ as well as its dominance links from the skyline graph, any of the children points of $p$ without remaining parents will be a new skyline (since it is no longer dominated by any points).

Given any cell, we can also compute its upper neighboring cell in a similar way. Hence our algorithm starts from the origin cell $C_{0,0}$, and incrementally computes the first row of cells from left to right. Then it incrementally computes all the rows from bottom to up. The algorithm is shown in Algorithm 2.  The directed skyline graph is computed in Line 1 and the skyline for $C_{0,0}$ is computed in Line 2. The skyline for the each row is computed in Lines 5-8. Lines 9-11 copy and update the DSG for next row.

\begin{example} Given $C_{0,0}$ in Figure \ref{fig:oneway}, its skyline is the set of points on the first skyline layer, $\{p_1,p_6,p_{11}\}$. When moving from $C_{0,0}$ to its right neighboring cell $C_{1,0}$ across the $p_1$ grid line, to compute the new skyline, all we need to do is to remove $p_1$ ($p_1$ does not have any direct dominance links), hence the skyline for $C_{1,0}$ is simply $\{p_6, p_{11}\}$ after removing $p_1$ from the skyline set.  When we move further to $C_{1,0}$'s right neighboring cell $C_{2,0}$ across the $p_6$ grid line, we just need to remove $p_6$ and remove the dominance links from $p_6$ to $p_3$ and $p_5$.  Since $p_3$ is no longer dominated by any points after $p_6$ is removed, it becomes a new skyline.  Hence the skyline for $C_{2,0}$ consists of the remaining skyline $p_{11}$ and the new skyline $p_3$, i.e., $\{p_3, p_{11}\}$.
\end{example}

\partitle{Complexity}
As we iterate through all the cells in one row, we are removing dominance links from the skyline graph.  Each link costs one update and the total number of links is $O(n^2)$.  Therefore, it requires $O(n^2)$ time to compute the skyline for cells in one row. Since there are $n$ rows, the time complexity for the directed skyline graph algorithm is $O(n^3)$.  We note that in practice, the number of links is much smaller than $n^2$. Hence the algorithm is much faster than the baseline algorithm in practice. Similar to the analysis in baseline algorithm, given a limited domain size $s$, the total number of links is $O((min\{s^2,n\})^2)$. Therefore, the time complexity for the directed skyline graph algorithm is $O((min\{s^2,n\})^2n)$. The space complexity stays the same as the baseline algorithm which is $O(min(s^2,n^2)n)$.

\subsection{Scanning Algorithm}

The previous algorithm still involves computation of skyline. Ideally, we would like to avoid the computation as much as possible. We observed earlier that the skyline results for neighboring cells are different only due to the point on the shared grid line. For example, in Figure \ref{fig:onewayL}, $Sky(C_{1,2})$ and $Sky(C_{2,2})$ are different due to $p_6$, same for $Sky(C_{1,3})$ and $Sky(C_{2,3})$. Similarly, $Sky(C_{1,2})$ and $Sky(C_{1,3})$ are different due to $p_9$, same for $Sky(C_{2,2})$ and $Sky(C_{2,3})$. In this subsection, we observe an interesting property of the exact relationship between the skyline results of neighboring cells, and present a new $O(n^3)$ time algorithm utilizing this property for computing skyline for all cells.  Again, the merging of cells into skyline polyominos stays the same as the baseline.

\begin{figure}[htb]
 \centering
 \includegraphics[width=0.32\textwidth]{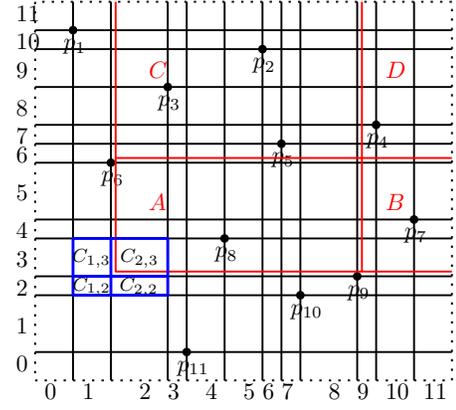}
 \captionsetup{font={scriptsize}}
 \caption{Scanning algorithm.}
 \label{fig:onewayL}
\end{figure}

\begin{theorem}\label{the:cell}
Given any skyline cell $C_{i,j}$ (except the ones that have a point as its upper right corner), and its right cell $C_{i+1,j}$, upper cell $C_{i,j+1}$, and upper right cell $C_{i+1,j+1}$, their skyline results have a relationship as follows.

$Sky(C_{i,j})=Sky(C_{i+1,j})+Sky(C_{i,j+1})-Sky(C_{i+1,j+1})$\footnote{multiset operation.}
\end{theorem}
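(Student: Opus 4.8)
My plan is to reduce the four-cell identity to a purely local statement about two distinguished points and then verify it point by point. First I would fix coordinates: write the lower-left corner of $C_{i,j}$ as $(a,c)$, let $a'$ be the next vertical grid value (so $C_{i+1,j}$ has corner $(a',c)$) and $c'$ the next horizontal grid value (so $C_{i,j+1}$ has corner $(a,c')$ and $C_{i+1,j+1}$ has corner $(a',c')$). Let $p$ be the unique data point on the vertical line $x=a'$ and $q$ the unique data point on the horizontal line $y=c'$; these are exactly the points crossed when moving right and up. The excluded case ``$C_{i,j}$ has a point as its upper right corner'' is precisely $p=q$, so I would assume $p\neq q$ throughout.

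The key structural step is to identify the four candidate sets. Since every data point other than $p$ has an $x$-coordinate that is either $\le a$ or $>a'$ (grid lines pass through all points, so nothing lies strictly between $a$ and $a'$), and likewise for $q$ in $y$, I would show that the only points distinguishing the four quadrant regions are $p$ and $q$. Writing $B=\{r: r[x]>a',\,r[y]>c'\}$ for the candidate set of the upper-right cell, the four candidate sets are $B$, $B\cup\{q\}$, $B\cup\{p\}$, $B\cup\{p,q\}$ for $C_{i+1,j+1}$, $C_{i+1,j}$, $C_{i,j+1}$, $C_{i,j}$ respectively (if $p$ or $q$ fails to lie in the quadrant of $C_{i,j}$, two of the cells coincide and the identity is trivial, so I may assume both are present). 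Hence $Sky(C_{i,j})$ is the skyline of $B\cup\{p,q\}$, and so on.

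Next I would record the dominance facts. Because every point of $B$ has $x>a'=p[x]$ and $y>c'=q[y]$, no point of $B$ dominates $p$ or $q$, and $p,q$ are mutually incomparable ($p$ has the strictly smaller $x$, $q$ the strictly smaller $y$); thus $p$ and $q$ are skyline points of every candidate set containing them. For a point $b\in B$, adding $p$ deletes $b$ from the skyline exactly when $p$ dominates $b$, adding $q$ deletes it exactly when $q$ dominates $b$, and these two tests are independent of each other and of whether $b$ is already dominated inside $B$. This lets me tabulate, for each point $r$, its $0/1$ membership in the four skylines (listed in the order $C_{i,j},C_{i+1,j},C_{i,j+1},C_{i+1,j+1}$): $p$ gives $(1,0,1,0)$, $q$ gives $(1,1,0,0)$, every $b\in B$ gives $\bigl(\neg D_B\wedge\neg D_p\wedge\neg D_q,\ \neg D_B\wedge\neg D_q,\ \neg D_B\wedge\neg D_p,\ \neg D_B\bigr)$ where $D_B,D_p,D_q$ denote domination of $b$ by $B$, by $p$, and by $q$, and all other points give $(0,0,0,0)$.

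The subtle point, which I expect to be the main obstacle, is the correct reading of the subtraction. The naive additive reading $Sky(C_{i,j})+Sky(C_{i+1,j+1})=Sky(C_{i+1,j})+Sky(C_{i,j+1})$ is \emph{false}: a point of $B$ dominated by both $p$ and $q$ can survive as the sole skyline point of the upper-right cell while appearing in none of the other three. The identity must therefore be read with truncated multiset difference, i.e.\ the multiplicity of $r$ on the right-hand side is $\max(m_{i+1,j}+m_{i,j+1}-m_{i+1,j+1},\,0)$. With this reading the whole claim collapses to checking, for each row of the table, that $m_{i,j}=\max(m_{i+1,j}+m_{i,j+1}-m_{i+1,j+1},0)$; the only nontrivial case is $b\in B$ with $\neg D_B$, where it is the Boolean tautology $uv=\max(u+v-1,0)$ for $u,v\in\{0,1\}$ with $u=\neg D_p$ and $v=\neg D_q$. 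Summing these per-point equalities over all data points yields the stated multiset identity, and revisiting $p=q$ shows exactly why that case is excluded: the product structure $\{B,\,B\cup\{p\},\,B\cup\{q\},\,B\cup\{p,q\}\}$ degenerates and the newly exposed points are miscounted.
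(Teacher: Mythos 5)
Your proposal is correct, and at its core it rests on the same structural observation as the paper's proof — that the four cells' candidate sets differ only by the two points on the crossed grid lines — but the verification runs along a genuinely different and more careful track. The paper's proof fixes the same two distinguished points (its $p_C$ and $p_R$ are your $p$ and $q$) and partitions the common candidate region into ranges $A$, $B$, $C$, $D$ according to which of the two points dominate a candidate, which is precisely your truth table $(D_p,D_q)\in\{(0,0),(0,1),(1,0),(1,1)\}$ realized geometrically; it then writes each of the four skylines as a union of $\{p_R\}$, $\{p_C\}$, $SkyP(A)$, $SkyP(B)$, $SkyP(C)$, $SkyP(D)$ and cancels at the level of these sets. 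Your pointwise check of $m_{i,j}=\max(m_{i+1,j}+m_{i,j+1}-m_{i+1,j+1},\,0)$, reducing to the Boolean tautology $uv=\max(u+v-1,0)$, buys something the paper's set-level cancellation quietly elides: the case of a point dominated by both $p$ and $q$ but by no other candidate. In the paper's notation this is exactly $SkyP(A)=\emptyset$ with $SkyP(D)\neq\emptyset$; the paper remarks only that $SkyP(D)$ is empty whenever $SkyP(A)$ is nonempty, and its final displayed computation subtracts $SkyP(D)$ from a multiset that does not contain it — valid only under the truncated (saturating) reading of the minus sign, which the paper leaves implicit in the footnote ``multiset operation.'' You make that reading explicit, observe correctly that the naive signed identity fails on such a point (membership pattern $(0,0,0,1)$), and verify that saturation is exactly what repairs it. You also dispatch explicitly the degenerate situations the paper handles only implicitly: when $p$ or $q$ lies outside the cell's quadrant (two cells coincide and the identity cancels exactly), and the excluded corner case, which you correctly identify as $p=q$. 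In short, your route is a per-point multiplicity argument where the paper's is a region-union cancellation; yours is slightly longer but closes a small gap in the paper's own proof.
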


\begin{proof}
Given a cell $C_{i,j}$, we define the following. ${p_R}$ (${p_C}$) denotes the point that lies on the upper (right) grid line of $C_{i,j}$. Range $A$ is the rectangle formed by the grid lines crossing $p_R$ and $p_C$ (excluding the two points). Range $B$ is the right rectangle of $A$. Range $C$ is the upper rectangle of $A$. And Range $D$ is the upper right rectangle of $A$. An example is shown in Figure \ref{fig:onewayL}.

Consider $C_{i,j}$'s upper right cell $C_{i+1,j+1}$, we denote $SkyP(A)$ as the set of points in range $A$ contributed to $Sky(C_{i+1,j+1})$.
And similarly for $SkyP(B)$, $SkyP(C)$, and $SkyP(D)$. Note that $SkyP(D)$ will be empty if $SkyP(A)$ is not empty which will dominate all points in $D$.

We can compute the skyline results of the four cells as follows.
\begin{displaymath}
Sky(C_{i,j})=\{p_R\}\cup \{p_C\}\cup SkyP(A)
\end{displaymath}
\begin{displaymath}
Sky(C_{i+1,j})=\{p_R\}\cup SkyP(A)\cup SkyP(C)
\end{displaymath}
\begin{displaymath}
Sky(C_{i,j+1})=\{p_C\}\cup SkyP(A)\cup SkyP(B)
\end{displaymath}
\begin{displaymath}
Sky(C_{i+1,j+1})=SkyP(A)\cup SkyP(B)\cup SkyP(C)\cup SkyP(D)
\end{displaymath}

Then we have:
\begin{displaymath}
Sky(C_{i+1,j})+Sky(C_{i,j+1})-Sky(C_{i+1,j+1})
\end{displaymath}
\begin{displaymath}
=(\{p_R\}\cup SkyP(A)\cup SkyP(C))
+(\{p_C\}\cup SkyP(A)\cup SkyP(B))
\end{displaymath}
\begin{displaymath}
-(SkyP(A)\cup SkyP(B)\cup SkyP(C)\cup SkyP(D))
\end{displaymath}
\begin{displaymath}
=\{p_R\}\cup \{p_C\}\cup SkyP(A) = Sky(C_{i,j})
\end{displaymath}
%Because ranges $C$ and $D$ are dominated by $p_C$ and ranges $B$ and $D$ are dominated by $p_R$, we have
\end{proof}

\begin{example}
Given cell $C_{1,2}$ in Figure \ref{fig:onewayL}, $p_R$ is $p_9$ and $p_C$ is $p_6$. Consider the skyline result of its upper right cell $C_{2,3}$, we have $SkyP(A)=\{p_8\}$, $SkyP(B)=\emptyset$ as $p_7$ is dominated by $p_8$, $SkyP(C)=\{p_3\}$ as $p_2,p_5$ are dominated by $p_8$, and $SkyP(D)=\emptyset$ as $p_4$ is dominated by $p_8$. We have skyline result for the upper right cell $Sky(C_{2,3})=\{p_3,p_8\}$, the upper cell $Sky(C_{1,3})=\{p_6,p_8\}$, and the right cell  $Sky(C_{2,2})=\{p_3,p_8,p_9\}$. It is easy to see that the skyline for the given cell is $Sky(C_{1,2})= Sky(C_{2,2})+Sky(C_{1,3})-Sky(C_{2,3})=\{p_6,p_8,p_9\}$.
\end{example}

We note that the above property holds for all skyline cells except the ones that have a point as its upper right corner.  For these cells, their skyline is the upper right point because this point dominates all the upper right region. For example, in Figure \ref{fig:onewayL}, $Sky(C_{4,3})=\{p_8\}$ and $Sky(C_{6,6})=\{p_5\}$.

\begin{algorithm}[thb] \scriptsize\caption{The scanning algorithm for skyline diagram of quadrant skyline queries.} \label{Alg:luo}
\SetKwInOut{Input}{input}\SetKwInOut{Output}{output}

\Input{a set of $n$ points and skyline cells $C_{i,j}$.}
\Output{skyline of each skyline cell $Sky(C_{i,j})$.}

\For{i=0 to n}{
$Sky(C_{i,n})=\emptyset$\;
$Sky(C_{n,i})=\emptyset$\;
}

\For{i=n-1 to 0}{
    \For{j=n-1 to 0}{
    \If{there is a point $p$ on the upper right corner of $C_{i,j}$}{
    $Sky(C_{i,j})$=\{p\}\;
    }
    \Else{
    $Sky(C_{i,j})=Sky(C_{i+1,j})+Sky(C_{i,j+1})-Sky(C_{i+1,j+1})$\;}

}}

\end{algorithm}

Based on these properties, we present a scanning algorithm as shown in Algorithm \ref{Alg:luo}. The basic idea is to start from the top and rightmost cell, and scan the cells from the top down and right to left, then utilizing the property in Theorem 1 to compute the skyline for each cell. We first initialize the skyline results for the skyline cells on the top row and rightmost column to $\emptyset$ (Lines 1-3).  Then for each cell $C_{i,j}$, if there is a point $p$ on its upper right corner, we set $Sky(C_{i,j})=\{p\}$ (Line 7). Otherwise, we use $Sky(C_{i,j})=Sky(C_{i+1,j})+Sky(C_{i,j+1})-Sky(C_{i+1,j+1})$ to compute the skyline of $C_{i,j}$ (Line 9).

\partitle{Complexity}
There are $O(n^2)$ cells, each cell requires $O(n)$ time for multiset computation. Therefore, Algorithm \ref{Alg:luo} requires $O(n^3)$ time in total. We note that in practice, the time for multiset computation is much smaller than $n$. Thus the algorithm is much faster than the baseline algorithm in practice. Given a domain size $s$ for each dimension, the number of cells is bounded by $O(min(s^2,n^2))$, hence Algorithm \ref{Alg:luo} requires $O(min(s^2,n^2)n)$ time in total. The space complexity stays the same as the baseline algorithm which is $O(min(s^2,n^2)n)$.

\subsection{Sweeping Algorithm}

All previous algorithms involve computing skyline for each skyline cell (divided by the grid lines) and then merging them into skyline polyominos. Ideally, if we can find the skyline polyominos directly rather than combining the skyline cells, we can save the cost of computing skyline for each skyline cell. In this subsection, we show a sweeping algorithm that achieves this goal.

\begin{figure}[htb]
 \centering
 \includegraphics[width=0.32\textwidth]{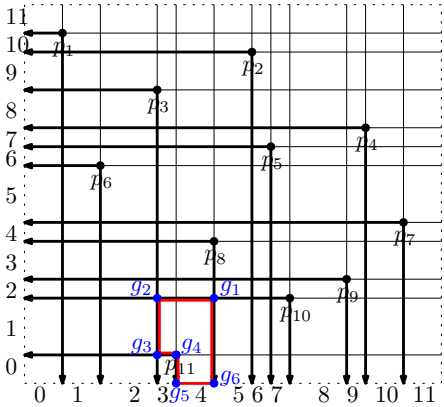}
 \captionsetup{font={scriptsize}}
 \caption{Sweeping algorithm.}
 \label{fig:onewayP}
\end{figure}

We observed previously that when we move from one cell to its right cell, the only change in the skyline result is caused by the point on the crossed grid line.  In fact, we can further observe that if the point on the crossed grid line lies below the cell, then the skyline result does not change at all. This is because we are only considering the points in the cell's upper right quadrant. For example, $C_{3,1}$ has skyline result $\{p_8, p_{10}\}$. When we move from $C_{3,1}$ to $C_{4,1}$ crossing point $p_{11}$, the skyline remains the same because $p_{11}$ is below the cells and does not affect the result. Similarly, when we move from one cell to its upper cell, if the point on the crossed grid line is to the left of the cells, the skyline result does not change either. In other words, each point only affects the skyline result of its lower and left cells, not its upper or right cells.  Motivated by this observation, instead of drawing grid lines over each point to divide the plane into skyline cells, we can draw two half-open grid lines starting from each point, one downward and another leftward. These $O(2n)$ grid line segments divide the plane into a set of polyominos, each containing one or more cells. Since we know that each point will not affect the skyline result of its upper and right cells, we can show that any query points in such formed polyominos have the same skyline results.  We have a theorem as follows.

\begin{theorem}
Given a set of points, if we draw two half-open grid lines starting from each point, one downward and another leftward, each polyomino formed by these $O(2n)$ lines is a skyline polyomino and all query points inside have the same first quadrant skyline query results.
\end{theorem}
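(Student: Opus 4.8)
The plan is to leverage the finer subdivision into skyline cells that the previous algorithms already use. Since the $O(2n)$ half-open segments are a subset of the full horizontal and vertical grid lines through the $n$ points, every polyomino produced by the half-open construction is exactly a union of adjacent skyline cells. Hence it suffices to show that whenever two skyline cells lie in the same half-open polyomino and share a grid edge, they have identical first-quadrant skyline results; constancy over the whole polyomino then follows by connectivity, since any two query points inside can be joined by a path that crosses only such internal grid edges.

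The first step is to characterize the grid edges internal to a polyomino, i.e. those belonging to the full grid but to no half-open segment. A vertical grid edge lies on the vertical line through some point $p$; the construction keeps only the downward ray below $p$, so a dropped (internal) vertical edge must lie strictly above $p$. Consequently both skyline cells bordering an internal vertical edge satisfy $q_y > p_y$ for every query point $q$ in them. Symmetrically, an internal horizontal edge lies on the horizontal line through some $p$ but to the right of $p$, so both bordering cells satisfy $q_x > p_x$.

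Next I would invoke the single-point observation established earlier: moving from a cell to its neighbor across a grid edge over $p$ can change the first-quadrant skyline only through $p$, since the dominance relationships among all other points are fixed and only $p$'s membership in the query's upper-right quadrant can flip. For an internal vertical edge we have $q_y > p_y$ on both sides, so $p$ never lies in the first quadrant of such a $q$ (which requires $p_y > q_y$); thus $p$'s membership does not flip and the skyline is unchanged. The horizontal case is identical with the coordinates exchanged, as $q_x > p_x$ forces $p$ out of the first quadrant. Therefore crossing any internal edge preserves the skyline, and combining this across the connected union of cells shows that $Sky(q)$ is constant on the polyomino.

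The main obstacle I anticipate is twofold. First, I must handle the degenerate boundary cases cleanly: cells touching a defining point at a corner, and ties where several points share an $x$- or $y$-coordinate, so that the classification of each edge as ``above $p$'' or ``right of $p$'' stays unambiguous; a general-position assumption, or a fixed convention for which side each half-open segment closes, will resolve this. Second, to justify the full \emph{skyline polyomino} property rather than merely ``same skyline within,'' one must also argue that crossing a half-open segment genuinely changes the result, which is the more delicate direction because it requires that the point whose membership flips is actually on the skyline on the side where it is relevant; I would either defer this to the subsequent merging of cells with identical skylines or restrict the claim to the constancy statement that the theorem emphasizes.
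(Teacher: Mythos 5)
Your argument is correct, and while it rests on the same underlying observation as the paper's proof (a point can only affect the quadrant skyline of query locations below and to its left), it is organized quite differently. The paper's proof is global and very terse: it takes the upper-right corner query point of each skyline cell inside a polyomino, asserts that all these corners see the same set of points in their first quadrants, and concludes via the earlier fact that the skyline is constant within each cell. You instead supply the local argument the paper leaves implicit: each polyomino of the half-open arrangement is a union of skyline cells; an interior vertical (resp.\ horizontal) grid edge on the line through $p$ must lie strictly above (resp.\ strictly to the right of) $p$, so on both sides of it $p$ lies outside the query's first quadrant; hence crossing an interior edge cannot change the quadrant point set, and constancy propagates by connectivity of the polyomino. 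This buys rigor --- your edge classification is precisely the missing justification for the paper's claim that all corner points share the same quadrant point set --- at the modest cost of a general-position or tie-breaking convention, which you correctly anticipate. You are also right that neither route establishes the full strength of the skyline-polyomino definition, which requires query points \emph{outside} the region to have a \emph{different} result: the paper's proof silently proves only the constancy clause, so your choice to flag this and restrict (or defer) the maximality claim matches the paper's actual scope. Closing that gap would require two further steps the paper omits: showing that crossing any half-open segment genuinely changes the result (true, since at such a crossing no dominator of the flipping point $p$ can have its lower-left quadrant reach the boundary of $p$'s, so $p$ enters the result as a skyline point), and ruling out two non-adjacent polyominoes sharing the same skyline result.
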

\begin{proof}
Given a skyline polyomino formed by these half-open grid lines, if we consider the upper right corner query point for each of the skyline cells in the polyomino, they have the same set of points in their upper right quadrant, thus they have the same skyline results. We have shown earlier all points in the same skyline cell have the same quadrant skyline results, hence all query points in the same polyomino have the same first quadrant skyline results.
\end{proof}

%\begin{example}
%In Figure \ref{fig:onewayP}, the red polyomino contains three cells, $C_{3,1}$, $C_{4,0}$, and $C_{4,1}$. $Sky(C_{3,1})=Sky(C_{4,0})=Sky(C_{4,1})=\{p_8,p_{10}\}$.
%\end{example}

\begin{algorithm}[thb]\scriptsize\captionsetup{font={scriptsize}}\caption{The sweeping algorithm for skyline diagram of quadrant skyline queries.} \label{Alg:pei}
\SetKwInOut{Input}{input}\SetKwInOut{Output}{output}

\Input{a set of $n$ points.}
\Output{skyline polyominos.}

/*compute all the intersection points and link them by left and right neighbors in Lines 4-10*/\;
sort the points in descending order on $y$-coordinate, $p_1$ ($p_n$) is the point with highest (lowest) $y$-coordinate\;
$p_1.left=(0,p_1[y])$\;
\For{i=2 to n}{
    insert $p_i$ into sorted queue $X$ by $x$-coordinate and its new index is $j$\;
    $p_i.left = (p_{j-1}[x],p_i[y])$\;
    $(p_{j-1}[x],p_i[y]).right = p_i$\;
    \For{j=i to 1 of sorted queue X}{
        $(p_{j-1}[x],p_i[y]).left =(p_{j-2}[x],p_i[y])$\;
        $(p_{j-2}[x],p_i[y]).right =(p_{j-1}[x],p_i[y])$\;
    }
}

/*similarly, we can compute the lower/upper neighbor of each intersection point*/\;
\For{each intersection point $g_0$}{
    $skymino_g$ = \{$g_0$\}; $g$= $g_0$\;
    $skymino_g$.append($g$.left); $g=g.left$\;
    \While{$g[x]!=g_0[x]$}{
     $skymino_g$.append($g$.lower); $g=g.lower$;
     $skymino_g$.append($g$.right); $g=g.right$;}
     \Return $skymino_g$\;

}
\end{algorithm}

While it is straightforward to visually see the skyline polyominos from the figure (e.g., Figure \ref{fig:onewayP}), we need to represent the skyline polyominos computationally by its vertices, which are the intersection points of the half-open grid lines including the points themselves. We now show how to compute the coordinates of these vertices and then how to find the vertices for each polyomino.

We observe that for each point $p$, its horizontal grid line only intersects with the vertical grid lines from its upper points, i.e., with larger y coordinates.  Hence, given a point $p(x,y)$, we can compute all the intersection points on its horizontal grid line as $g(x_j,y)$, where $x_j$ is the x coordinate from those points with larger y coordinates than $p$. For each intersection point, we record its left and right neighbor, so that we can retrieve the vertices for each polyomino. Similarly, for each point, we compute the intersection points on its vertical grid line, and record the lower and upper neighbor for each intersection point.  The detailed algorithm is shown in Algorithm \ref{Alg:pei}.

\begin{example}
For $p_{4}$ in Figure \ref{fig:onewayP}, its horizontal line intersects with the vertical lines of $p_2, p_3, p_1$, hence the intersection points on its horizontal line are $(p_2[x], p_4[y]), (p_3[x], p_4[y]), (p_1[x], p_4[y])$, and $(0,p_4[y])$.  For each point, it has a left/right and upper/lower neighbor, e.g., $(p_2[x], p_4[y]).right=p_4$.
\end{example}

Once all the intersection points are computed and linked by their left/right and lower/upper neighbors, we can retrieve the sequence of vertices for each polyomino.  We can see that each intersection point has a uniquely corresponding polyomino with the point as its upper right corner. Therefore, for each intersection point $g$, we find the sequence of vertices forming its corresponding polyomino.  The polyominos are either rectangles or half-rectangles with lower left side shaped like steps.
Hence we first retrieve $g$'s left neighbor.  We then repeatedly find the next lower neighbor and right neighbor until the right neighbor reaches the same $y$ coordinate as the original intersection point $g$.

\begin{example}
For the intersection point $g_1(p_8[x],p_{10}[y])$ in Figure \ref{fig:onewayP}, we first find its left vertex $g_2(p_3[x],p_{10}[y])$.  We then find the lower vertex $g_3(p_3[x],p_{11}[y])$, and the right vertex $g_4(p_{11}[x],p_{11}[y])$ in the first iteration.  Because $g_4$ is not meeting the grid line at $g_1$ yet, it continues to find the next lower vertex $g_5(p_{11}[x],0)$ and the right vertex $g_6(p_8[x],0)$. Now the algorithm stops as $g_6$ reaches the $y$ grid line of $g_1$.  The sequence of vertices for the skymino corresponding to $g_1$ is $g_1, g_2, g_3, g_4, g_5, g_6$.
\end{example}

\partitle{Complexity}
The computation of intersection points requires $O(n^2)$ time. Because each grid line segment between two neighboring intersection points will be used at most twice for constructing skyminos, the skymino constructing step requires $O(n^2)$ time.  Therefore, Algorithm \ref{Alg:pei} requires $O(n^2)$ time.
Given a domain size $s$ for each dimension, the number of intersection points is bounded by $O(min(s^2,n^2))$, hence Algorithm \ref{Alg:pei} requires $O(min(s^2,n^2))$ time.  The space complexity stays the same as the baseline algorithm which is $O(min(s^2,n^2)n)$.

%---------------------------------------------------------------------------------------------------------------------------------------------------------------------------------------------------------------------------------------------------------------
%---------------------------------------------------------------------------------------------------------------------------------------------------------------------------------------------------------------------------------------------------------------
\subsection{Extension to High-dimensional Space}

In this subsection, we show how to adapt the baseline algorithm as well as the directed skyline graph algorithm and scanning algorithm from two dimensions to high dimensions. The sweeping algorithm, although provides the best performance on correlated dataset for two-dimensional space, it can not be easily extended to high-dimensional space and we leave its extension to future work.

\subsubsection{Baseline Algorithm}
We can construct $O(n^d)$ skyline (hyper) cells and easily see that each cell has the same skyline. For each cell, we find those points that lie on its first orthant (the counterpart of quadrant in high-dimensional space) and then use $O(n\log^{d-1}n)$ skyline algorithm to compute the skyline results for each cell. The cells with the same results are merged into polyminos.

\partitle{Complexity}
We have $O(n^d)$ skyline cells, and each cell requires $O(n\log^{d-1}n)$ time for finding the skyline because there is no monotonic property in high-dimensional space. The merging phase (which is the same for all algorithms) requires $O(n^d)$ time for searching in $d$-dimensional space. Thus, the baseline algorithm requires $O(n^{d+1}\log^{d-1}n)$ time. Since we have $O(n^d)$ skyminos and each skymino requires $O(n)$ space, the space complexity is $O(n^{d+1})$.  The above analysis assumes unlimited domain.  Given limited domain size $s_i$ for $i^{th}$ dimension, the number of skyline cells or skyminos is bounded by $O(min(\prod_{i=1}^d s_i,n^d))$. Hence, the time complexity is $O(min(\prod_{i=1}^d s_i,n^d)n\log^{d-1}n)$ and the space complexity is $O(min(\prod_{i=1}^d s_i,n^d)n)$. We note that all the high-dimensional algorithms have the same space complexity due to the same output structure.

\subsubsection{Directed Skyline Graph Algorithm}
Directed skyline graph in high-dimensional space can be constructed in $O(n^2)$ time \cite{DBLP:journals/pvldb/LiuXPLZ15}. Given directed skyline graph, the algorithm for high dimensions is exactly the same as the algorithm for two dimensions.

\partitle{Complexity}
Similar to the two-dimensional case, each ``row'' requires $O(n^2)$ to update the links and we have $O(n^{d-1})$ rows in $d$ dimensions. Thus, directed skyline graph algorithm requires $O(n^{d+1})$ time. Given domain size $s_i$ for $i^{th}$ dimension, the number of links is bounded by $O((min\{\prod_{i=1}^d s_i,n\})^2)$ and the number of rows is bounded by $O(\prod_{i=1}^{d-1} s_i)$, hence the complexity becomes $O(\prod_{i=1}^{d-1} s_i min\{\prod_{i=1}^d s_i,n\})^2)$. The space complexity stays the same as the baseline algorithm which is $O(min(\prod_{i=1}^d s_i,n^d)n)$.

\subsubsection{Scanning Algorithm}
Similar to Theorem \ref{the:cell} of Scanning algorithm in two dimensions, we have a relationship for high dimensions as follows.
\begin{displaymath}
Sky(C_{D_1,...,D_d})=Skyline(\sum^{\#~ of~ ``D_i+1''~ is~ odd} Sky(C_{D_1+\{0,1\},...,D_d+\{0,1\}})
\end{displaymath}
\begin{displaymath}
-\sum^{\#~ of~ ``D_i+1''~ is~ even} Sky(C_{D_1+\{0,1\},...,D_d+\{0,1\}}))
\end{displaymath}
where $D_i$ is the $i^{th}$ dimensional coordinate of skyline cell $C_{D_1,D_2,...,D_d}$, $+\{0,1\}$ means the coordinates either add $1$ or $0$ due to the neighbor relationship, and all operations are multiset operation.
For example in three dimensional space, $Sky(C_{D_1,D_2,D_3})=Skyline(Sky(C_{D_1+0,D_2+0,D_3+1})+Sky(C_{D_1+0,D_2+1,D_3+0})+Sky(C_{D_1+1,D_2+0,D_3+0})+Sky(C_{D_1+1,D_2+1,D_3+1})-Sky(C_{D_1+0,D_2+1,D_3+1})-Sky(C_{D_1+1,D_2+0,D_3+1})-Sky(C_{D_1+1,D_2+1,D_3+0}))$. The proof can be derived similar to Theorem \ref{the:cell}.

\partitle{Complexity}
We have $O(n^d)$ skyline cells, each cell requires $O(n\log^{d-1}n)$ time to do the multiset operations. Thus, scanning algorithm requires $O(n^{d+1}\log^{d-1}n)$ time.  We note that in practice, the number of remaining points is much smaller than n. Thus the algorithm is much faster than the baseline algorithm in practice. Given domain size $s_i$ for $i^{th}$ dimension, the number of skyline cells is bounded by $O(min(\prod_{i=1}^d s_i,n^d))$, hence the time complexity becomes $O(min(\prod_{i=1}^d s_i,n^d)n\log^{d-1}n)$. The space complexity stays the same as the baseline algorithm which is $O(min(\prod_{i=1}^d s_i,n^d)n)$.

%---------------------------------------------------------------------------------------------------------------------------------------------------------------------------------------------------------------------------------------------------------------
%---------------------------------------------------------------------------------------------------------------------------------------------------------------------------------------------------------------------------------------------------------------

\section{Skyline Diagram of Dynamic Skyline}\label{sec:dynamic}

In this section, we study algorithms for skyline diagram of dynamic skyline in two dimensions. They can be extended to high dimensions similar to skyline diagram of quadrant/global skyline.  We first present a baseline algorithm and define an important notion of skyline subcell.  Then based on the observation that dynamic skyline query result is a subset of global skyline, we present an improved subset algorithm utilizing the skyline diagram of global skyline. Finally, based on the relationship of the skyline results of neighboring subcells, we present a scanning algorithm with improved complexity.

\subsection{Baseline Algorithm}

Similar to the skyline diagram of quadrant/global skyline, we can first find small regions that are guaranteed to have the same dynamic skyline, and then merge them to form skyline polyominos.

\partitle{Skyline Subcell}
In skyline diagram of quadrant/global skyline, each point contributes a horizontal and vertical grid line to divide the plane into skyline cells which are guaranteed to have the same result for quadrant skyline queries. For dynamic skyline, all points will be mapped to the first quadrant with respect to the query point and may dominate the points which are otherwise global skyline points.  Hence the points in the skyline cell are not guaranteed to have the same dynamic skyline. Therefore, to account for mapped points, in addition to the grid lines over each point, we draw a vertical and horizontal bisector line between each pair of points.  In total, we have $O{n\choose 2}$ horizontal lines and $O{n\choose 2}$ vertical lines which leads to $O({n\choose 2}^2)$ regions. Figure \ref{fig:dyAlg} shows an example with 4 points. The $4 \choose 2$ bisector lines between each pair of points and the 4 grid lines over each point divide the plane into $121$ regions. We can see that these regions are guaranteed to have the same dynamic skyline, since there are no points or mapped points in each of these regions that would change the dominance relationship of the points. To distinguish with skyline cell for quadrant/global skyline, we name these regions skyline subcells for dynamic skyline. % The algorithm for computing skyline subcells is very straightforward as shown in Algorithm \ref{Alg:subcell}.

\begin{figure}[!htb]
 \centering
 \includegraphics[width=0.45\textwidth]{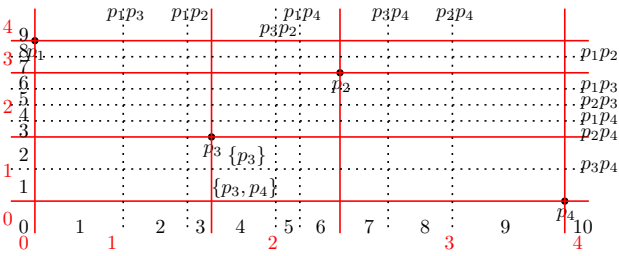}
 \captionsetup{font={scriptsize}}
 \caption{Skyline subcells for dynamic skyline (solid grid lines for cells and dotted lines for subcells).}
 \label{fig:dyAlg}
\end{figure}

\begin{definition}(\textbf{Skyline Subcell}).
The vertical and horizontal bisectors of each pair of points divide the plane into skyline subcells. Any query points in the same skyline subcell have the same dynamic skyline.
\end{definition}

\begin{algorithm}[thb] \scriptsize\caption{The baseline algorithm for skyline diagram of dynamic skyline.} \label{Alg:dynaBasic}
\SetKwInOut{Input}{input}\SetKwInOut{Output}{output}

\Input{skyline subcells $SC_{i,j}$.}
\Output{skyline of each skyline subcell $Sky(SC_{i,j})$.}

\For{i=0 to mx}{
    \For{j=0 to my}{
        \For{k=1 to n}{
            $p_k[x]'=|p_k[x]-SC_{i,j}[x]|$\;
            $p_k[y]'=|p_k[y]-SC_{i,j}[y]|$\;}
        employ skyline algorithm on $p_k'$ for $k=1,...,n$ to compute the skyline as the output of $SC_{i,j}$\;
}}
\end{algorithm}

\partitle{Finding skyline for each skyline subcell}
Once we have the skyline subcells, we can compute the skyline for each subcell. The baseline algorithm is straightforward and similar to the skyline computation for skyline cells as shown in Algorithm \ref{Alg:dynaBasic}. For each subcell $SC_{i,j}$, it first maps all the points to the first quadrant with respect to the subcell (Lines 4-5).  It then computes the skyline of the mapped points.

\partitle{Complexity}
Since skyline can be computed in $O(n)$ time if the points are sorted on one dimension, and there are $O(n^4)$ subcells, the entire algorithm (Algorithm \ref{Alg:dynaBasic}) can be finished in $O(n^5)$. Similarly, the space complexity is $O(n^5)$. We note that the remaining algorithms in this section have the same space complexity due to the same output structure. In practice, given a limited domain size $s$ for each dimension, the number of subcells is bounded by $O(min(s^2,n^4))$ because most of the bisector lines are coincident.  Hence the time and space complexity becomes $O(min(s^2,n^4)n)$.

\subsection{Subset Algorithm}

As we discussed earlier, the mapped points may dominate additional points that would have been global skyline points. As a result, the dynamic skyline of each subcell $SC_{i,j}$ is a subset of the global skyline of the skyline cell it belongs to. For example, in Figure \ref{fig:dyAlg}, $Sky(SC_{3,1})$ is a subset of $Sky(C_{1,1})$. Therefore, we can first use the algorithms in the previous section to compute the global skyline of the skyline cells, and then compute the dynamic skyline of each subcell from this set rather than the entire $n$ points. The detailed algorithm is shown in Algorithm \ref{Alg:dynaSubset} which is very similar to the baseline algorithm. The only difference is we just need to consider the output of global skyline results of each skyline cell rather than the entire $n$ points.

\begin{algorithm}[thb]\scriptsize \caption{The subset algorithm for skyline diagram of dynamic skyline.} \label{Alg:dynaSubset}
\SetKwInOut{Input}{input}\SetKwInOut{Output}{output}

\Input{global skyline result of each skyline cell $Sky(C_{i,j})$.}
\Output{dynamic skyline result of each skyline subcell $Sky(SC_{i,j})$.}

%using merge sort to find $SC_{k,l}\in C_{i,j}$ where $i,j=0,...,n; l=0,...,mx; k=0,...,my$\;

\For{k=0 to mx}{
    \For{l=0 to my}{
        find $C_{i,j}$ such that $SC_{k,l}\in C_{i,j}$\;
        $Sky(SC_{k,l})$ = dynamic skyline of the points in $Sky(C_{i,j})$
}}
\end{algorithm}
\vspace{-2em}
\partitle{Complexity} Although the worst case time complexity is the same as the baseline algorithm $O(n^5)$, on average, the number of skyline for $n$ points is only $O(\log n)$. Therefore, the amortized time complexity of the subset algorithm is reduced to $O(n^4\log n)$. We will show that the subset algorithm is indeed significantly faster than the baseline algorithm in practice in Section \ref{sec:experiments}. Again, given a limited domain size $s$ for each dimension, the number of subcells is bounded and hence the time and space complexity is $O(min(s^2,n^4)n)$.

\subsection{Scanning Algorithm}

The baseline and subset algorithms compute the skyline for each subcell from scratch.  To further improve the efficiency, in this subsection, we propose an incremental scanning algorithm based on the relationship of the dynamic skyline results of neighboring subcells.  This is due to the observation that as we move from one subcell to its neighboring subcell on the right, the only difference of the skyline result is caused by the two points that contributed to the bisector line between the two subcells. We just need to consider these two points in addition to the skyline result of the previous subcell. For example in Figure \ref{fig:dyAlg}, $Sky(SC_{4,2})=\{p_3\}$, for $SC_{4,1}$, we only need to check $\{p_3\}\cup \{p_3,p_4\}=\{p_3,p_4\}$. Because $p_3,p_4$ cannot dominate each other, therefore, $Sky(SC_{4,1})=\{p_3,p_4\}$. So similar to the scanning algorithm for quadrant skyline queries, we first compute $Sky(SC_{0,0})$ for the lower left subcell. We then scan the subcells from left to right on the first row and compute the skyline incrementally.  We then compute each of the remaining rows from bottom up.  The detailed algorithm is shown in Algorithm \ref{Alg:dynaScan}.

\begin{algorithm}[thb] \scriptsize \caption{The scanning algorithm for skyline diagram of dynamic skyline.} \label{Alg:dynaScan}
\SetKwInOut{Input}{input}\SetKwInOut{Output}{output}

\Input{a set of $n$ points and skyline subcells $SC_{i,j}$.}
\Output{skyline of each skyline subcell $Sky(SC_{i,j})$.}

employ skyline algorithm to compute the skyline of Subcell $SC_{0,0}$\;
\For{i=1 to mx}{
        $Sky(SC_{i,0})$ = $Sky(SC_{i-1,0})\bigcup the~ points$ contributing to the bisectors between $SC_{i-1,0}$ and $SC_{i,0}$\;
}

\For{i=0 to mx}{
    \For{j=1 to my}{
         $Sky(SC_{i,j})$ = skyline from $Sky(SC_{i,j-1})\bigcup the~ points$ contributing to the bisectors between $SC_{i,j-1}$ and $SC_{i,j}$\;
}}

\end{algorithm}

The key step in the above algorithm is to compute the updated skyline given the skyline result of the previous cell and the new points contributing to the bisectors (Line 3 and Line 8).  When adding a new point, there are two cases: 1) the new point becomes a skyline point which may dominate some existing skyline points, or 2) the new point is dominated by existing skyline points. To determine if the new point is dominated by existing skyline points, we can do a binary search to find the skyline point $p_i$ such that $p_i[x]\leq p[x]$ and $p[x]\leq p_{i+1}[x]$. If $p_i[y]\geq p[y]$, the new point is a skyline point, otherwise, the new point is dominated by $p_i$. This procedure can be finished in $O(\log n)$ time.  If the new point is a skyline point, we need to remove those points dominated by the new point. If we sort the skyline points in ascending order on $x$-coordinate and descending order on $y$-coordinate, we can delete those points in $O(\log n)$ time.

\partitle{Complexity}
Since the computation of updated skyline for each subcell only costs $O(\log n)$ time, and there are $O(n^4)$ subcells, the overall worst case time complexity for the scanning algorithm is $O(n^4\log n)$. Again, given a limited domain size $s$ for each dimension, the number of subcells is bounded and hence the time complexity is $O(min(s^2,n^4)\log n)$.  The space complexity is the same as the baseline algorithm which is $O(min(s^2,n^4)n)$.

\section{Approximate Skyline Diagram}

The drawback of skyline diagram is the high space cost. In this section, we propose an approximate skyline diagram to significantly reduce the space cost. The key idea of the approximate skyline diagram is to allow nearby skyline cells that have different but similar skyline results to be merged into one skyline polyomino in order to reduce the number of skyline polyominos and hence reduce the space cost. However, this could sacrifice the precision of the skyline query result, i.e., each skyline polyomino now has the union of the skyline points of each skyline cell within the skyline polyomino, which is a superset of the actual skyline result given any query point within the skyline polyomino. A query user then needs to process this superset to find the actual result. The more skyline cells we merge, the higher precision we sacrifice, since the size of the skyline union can be much larger than the actual number of skyline points of each single skyline cell. The extreme case for the approximate skyline diagram is that we combine all the skyline cells into one skyline polyomino. In this case, the skyline diagram is not useful because it defeats the purpose of skyline query by returning the union of the skyline points of all skyline cells, whose size can be as large as $n$ in the worst case.

Therefore, we propose to use $n_h$ Horizontal Partitioning Lines (HPLs) and $n_v$ Vertical Partitioning Lines (VPLs) to partition the whole skyline diagram into $(n_h-1)(n_v-1)$ skyline polyominos such that each skyline polyomino contains at most $\delta$ skyline points. Parameter $\delta$ guarantees that the user only needs to consider at most $\delta$ skyline points/choices so that the approximation is controlled. Our optimization goal here is to find the smallest $n_h+n_v$ which ``approximately'' minimizes the storage cost.

\begin{definition}[Approximate Skyline Diagram Problem]\label{def:ASDP}
Given a skyline diagram with $n\times n$ skyline cells (without merging skyline cells into skyline polyominos), we partition the skyline diagram into $(n_h-1)(n_v-1)$ skyline polyominos with the minimum number of $HPLs$ $n_h$ plus number of $VPLs$ $n_v$ such that each skyline polyomino contains at most $\delta$ skyline points.
\end{definition}

We first prove the NP-hardness of the approximate skyline diagram problem. Therefore, it is unlikely that there are efficient algorithms for solving this problem exactly. We then propose two heuristic algorithms to efficiently compute the approximate skyline diagram.

\subsection{NP-hardness of the Approximate Skyline Diagram}
In this subsection, we prove that the approximate skyline diagram problem is NP-hard by showing that the rectangular partitioning problem is polynomial time reducible to the approximate skyline diagram problem.

\begin{definition}[Rectangular Partitioning Problem]\cite{Forsmann07}\cite{DBLP:conf/icdt/MuthukrishnanPS99}
Given a set of non-overlapping rectangles $R_1, R_2, ..., R_n$, we partition the plane into tiles with the minimum number of rows plus columns such that each resulting tile intersects (adjacent boundary touch is not considered as an intersection) at most $\delta$ rectangles.
\end{definition}

Similarly, we have the decision version of the rectangular partitioning problem as follows,

\begin{definition}[Decision Version of the Rectangular Partitioning Problem]
Given a set of non-overlapping rectangles $R_1, R_2, ..., R_n$ and the values $n_h'$ and $n_v'$, is there a partitioning $(n_h',n_v')$ of the plane such that each tile intersects at most $\delta$ rectangles, where $n_h'$ is the number of HPLs and $n_v'$ is the number of VPLs.
\end{definition}

It was shown in \cite{Forsmann07} that the decision version of the rectangular partitioning problem is NP-hard even when $\delta=1$. Similarly, we have the decision version of the approximate skyline diagram problem as follows.

\begin{definition}[Decision Version of the Approximate Skyline Diagram Problem]
Given a skyline diagram with $n\times n$ skyline cells, each cell $C_{i,j}$ of the skyline diagram having a set of points $Sky(C_{i,j})$, and the values $n_h$ and $n_v$, is there a partitioning $(n_h,n_v)$ of the skyline diagram such that each resulting skyline polyomino contains at most $\delta$ unique symbols.
\end{definition}

\begin{theorem}
The approximate skyline diagram problem is NP-hard even when $\delta=1$.
\end{theorem}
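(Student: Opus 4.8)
The plan is to prove NP-hardness by a polynomial-time reduction \emph{from} the Rectangular Partitioning Problem, which is stated above to be NP-hard even when $\delta = 1$, \emph{to} the Approximate Skyline Diagram Problem. Given an instance consisting of non-overlapping rectangles $R_1, \ldots, R_n$ together with target values $n_h'$ and $n_v'$, I would build an instance of the Approximate Skyline Diagram Problem whose ``yes''/``no'' answer coincides with that of the rectangular instance. Note that the decision version stated above takes arbitrary symbol sets $Sky(C_{i,j})$ as input, so the reduction may specify these sets directly without exhibiting an underlying point set that realizes them.

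The construction proceeds as follows. First I would discretize the plane using the $O(n)$ distinct $x$-coordinates and $O(n)$ distinct $y$-coordinates of the rectangle edges; these induce a grid of $O(n) \times O(n)$ cells, which I identify with the skyline cells $C_{i,j}$. To each cell I assign the symbol set $Sky(C_{i,j}) = \{\, p_k : R_k \text{ overlaps the interior of } C_{i,j} \,\}$, where each rectangle $R_k$ is represented by its own unique symbol $p_k$. Using interior overlap rather than mere boundary contact exactly mirrors the convention that adjacent boundary touching does not count as an intersection. Finally I keep the same $\delta$ and set $(n_h, n_v) = (n_h', n_v')$. This construction clearly runs in polynomial time.

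The correctness argument has two directions resting on a single invariant: for any axis-aligned region $T$ (a tile on the rectangular side, equivalently a skyline polyomino on the diagram side), the number of rectangles that $T$ intersects equals the number of unique symbols appearing in the cells comprising $T$. This holds because $p_k$ appears in a cell iff that cell lies in the interior of $R_k$, so $p_k$ is present in $T$ iff $T$ overlaps $R_k$. For the forward direction, a plane partition in which every tile intersects at most $\delta$ rectangles translates directly into a diagram partition in which every polyomino holds at most $\delta$ symbols, with the same line count. For the reverse direction, I would first argue that an optimal rectangular partition may, without loss of generality, place each cutting line strictly within a ``gap'' between consecutive distinct edge coordinates, since sliding a line inside a gap alters no tile's intersection set; these canonical positions are precisely the grid lines separating skyline cells, so a valid diagram partition pulls back to a valid rectangular partition using the same $n_h + n_v$.

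The step I expect to be the main obstacle is this reverse-direction alignment: I must verify carefully that restricting cutting lines to the $O(n)$ candidate HPL and VPL grid positions loses no solutions, i.e., that every continuously-placed partition is combinatorially equivalent to a grid-aligned one without increasing $n_h + n_v$. Once this alignment lemma is in place, the invariant makes both directions immediate, and inheriting the $\delta = 1$ hardness of rectangular partitioning yields the claim.
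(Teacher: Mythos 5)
Your proposal is correct and takes essentially the same route as the paper: the paper also reduces from rectangular partitioning by extending each rectangle's edges into a grid over a bounding box, labeling each cell inside $R_i$ with the single symbol $i$ (empty otherwise), and relying on exactly your invariant that symbols in a polyomino correspond to rectangles whose interiors it meets, with the boundary-touch convention doing the work. The alignment lemma you flag as the main obstacle is precisely what the paper supplies, by snapping each non-grid-aligned VPL (resp.\ HPL) to the nearest grid line on its left (resp.\ above) and observing that the shifted tiles can then only touch, never enter, additional rectangles; note only that this snapping belongs to the direction from a plane partition to a diagram partition (your ``forward'' direction), since diagram partition lines are grid-aligned by definition and the other direction pulls back immediately.
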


\begin{figure}
    \centering
    \includegraphics[width=0.5\textwidth]{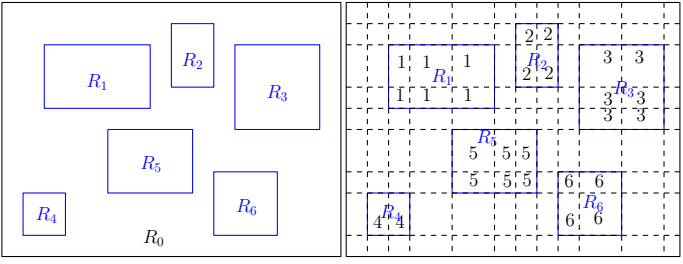}
    \caption{An  instance of mapping between Rectangular Partitioning Problem and Approximate Skyline Diagram Problem.}
   \label{fig:reduction}
\end{figure}
\begin{proof}
To reduce the rectangular partitioning problem to the approximate skyline diagram problem, our construction is shown as follows. Given a set of non-overlapping rectangles $R_1, R_2, ..., R_n$, we find a large enough rectangle $R_0$ which includes all those non-overlapping rectangles $R_1, R_2, ..., R_n$. That is, we use $R_0$ to fix the boundary. We extend four boundary line segments of each rectangle $R_i$ until to the boundary of $R_0$. Then rectangle $R_0$ is separated into small orthogonal grids. For a grid $G_{i,j}$, if it is located in $R_i$, then we set the symbol set $S_{i,j}$=``i''. If it is not in any rectangle, then we set the symbol set $S_{i,j}=\emptyset$. An example is shown in Figure ~\ref{fig:reduction}.

We show the reduction as follows.

(1)$\longrightarrow$ If there exists a partitioning $(n_h',n_v')$ such that each grid intersects at most one of the rectangles, and the lines in partitioning $(n_h',n_v')$ overlap with the grid lines, then partitioning $(n_h,n_v)$ exactly equals to partitioning $(n_h',n_v')$. Otherwise, if the VPL does not overlap with any vertical grid lines, we move this VPL to overlap with the nearest grid line on its left. Similarly, if the HPL does not overlap with any horizontal grid lines, we move this HPL to overlap with the nearest grid line above it. Let partitioning $(n_h,n_v)$ be the partitioning $(n_h',n_v')$ after moving.

Claim: each skyline polyomino in partitioning $(n_h,n_v)$ contains at most one unique symbol. Before moving, each tile intersects at most one rectangle. Hence, the tile after moving can only touch the boundary of the other rectangles without including any interior part of other rectangles since we only move those partitioning lines to their nearest grid lines and all the boundary lines of the rectangles are grid lines.

(2)$\longleftarrow$ If there exists a partitioning $(n_h,n_v)$ of skyline diagram such that each resulting skyline polyomino contains at most one unique symbol, and let partitioning $(n_h',n_v')$ be the partitioning $(n_h,n_v)$. Each tile in the partitioning $(n_h',n_v')$ intersects at most one rectangle. Otherwise, the skyline polyomino will have more than one unique symbol.
\end{proof}

\subsection{Algorithms for Computing the Approximate Skyline Diagram}
In this subsection, we show two heuristic algorithms, Bottom-Up Merging algorithm and Top-Down Partitioning algorithm, to efficiently compute the approximate skyline diagram in two dimensional space. We note that both two algorithms are applicable for skyline diagram of global and dynamic skyline, and they are straightforwardly extendable for high dimensional space.

\subsubsection{Bottom-Up Merging Algorithm}
In this subsection, we show a bottom-up merging algorithm to compute the approximate skyline diagram. The general idea is to merge as many skyline cells that satisfy the upper limit of $\delta$ skyline points in each skyline polyomino as possible. For each row, we scan each cell from left to right, and we find the maximum number of skyline cells that the number of points in the union of the skyline points is $\leq \delta$. We find the smallest number $n_{smallest}$ among all the $n$ rows and set the $(s_{smallest}+1)^{th}$ grid line as the second VPL, where we consider the first vertical grid line as the first VPL. In this case, we can guarantee that we do not need to partition the space between the first VPL and the second VPL anymore. Similarly, we can find all VPLs. For each row, we merge the skyline cells between each two adjacent VPLs and get a new skyline diagram with $n$ rows and $n_v-1$ columns, where $n_v$ is the number of VPLs. Using the similar method, we can find all the HPLs.

\begin{algorithm}[thb]\scriptsize \caption{Merging-based bottom-Up algorithm.} \label{Alg:bottomUp}
\SetKwInOut{Input}{input}\SetKwInOut{Output}{output}

\Input{A skyline diagram with $n\times n$ skyline cells and parameter $\delta$.}
\Output{An approximate skyline diagram.}

currentVL=1\;

\If{currentVL $\leq$ n}{
    \For{i=1 to n}{
        tempUnion[i]=$\emptyset$\;
        tempVL[i]=currentVL\;
        \For{j=currentVL to n}{
            $tempUnion[i]=\bigcup \{tempUnion[i],Sky(C_{currentVL,j})\}$\;
            \If{$|tempUnion[i]|>\delta$}{
            break;}
            tempVL[i]=j-1\;}
        find the smallest value $SV$ from all tempVL[i], i=1,2,...,n\;}
    currentVL=$SV$\;
    add the $SV^{th}$ vertical line to partitioning line pool\;
    }

denote the number of vertical partitioning lines as $n_v$\;
we have vertical partitioning lines $VPL_1,VPL_2,...,VPL_{n_v}$, where $VPL_1=1$ and $VPL_{n_v}=n+1$\;
currentHL=1\;

\For{i=1 to n}{
    \For{j=1 to $n_v$-1}{
        $Sky(C_{i,j})=\emptyset$\;
        \For{k=$VPL_j$ to $VPL_{j+1}$-1}{
            $Sky(C_{i,j})=\bigcup \{Sky(C_{i,j}), Sky(C_{i,k})\}$\;}}}
we have a new skyline diagram with $n$ rows and $n_v$-1 columns\;
similar to Lines 1-15, we have horizontal partitioning lines $HPL_1,HPL_2,...,HPL_{n_h}$, where $HPL_1=1$ and $HPL_{n_h}=n+1$\;

\For{i=1 to $n_v$-1}{
    \For{j=1 to $n_h$-1}{
        merge all the skyline cells lying between grid vertical line $VPL_i, VPL_{i+1}$ and grid horizontal line $HPL_j,HPL_{j+1}$\;}}

\end{algorithm}

The detailed algorithm is shown in Algorithm \ref{Alg:bottomUp}, we find all VPLs in Lines 1-15. For each row, we merge the skyline cells between the adjacent VPLs in Lines 17-21. We find all HPLs in Line 23. Finally, we merge the skyline cells between the HPLs and VPLs in Lines 24-26. The approximate skyline diagram has $n_v-1$ rows and $n_h-1$ columns, and each skyline polyomino contains at most $\delta$ points.

\subsubsection{Top-Down Partitioning Algorithm}
When $\delta$ is large, it is time-consuming to merge the skyline cells one by one. In this subsection, we show a top-down partitioning algorithm to compute the approximate skyline diagram, which is desired when $\delta$ is large. The general idea is to partition the plane into the minimum number of skyline polyominos that satisfy the upper limit of $\delta$ skyline points in each skyline polyomino. We assume $n_v=n_h=p$. We first guess that we only need $p'=2$ VPLs and HPLs, i.e., we partition the skyline diagram into one skyline polyomino. And then we check if each skyline polyomino contains more than $\delta$ points. If it does, we double $p'$ and check again until each polyomino contains $\leq \delta$ points. That is, we find a $p'$ such that $p'$ satisfies the requirement but $p'/2$ not. We then use binary search to find the exact $p$ between $p'/2$ and $p'$ such that $p$ satisfies the requirement but $p-1$ not. The remaining problem is how to ``equally'' partition the skyline diagram. We take the number of the skyline points in each skyline cell as its weight, and then we have the weights for each row and each column. We partition the skyline diagram based on the weights of the rows and the columns.

\begin{algorithm}[thb]\scriptsize \caption{Top-Down Partitioning algorithm.} \label{Alg:topDown}
\SetKwInOut{Input}{input}\SetKwInOut{Output}{output}

\Input{A skyline diagram with $n\times n$ skyline cells and parameter $\delta$.}
\Output{An approximate skyline diagram.}

\For{i=1 to n}{
    \For{j=1 to n}{
        $W(C_{i,j})=|Sky(C_{i,j})|$\;}}

\For{i=1 to n}{
    $W(R_i)=0$\;
    \For{j=1 to n}{
        $W(R_i)=W(R_i)+W(C_{i,j})$\;}}

similar to Lines 4-7, we have $W(C_i)$ for all columns, where i=1,2,...,n\;

p'=2\;
partition the skyline diagram into $(p'-1)^2$ skyline polyominos equally based on the weights of the rows and the columns\;

\While{one of the skyline polyomino contains more than $\delta$ points}{
    p'=2p'\;
    partition the skyline diagram into $(p'-1)^2$ skyline polyominos equally based on the weights of the rows and the columns\;}

use binary search to find the exact $p$ between $p'/2$ and $p'$ such that $p$ satisfies the requirement but $p-1$ not.

partition the skyline diagram into $(p-1)^2$ skyline polyominos equally based on the weights of the rows and the columns\;

use the similar method of Lines 24-26 in Algorithms \ref{Alg:bottomUp} to compute the final approximate skyline diagram\;

\end{algorithm}

The detailed algorithm is shown in Algorithm \ref{Alg:topDown}. In Lines 1-3, we compute the weight for each skyline cell. We compute the weights for the rows and the columns in Lines 4-8. We guess that we only need two VPLs and two HPLs in Line 9 and check if the approximate skyline diagram satisfies the requirement in Line 10. If it does not, we double the number of partitioning lines $p'$ until the approximate skyline diagram satisfies the requirement in Line 12. However, the exact number of partitioning lines $p$ should be a value between $p'/2$ and $p'$. Therefore, we use binary search to find the exact $p$ in Line 14. We partition the skyline diagram into $(p-1)^2$ skyline polyominos equally based on the weights and get the final approximate skyline diagram in Line 16.

%----------------------------------------------------------------------------------------------------------------------------------------------
\begin{figure*}[!htb]
\centering
\subfigure[\scriptsize{time cost of CORR}]{
\begin{minipage}[b]{0.22\textwidth}
\includegraphics[width=1.1\linewidth]{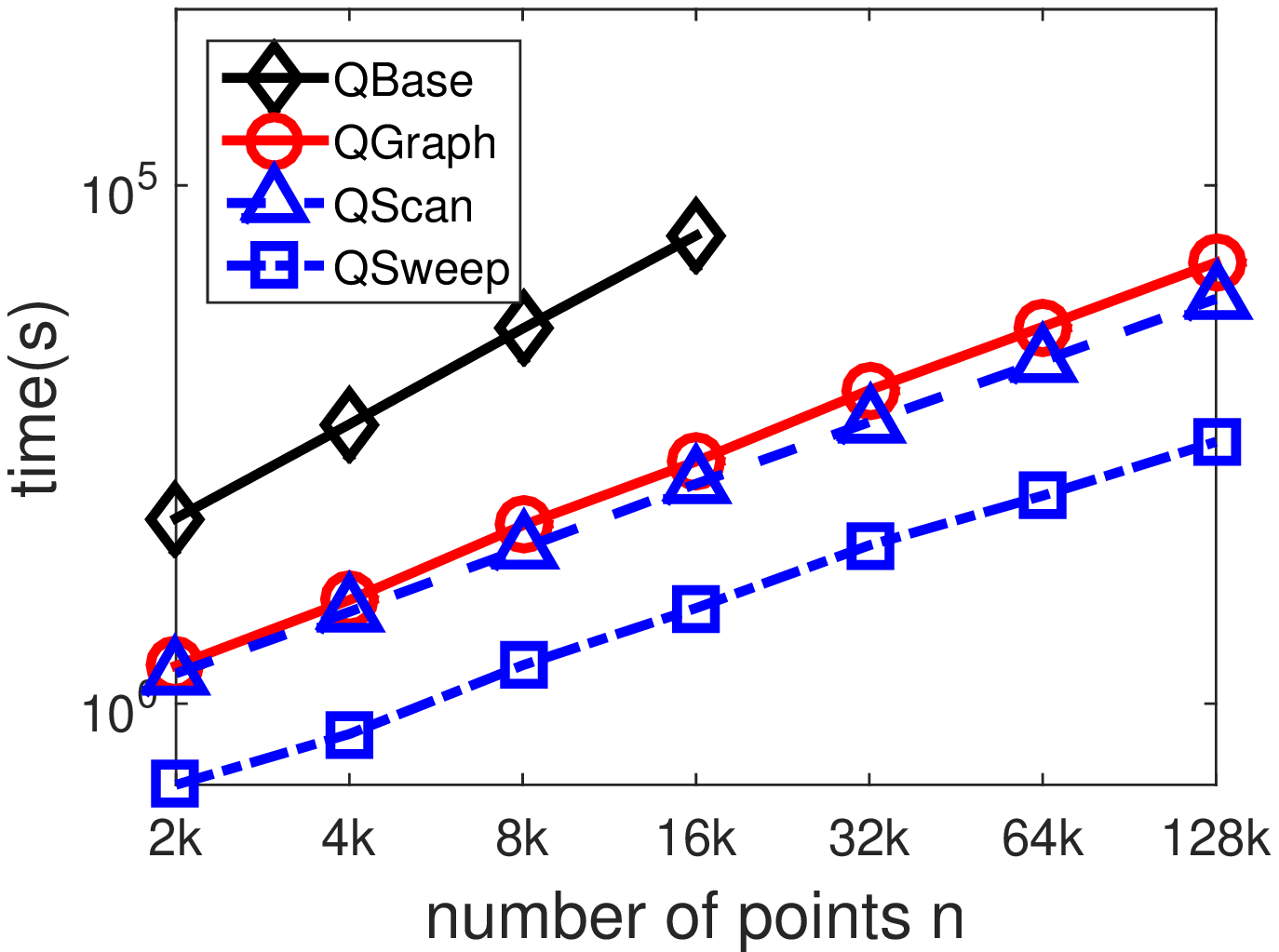}
\end{minipage}
}
\subfigure[\scriptsize{time cost of INDE}]{
\begin{minipage}[b]{0.22\textwidth}
\includegraphics[width=1.1\linewidth]{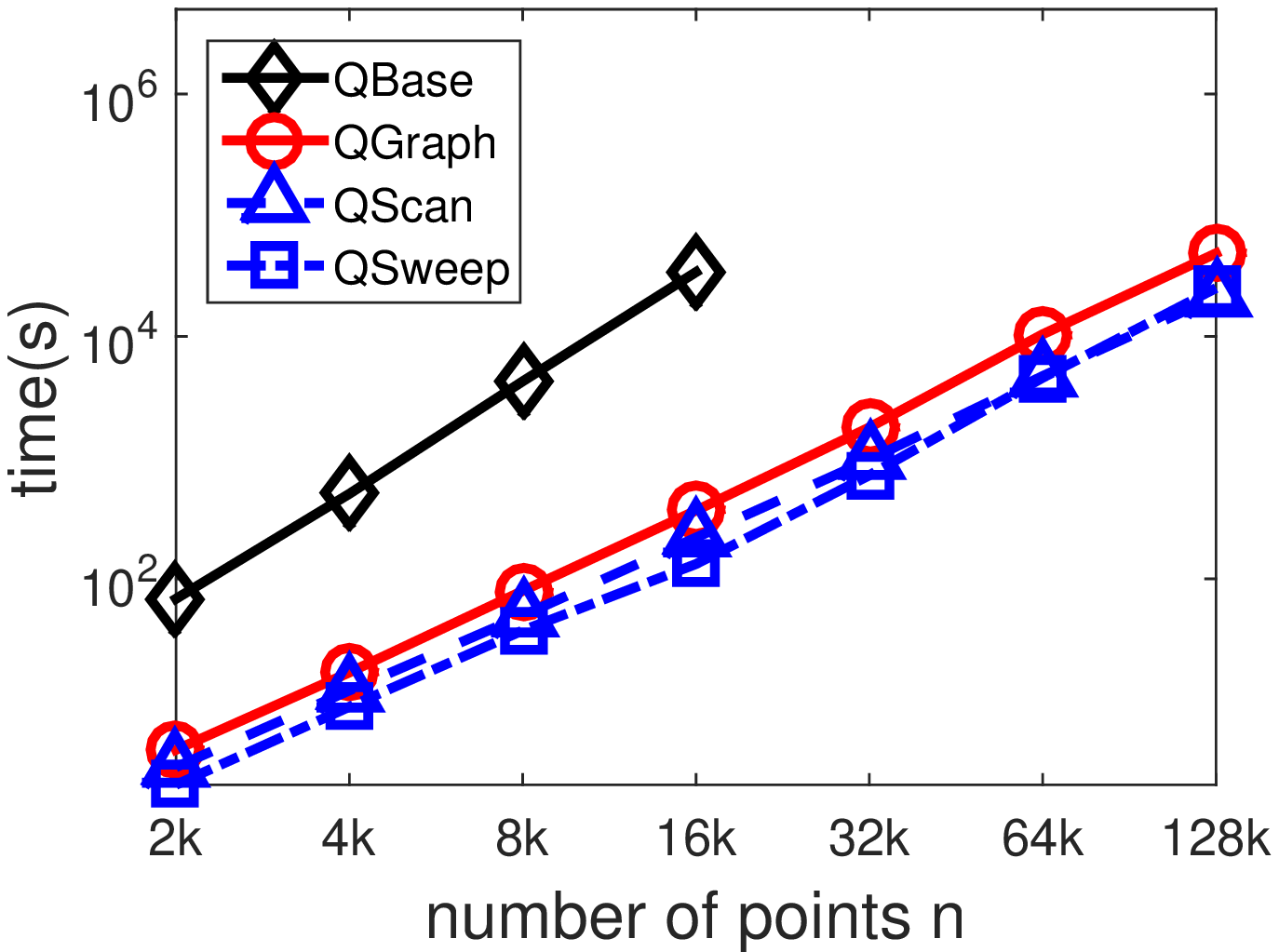}
\end{minipage}
}
\subfigure[\scriptsize{time cost of ANTI}]{
\begin{minipage}[b]{0.22\textwidth}
\includegraphics[width=1.1\linewidth]{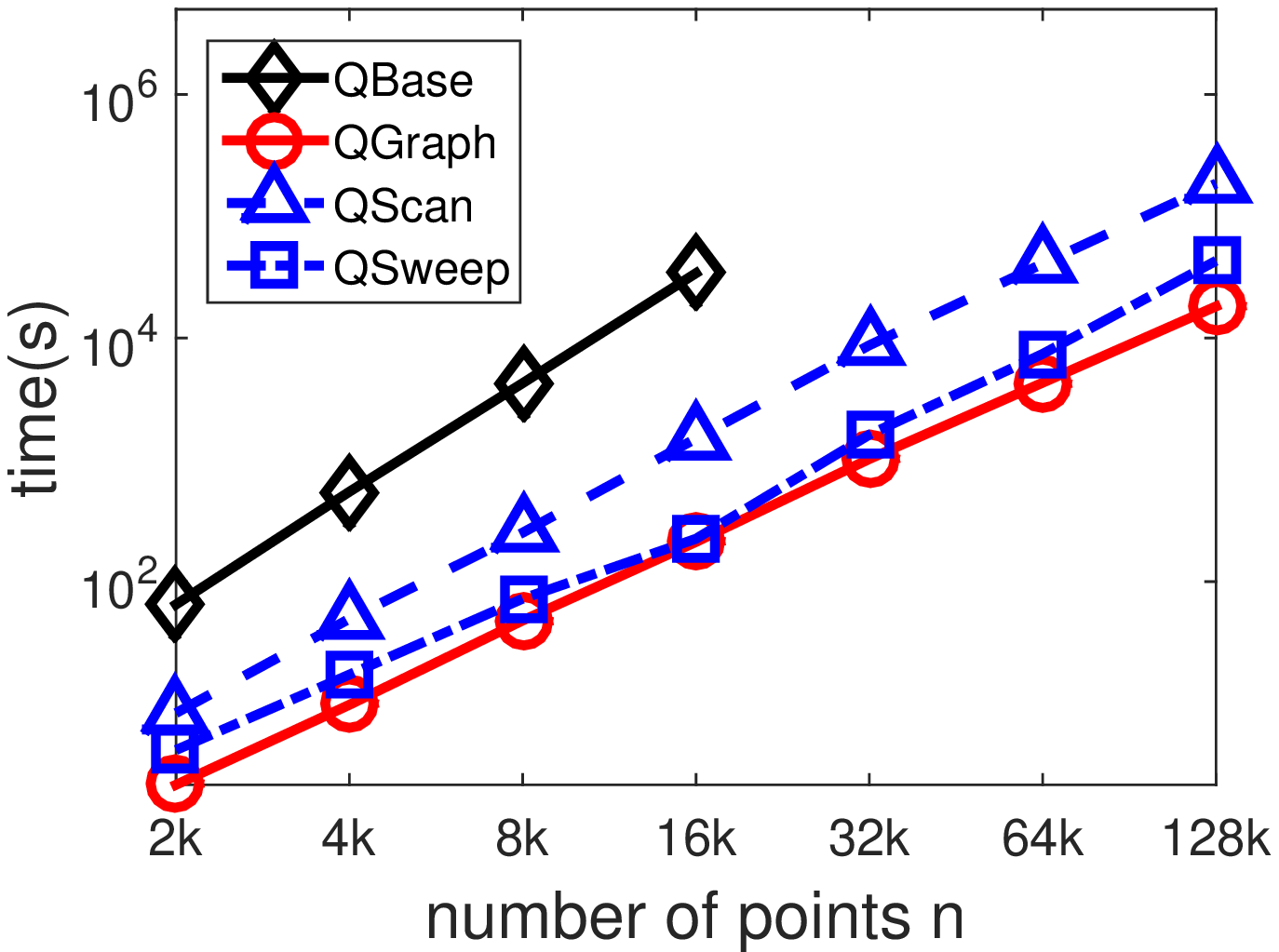}
\end{minipage}
}
\subfigure[\scriptsize{time cost of NBA}]{
\begin{minipage}[b]{0.22\textwidth}
\includegraphics[width=1.1\linewidth]{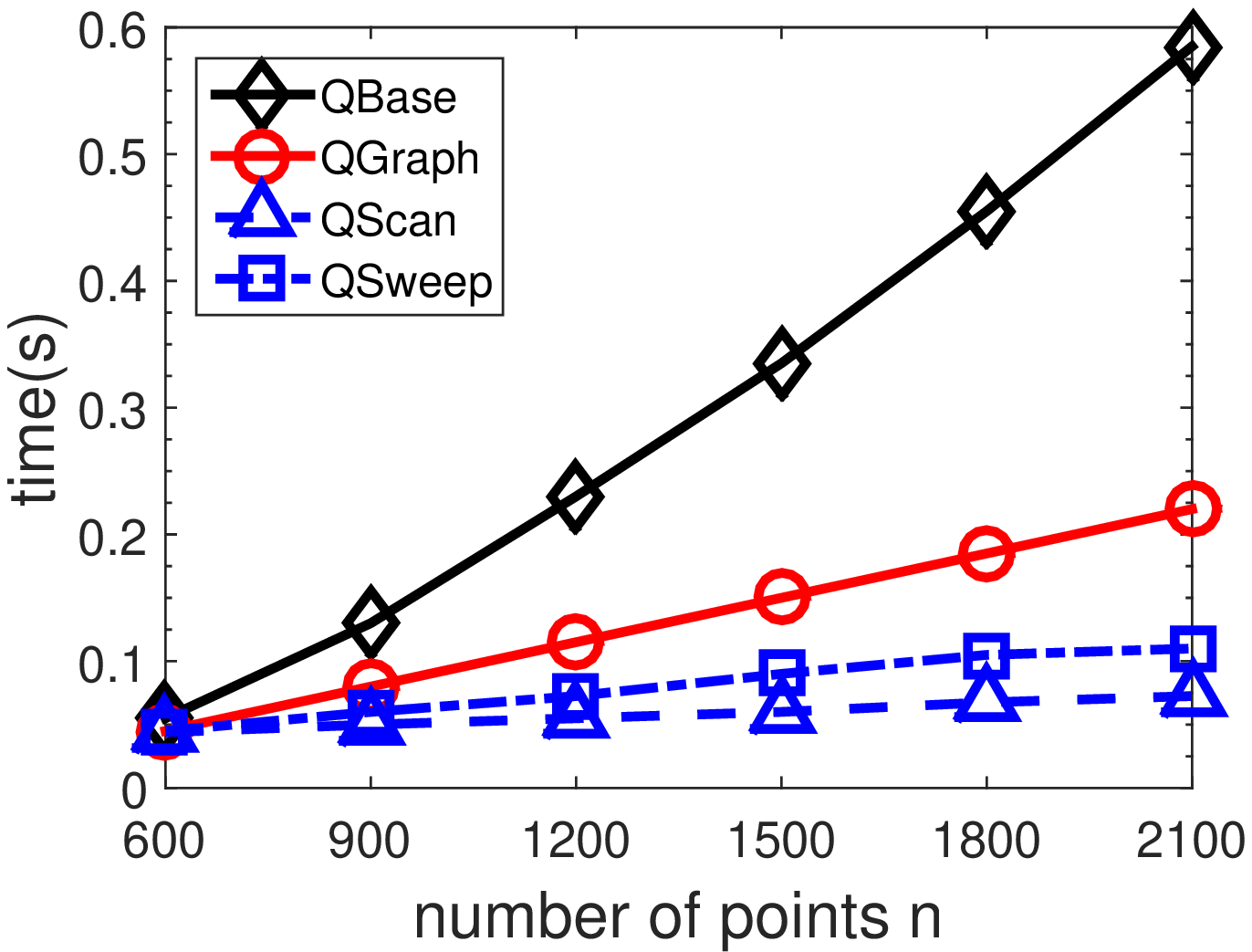}
\end{minipage}
}
 \vspace{-1em}\captionsetup{font={scriptsize}}\caption{The impact of $n$ on skyline diagram of quadrant skyline queries (unlimited domain).} \label{fig:quardant}
\end{figure*}

%----------------------------------------------------------------------------------------------------------------------------------------------

\section{Experiments}\label{sec:experiments}

In this section, we present experimental studies evaluating our proposed algorithms.

\subsection{Experiment Setup}
We first evaluate the algorithms for computing skyline diagram of quadrant/global skyline, and then the algorithms for dynamic skyline. Finally, we evaluate the algorithms for computing the approximate skyline diagram. We implemented all algorithms in Python and to avoid the effect of I/O, final results are not stored in the exact skyline diagram experiments. We ran experiments on 1) a desktop with Intel Core i7 running Ubuntu 14.04 with 64GB RAM for serial implementations, 2) a computation server with dual Intel Xeon E5-2660 v3 with 512GB RAM running Ubuntu 14.04 for parallel implementations, and 3) a computation server with quad Intel Xeon E5-4627 v3 with 1TB RAM running Ubuntu 16.04 for the approximate skyline diagram. We compare four algorithms (QBase: Baseline algorithm, QGraph: Skyline graph algorithm, QScan: Scanning algorithm, and QSweep: Sweeping algorithm) for quadrant skyline diagram and three algorithms (DBase: Baseline algorithm, DSubset: Subset algorithm, and DScan: Scanning algorithm) for dynamic skyline diagram. We compare two heuristic algorithms (BUM: Bottom-Up Merging algorithm and TDP: Top-Down Partitioning algorithm) for the approximate skyline diagram.

We used both synthetic datasets and a real NBA dataset in our experiments. To study the scalability of our methods, we generated independent (INDE), correlated (CORR), and anti-correlated (ANTI) datasets following the seminal work \cite{DBLP:conf/icde/BorzsonyiKS01}.
We also built a dataset\footnote{Extracted from http://stats.nba.com/leaders/alltime/?ls=iref:nba:gnav on 04/15/2015.} that contains 2384 NBA players who are league leaders of playoffs. Each player has five attributes (Points, Rebounds, Assists, Steals, and Blocks) that measure the player's performance.

\subsection{Skyline Diagram of Quadrant Skyline}

Figures \ref{fig:quardant}(a)(b)(c) present the time cost of QBase, QGraph, QScan, and QSweep with varying number of points $n$ for the three synthetic datasets. For this set of experiments, we used unlimited domains and enforced no two data points lie on the same $x$-coordinate or $y$-coordinate, which can be considered as a stress test for the algorithms. We evaluate the impact of domain size in Section \ref{sec:domainsize}. The results of QBase algorithm on CORR, INDE, and ANTI dataset are almost the same which means the data distribution has no impact on baseline algorithm. We did not report the result of the baseline algorithm in some figures due to the high cost when $n$ is large. All the proposed algorithms scale well with the increasing number of points.

We first examine each algorithm and compare its performance on different datasets. For the QGraph algorithm, the time on INDE dataset is higher than CORR and ANTI datasets.  This is because the number of links between parent and children nodes in the directed skyline graph is larger for INDE dataset. For the QScan algorithm, the time on ANTI dataset is much higher than INDE dataset which is much higher than CORR dataset. This is because the number of skyline in each cell in ANTI dataset is much more than INDE and CORR datasets. Therefore, it requires more time to do the multiset operation on ANTI dataset. For the QSweep algorithm, it is much faster than QGraph and QScan on CORR dataset because there are much fewer intersections thus fewer polyominos on CORR dataset. However, the performance of QSweep is not so good on ANTI dataset due to the huge number of intersections on ANTI dataset.

Comparing different algorithms, QGraph, QScan, and QSweep significantly outperform QBase, which validates the effectiveness of our algorithms. QSweep outperforms QScan on all datasets thanks to its combined steps of finding skyline polyominos directly (but we will see an opposite result on real NBA dataset later). For CORR and INDE datasets, QSweep is the most efficient out of all algorithms, while for ANTI dataset, QGraph has the best performance due to the reason we explained earlier.

Figure \ref{fig:quardant}(d) reports the time cost of QBase, QGraph, QScan, and QSweep with varying number of points $n$ for the real NBA dataset. The difference between the previous synthetic datasets and this NBA dataset is that the latter has a limited domain which leads to fewer number of cells even given the same number of points. Herein, the time cost of $2100$ points on NBA is significantly smaller than that of $2000$ points on synthetic datasets. Comparing different algorithms, the performances of QScan and QSweep are similar and QScan is slightly better than QSweep which is opposite to the performances on synthetic datasets. The reason is that on NBA dataset, the number of cells is much smaller but the number of intersections is similar. However, both QScan and QSweep outperform QGraph.

\subsection{Extension to High-dimensional Space}

%In this subsection, we report the experimental results for computing skyline diagram of quadrant skyline queries and global skyline queries in high-dimensional space.

Figure \ref{fig:high} reports the time cost of QBase, QGraph, and QScan with varying number of dimensions $d$ for the real NBA dataset. In two-dimensional space, QScan is much better than QGraph, but in high-dimensional space, QScan and QGraph are very similar. The reason is that QScan algorithm needs too many multiset operations in high-dimensional space. Both QGraph and QScan significantly outperform QBase, which verifies the effectiveness and scalability of our proposed algorithms in high-dimensional space.

\begin{figure}[thb]
\begin{minipage}[t]{0.5\linewidth}
\centering

\includegraphics[width=1.6in]{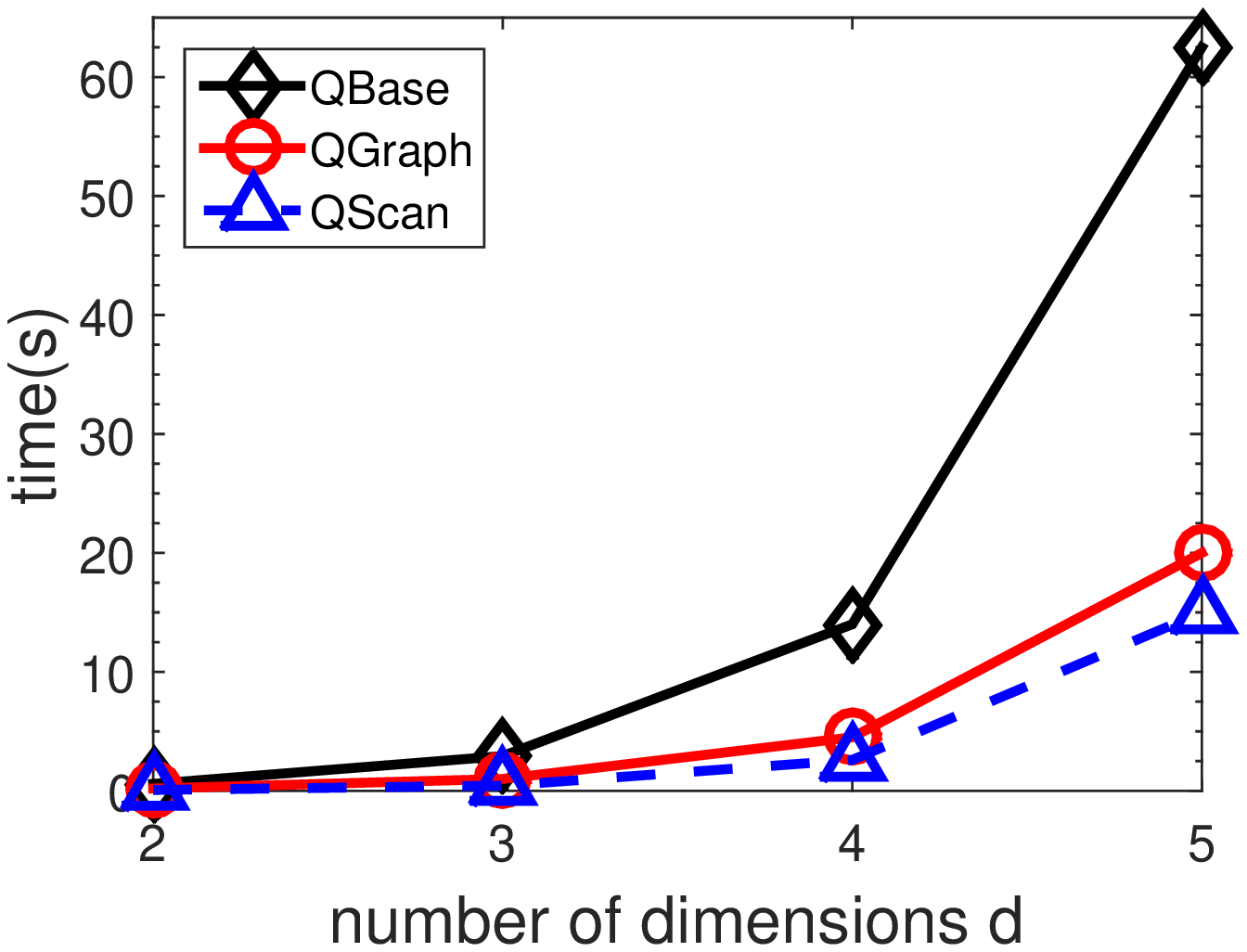}
\captionsetup{font={scriptsize}}
\vspace{-0.5em}
\caption{Impact of dimensions d.} \label{fig:high}
\end{minipage}%
\begin{minipage}[t]{0.5\linewidth}
\centering
\includegraphics[width=1.6in]{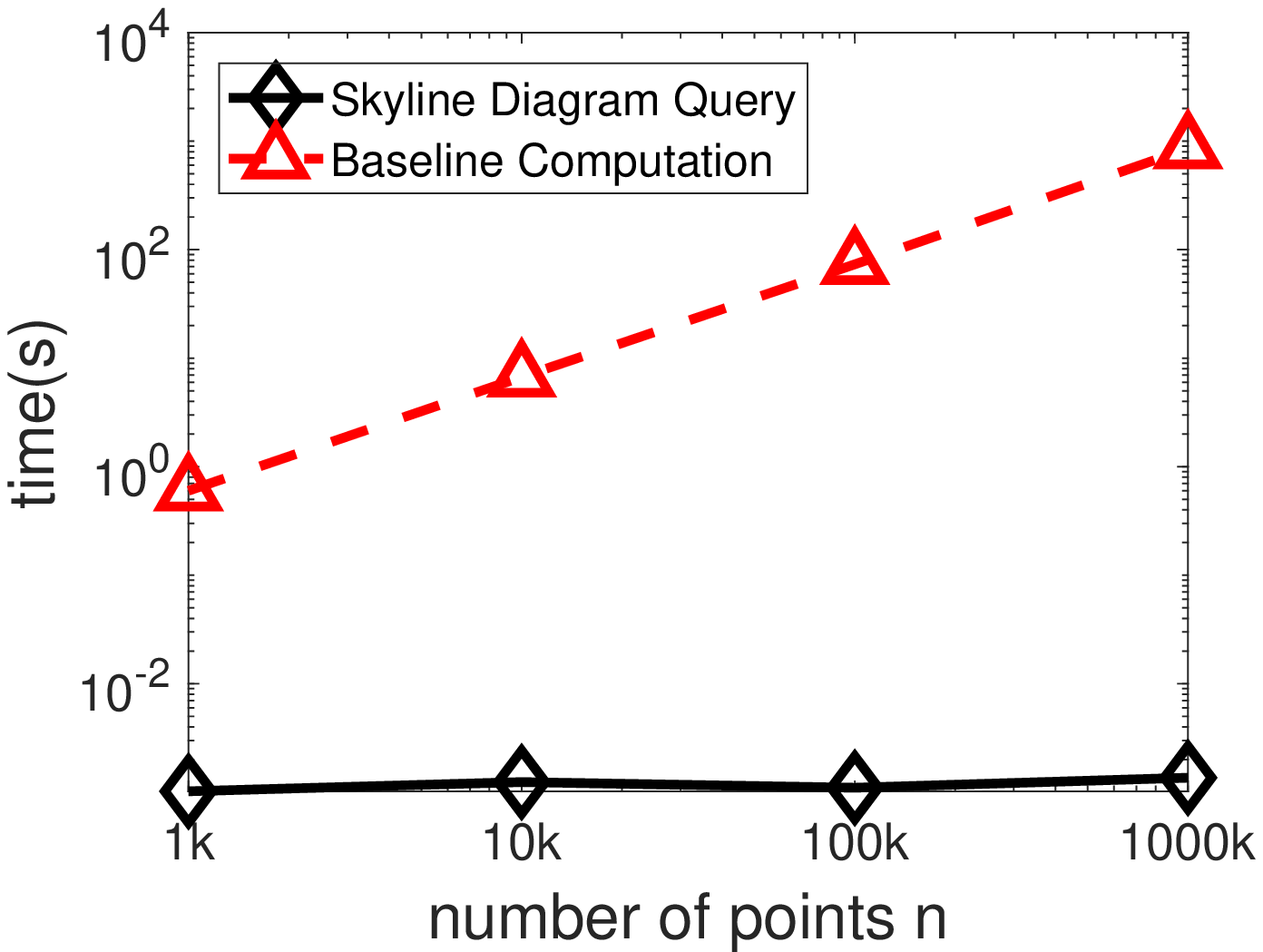}
\captionsetup{font={scriptsize}}
\vspace{-0.5em}
\caption{Query time using skyline diagram.} \label{fig:queryTime}
\end{minipage}
\end{figure}

\subsection{Query Time Using Skyline Diagram}

As we discussed, skyline diagram can be used as a structure for answering skyline queries, reverse skyline queries, as well as other applications. State-of-the-art skyline algorithms without any precomputed structure requires $O(n\log n)$ time ($O(n\log ^{d-1}n)$ for $d$-dimensional space).  Once we have the skyline diagram precomputed, the online time for answering skyline queries can be implemented with only $O(1)$, which is desirable in many real time scenarios. To demonstrate the benefit, we compare the query time using skyline diagram with a skyline query algorithm without precomputed structure. Figure \ref{fig:queryTime} shows the comparison on INDE dataset in two dimensional space. We chose the query point randomly and ran the experiment $1000$ times, the time was accumulated. We can see that the queries based on skyline diagram are $10^5$ times faster and not affected by the increasing number of points, while skyline queries without any structure requires more than one second when $n$ is large.

\subsection{Skyline Diagram of Dynamic Skyline}

\begin{figure*}[]
\centering
\subfigure[\scriptsize{time cost of CORR}]{
\begin{minipage}[b]{0.22\textwidth}
\includegraphics[width=1.1\linewidth]{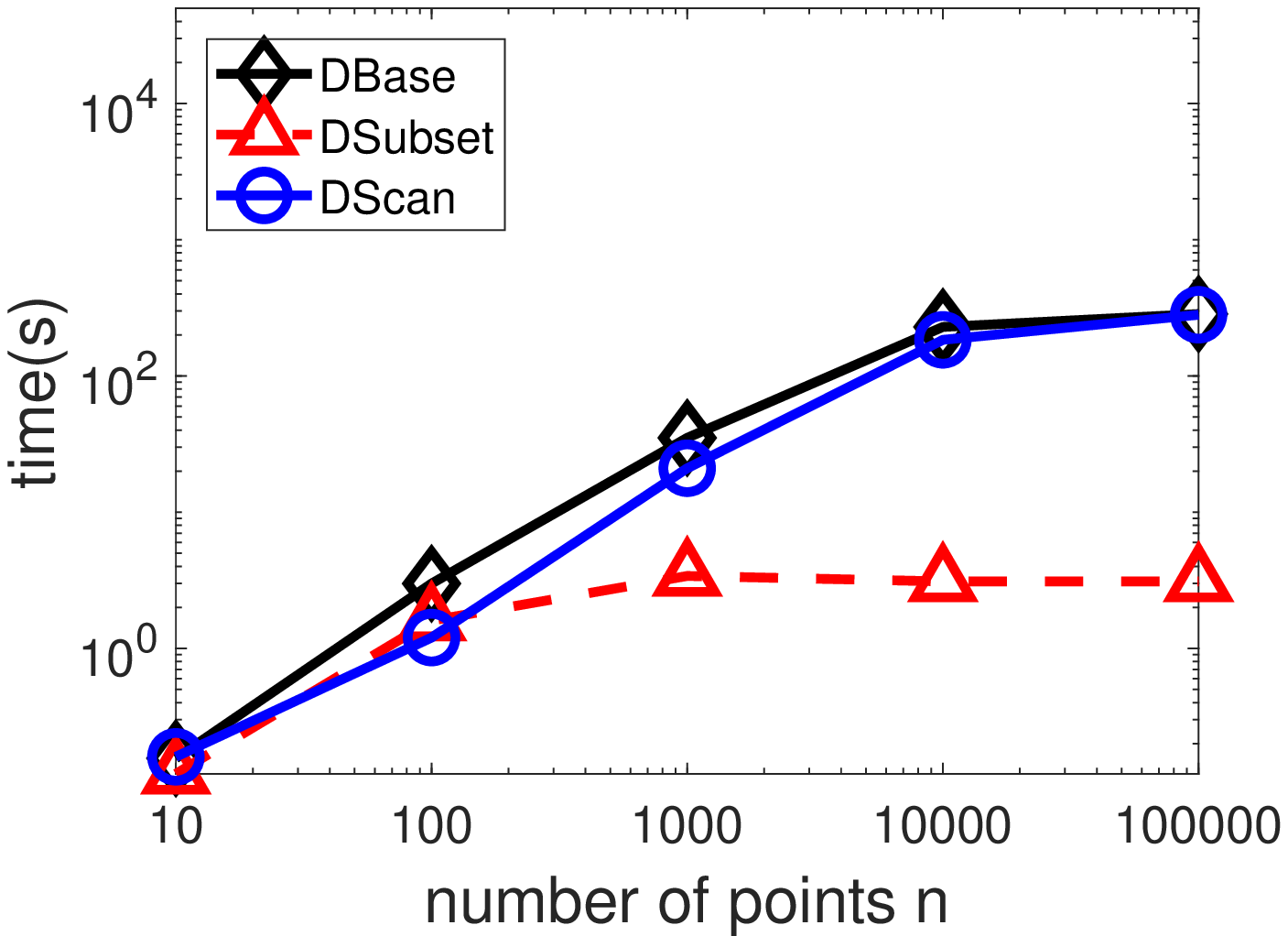}
\end{minipage}
}
\subfigure[\scriptsize{time cost of INDE}]{
\begin{minipage}[b]{0.22\textwidth}
\includegraphics[width=1.1\linewidth]{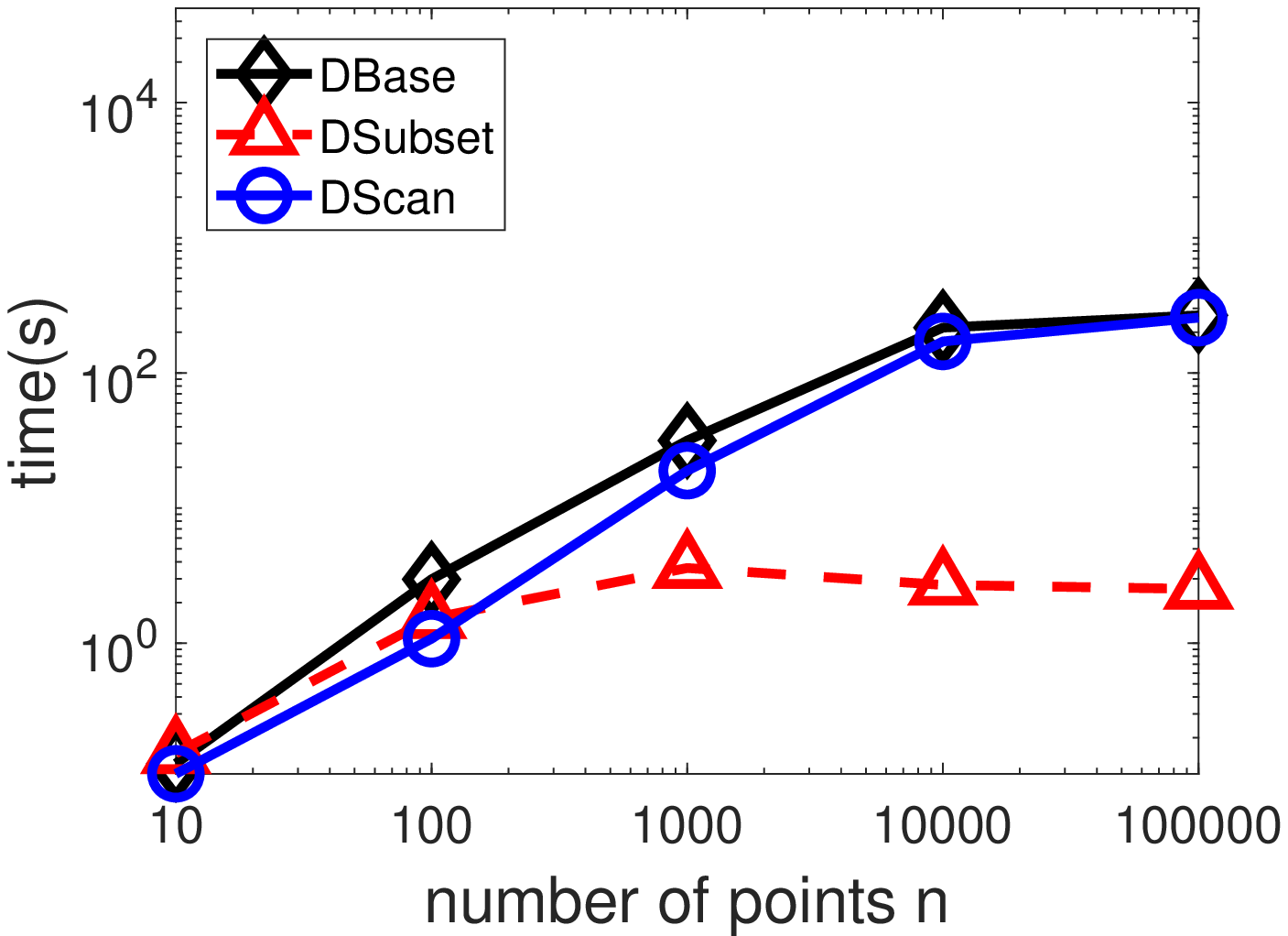}
\end{minipage}
}
\subfigure[\scriptsize{time cost of ANTI}]{
\begin{minipage}[b]{0.22\textwidth}
\includegraphics[width=1.1\linewidth]{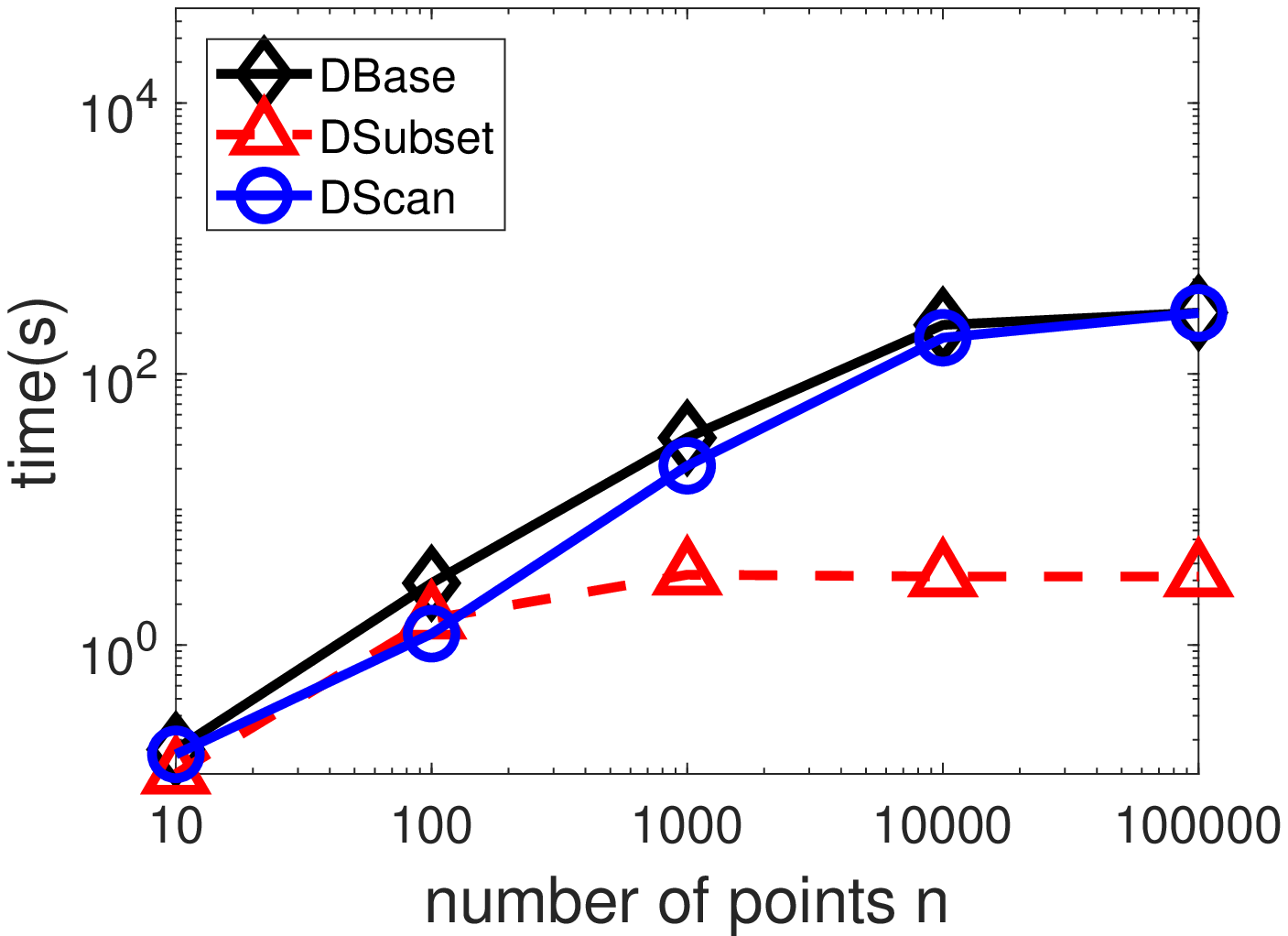}
\end{minipage}
}
\subfigure[\scriptsize{time cost of NBA}]{
\begin{minipage}[b]{0.22\textwidth}
\includegraphics[width=1.1\linewidth]{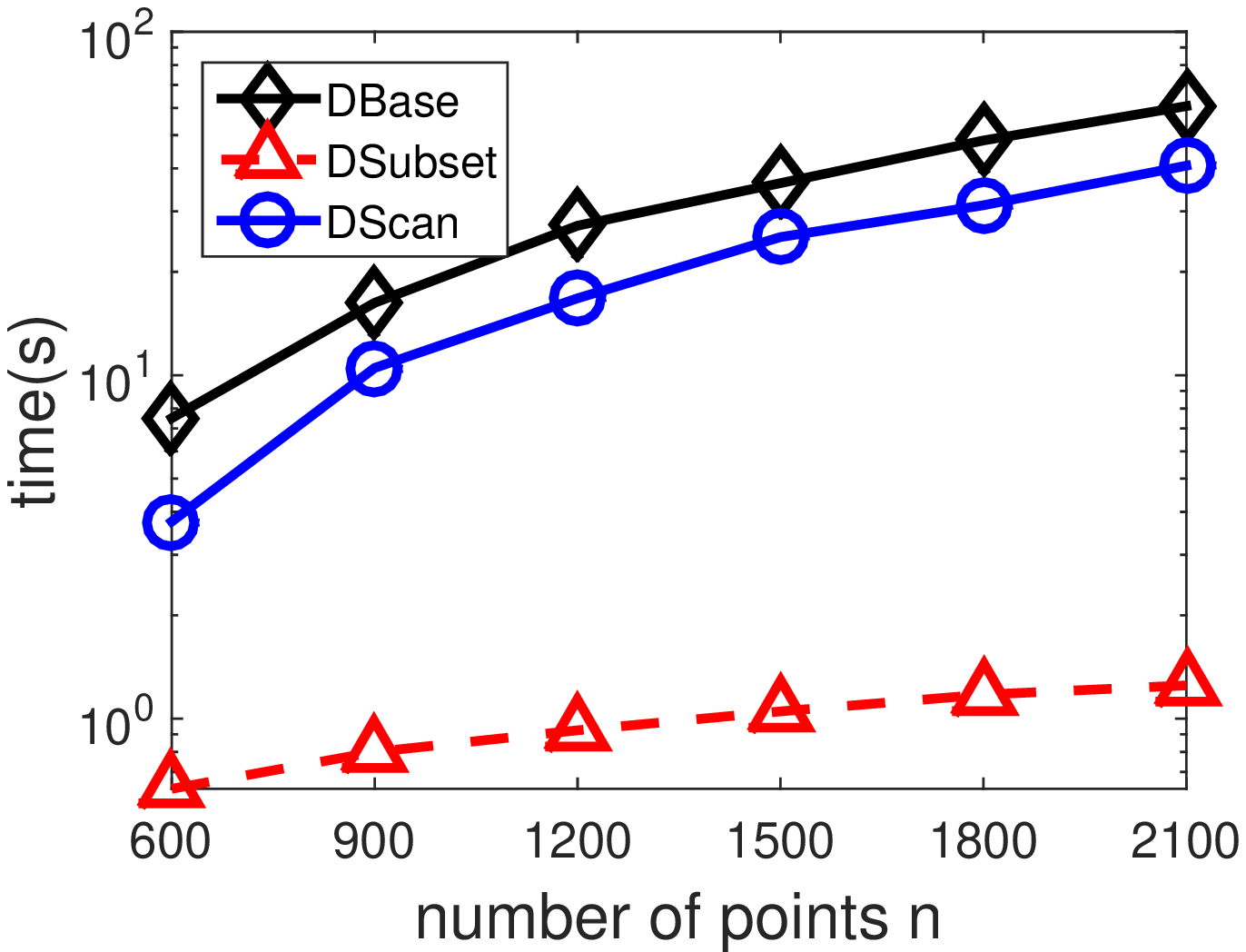}
\end{minipage}
}
 \vspace{-1em}\captionsetup{font={scriptsize}}\caption{The impact of $n$ on skyline diagram of dynamic skyline ($s=10^2$).} \label{fig:dynamic}
\end{figure*}

Figures \ref{fig:dynamic}(a)(b)(c)(d) present the time cost of DBase, DSubset, and DScan with varying number of points $n$ for the three synthetic datasets ($s=10^2$) and the NBA dataset. We used a fixed domain size ($s=10^2$) in this experiment and show the impact of domain size in Section \ref{sec:domainsize}. All algorithms have the same performance on CORR and ANTI datasets. DSubset significantly outperforms DBase, which verifies its effectiveness. For the dataset with large $n$, DSubset significantly outperforms DBase and DScan because when $n$ is much larger than $s$, the number of global skyline is very small, of which dynamic skyline is a subset.

\begin{figure}[H]
\centering
\subfigure[\scriptsize{quadrant skyline queries}]{
\begin{minipage}[b]{0.2\textwidth}
\includegraphics[width=1.1\textwidth]{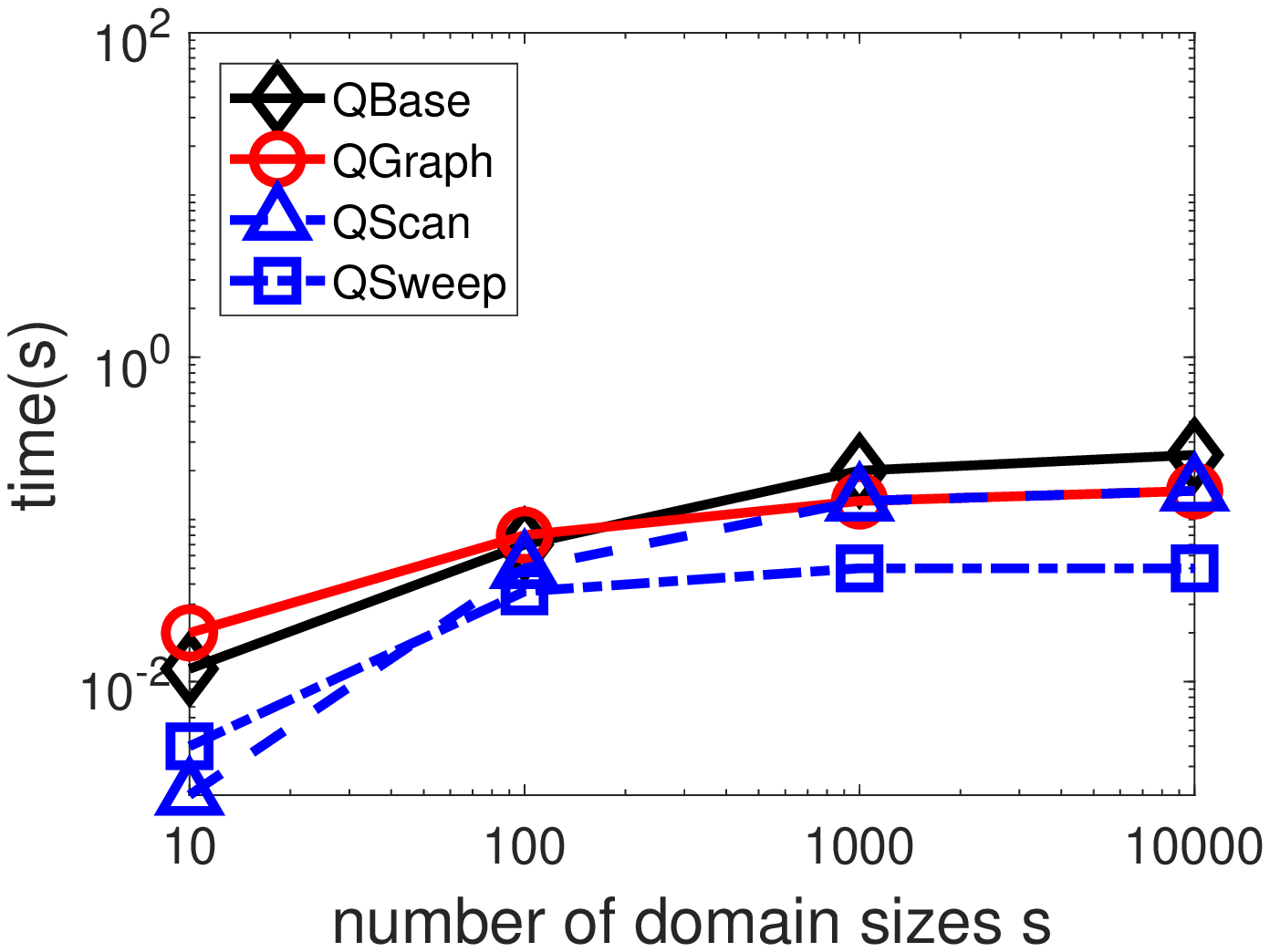}
\end{minipage}
}
\subfigure[\scriptsize{dynamic skyline queries}]{
\begin{minipage}[b]{0.2\textwidth}
\includegraphics[width=1.1\textwidth]{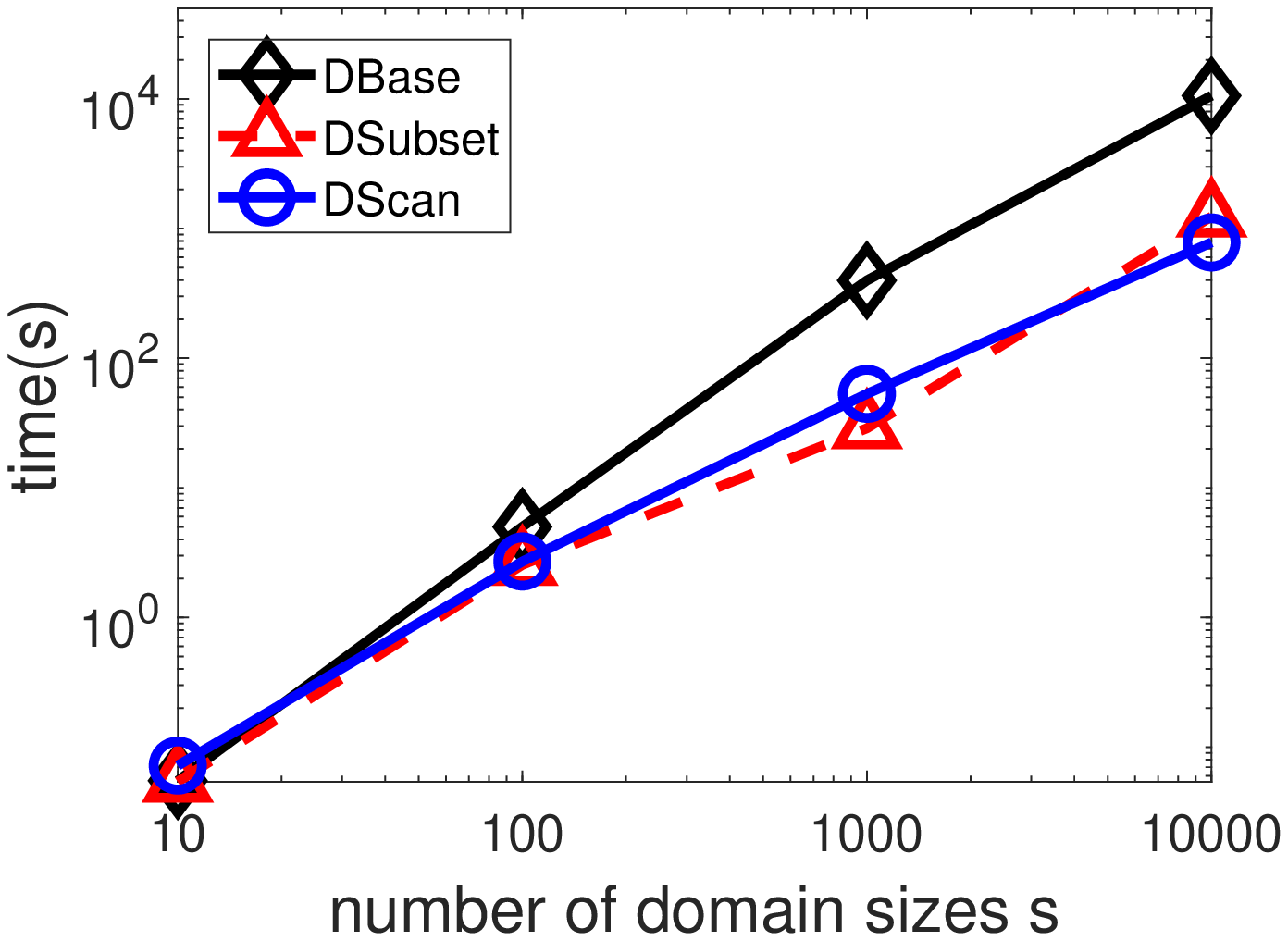}
\end{minipage}
}
 \vspace{-1em}\captionsetup{font={scriptsize}}\caption{The impact of $s$.} \label{fig:diffDomain}
\end{figure}

\vspace{-2em}

\subsection{Impact of Domain Size}\label{sec:domainsize}

In this experiment, we evaluate the impact of domain size on both quadrant and dynamic skyline diagram algorithms.  Figure \ref{fig:diffDomain} reports the time cost of different algorithms with varying domain size $s$ on INDE dataset ($n=200, d=2$). We observe that the time increases with increasing $s$ as expected.  On the other hand, when $s$ is much larger than $n$, increasing $s$ does not have an impact unless $n$ increases.  In addition, when $s$ is much larger than $n$, we see that DScan outperforms DSubset because the number of global skyline is very large in dataset with large domains.

\subsection{Approximate Skyline Diagram}
In this subsection, we evaluate two heuristic algorithms for the approximate skyline diagram in terms of the time cost, the space cost, and the precision. We define the precision as the average ratio of the number of skyline points in each skyline cell and the number of skyline points in each skyline polyomino that contains this skyline cell, $\sum_{i=1,2,...,n;j=1,2,...,n}\frac{|Sky(C_{i,j})|}{|Sky(SP_k)|}$, where skyline polyomino $SP_k$ contains skyline cell $C_{i,j}$. We note that the precision of an exact skyline diagram we have evaluated so far is $100\%$ since all skyline cells within a skyline polyomino are guaranteed to have the same skyline result.

\begin{figure}[H]
\centering
\subfigure[\scriptsize{time cost}]{
\begin{minipage}[b]{0.149\textwidth}
\includegraphics[width=1.1\textwidth]{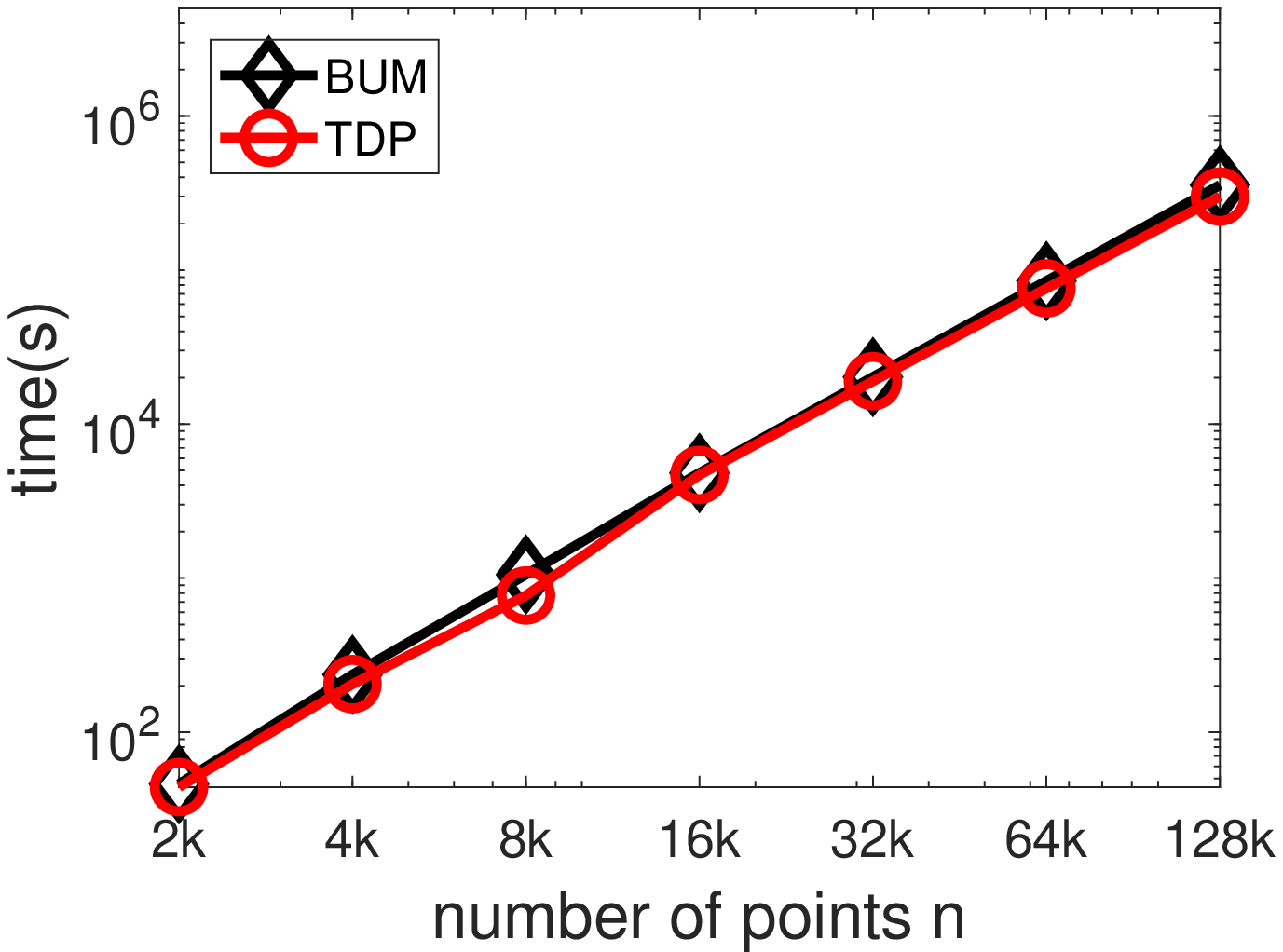}
\end{minipage}
}
\subfigure[\scriptsize{space cost}]{
\begin{minipage}[b]{0.149\textwidth}
\includegraphics[width=1.1\textwidth]{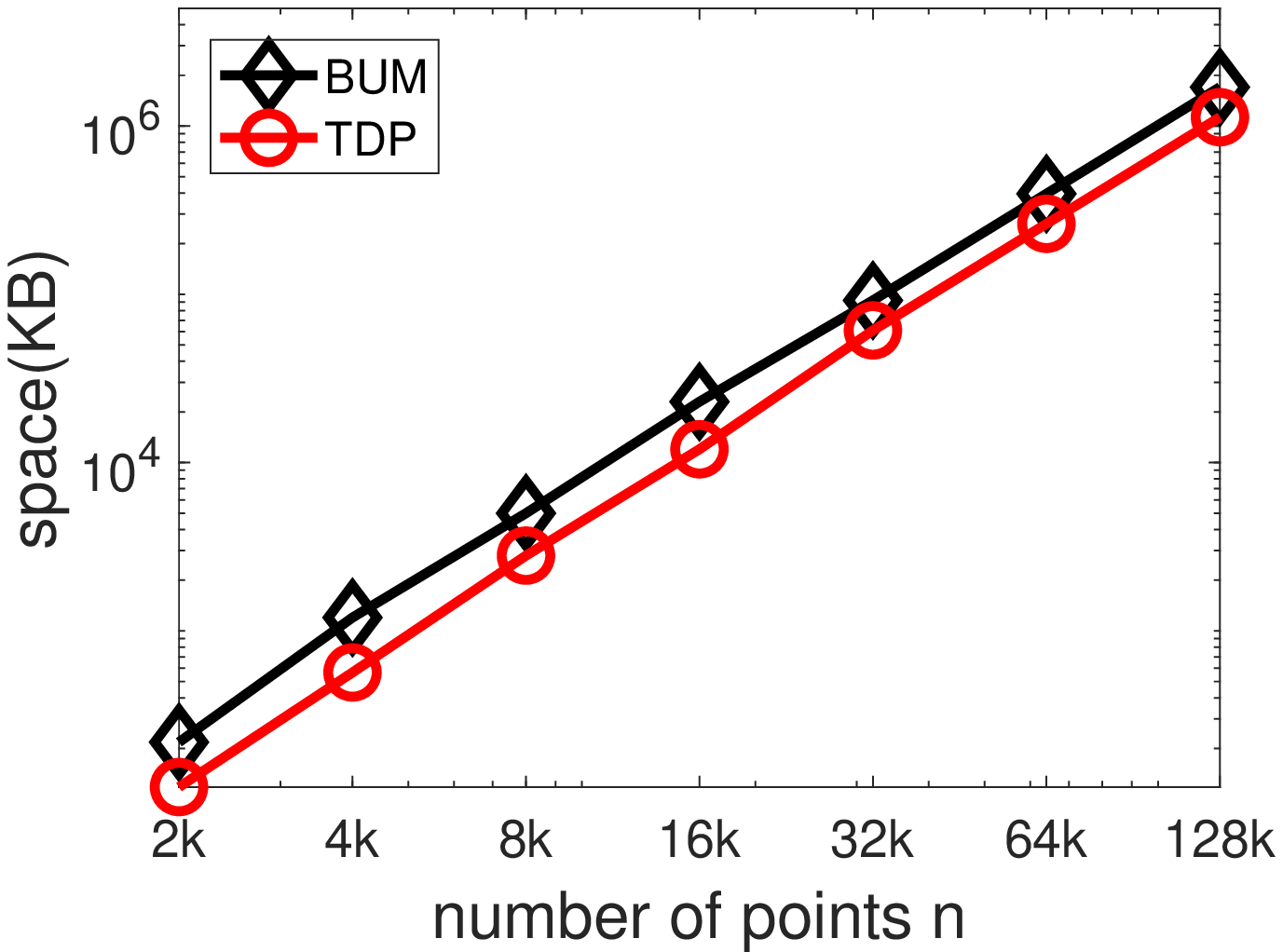}
\end{minipage}
}
\subfigure[\scriptsize{precision}]{
\begin{minipage}[b]{0.149\textwidth}
\includegraphics[width=1.1\textwidth]{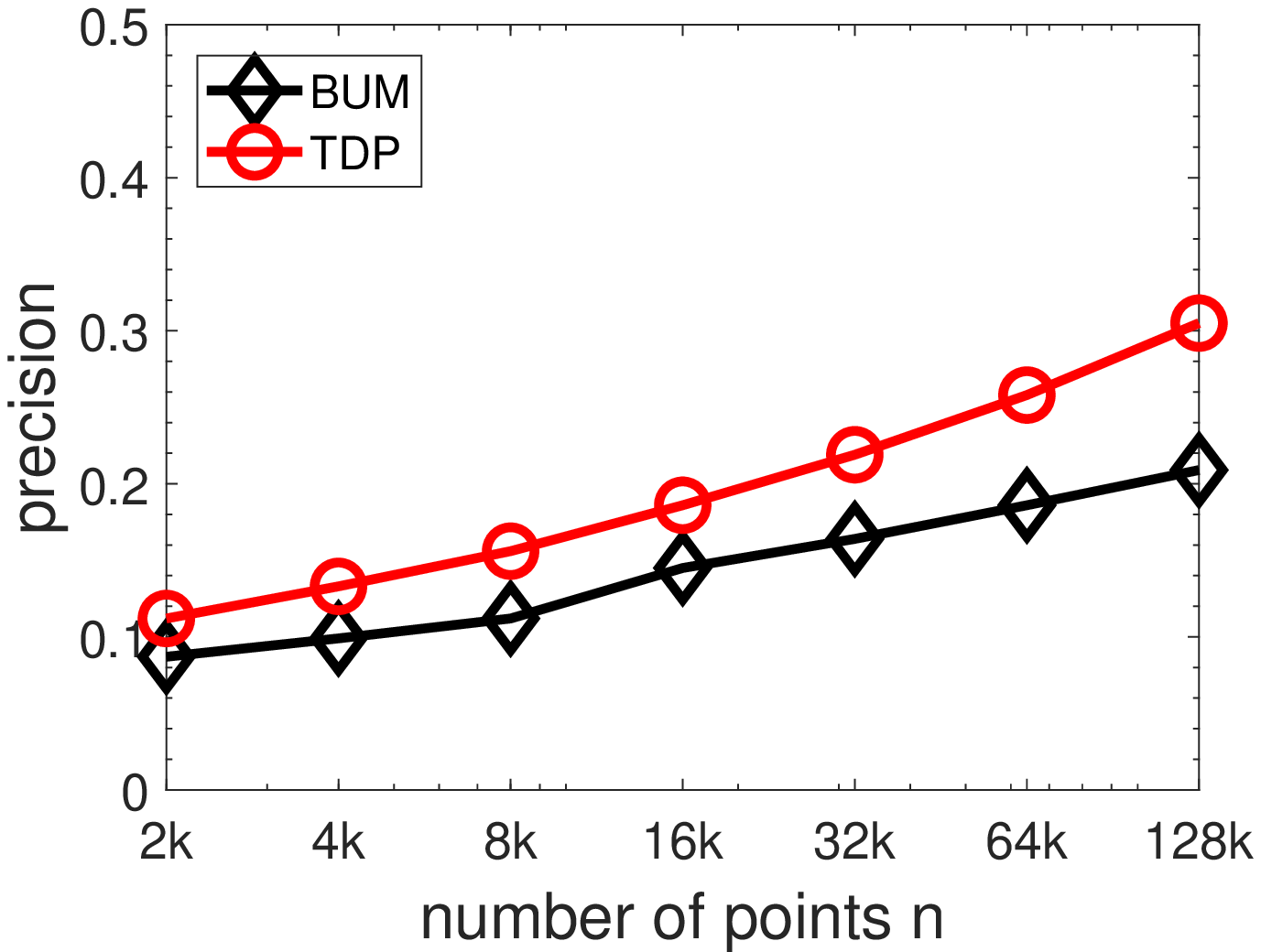}
\end{minipage}
}
 \vspace{-1em}\captionsetup{font={scriptsize}}\caption{The impact of $n$ ($\delta$=90).} \label{fig:AdiffN}
\end{figure}

\begin{figure}[H]
\centering
\subfigure[\scriptsize{time cost}]{
\begin{minipage}[b]{0.149\textwidth}
\includegraphics[width=1.1\textwidth]{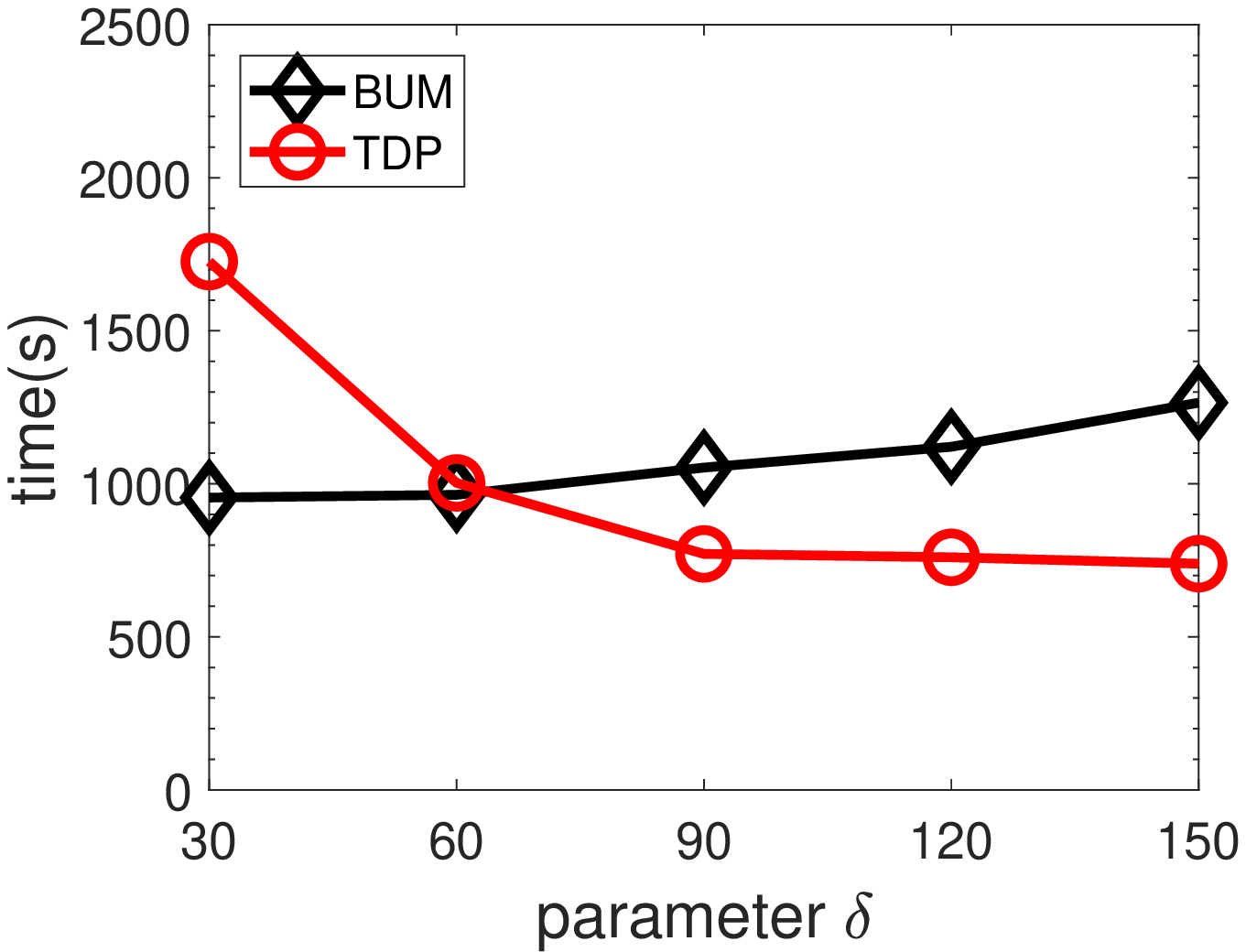}
\end{minipage}
}
\subfigure[\scriptsize{space cost}]{
\begin{minipage}[b]{0.149\textwidth}
\includegraphics[width=1.1\textwidth]{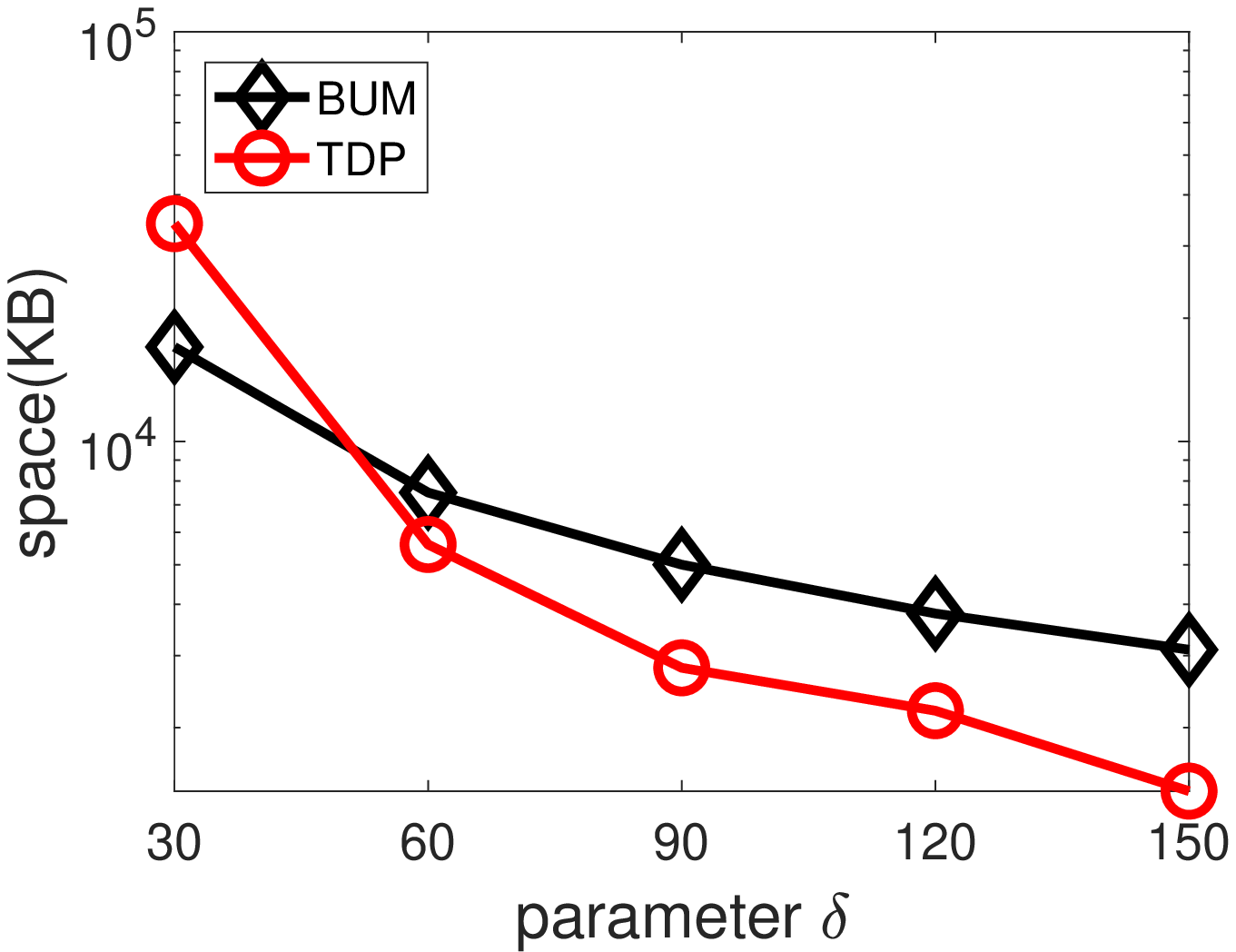}
\end{minipage}
}
\subfigure[\scriptsize{precision}]{
\begin{minipage}[b]{0.149\textwidth}
\includegraphics[width=1.1\textwidth]{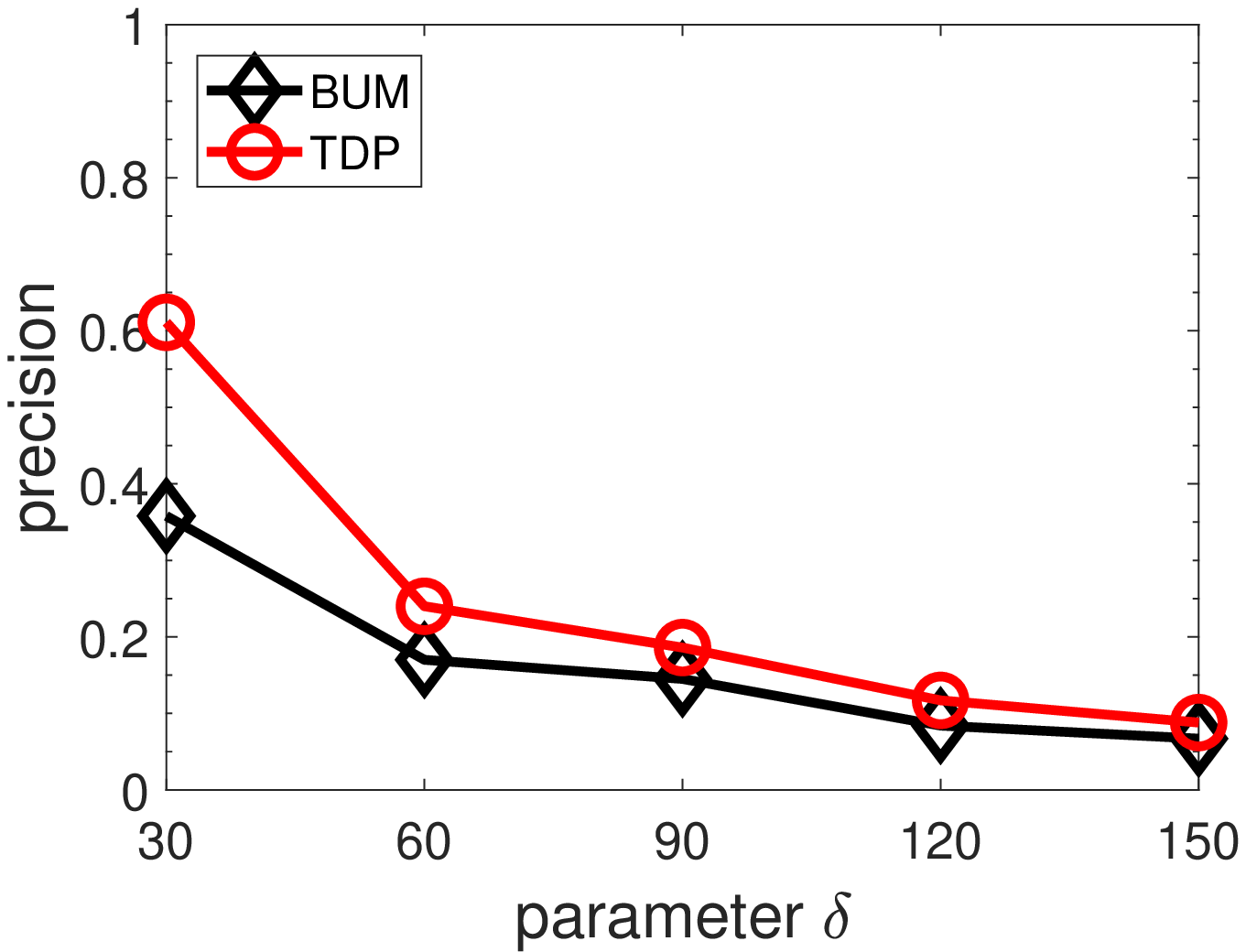}
\end{minipage}
}
 \vspace{-1em}\captionsetup{font={scriptsize}}\caption{The impact of $\delta$ (n=8k).} \label{fig:AdiffDelta}
\end{figure}

Figures \ref{fig:AdiffN}(a)(b)(c) present the impact of the number of points $n$ on the time cost, the space cost, and the precision. Both the time cost and the space cost increase linearly with the increasing number of points $n$. Precision increases with the increasing number of points $n$ because the number of skyline points in each skyline cell is increasing and the number of skyline points in each skyline polyomino does not change substantially as we fix $\delta=90$.

Figures \ref{fig:AdiffDelta}(a)(b)(c) present the impact of parameter $\delta$ on the time cost, the space cost, and the precision. In Figure \ref{fig:AdiffDelta}(a), BUM is better than TDP when $\delta$ is small, but is worse when $\delta$ is large. The time cost of BUM increases with increasing $\delta$ because we need to check more skyline cells to find HPLs and VPLs. On the contrary, the time cost of TDP decreases with increasing $\delta$ because the needed number of guessing the exact number of partitioning lines decreases. Therefore, we can employ BUM and TDP when $\delta$ is small and large, respectively. Furthermore, if we can learn the appropriate number of partitioning lines rather than guessing from 2, TDP will have a much better performance. In Figure \ref{fig:AdiffDelta}(b), the space cost for both BUM and TDP decreases with increasing $\delta$ because we need less skyline polyominos when $\delta$ is large. When $\delta$ is small, the space cost of TDP is larger than the space cost of BUM, but is smaller when $\delta$ is large, which corresponds to the trend of the time cost in Figure \ref{fig:AdiffDelta}(a). In Figure \ref{fig:AdiffDelta}(c), precision decreases with the increasing $\delta$ because the number of skyline points in each skyline cell does not change substantially but the number of skyline points in each skyline polyomino is increasing.

%---------------------------------------------------------------------------------------------------------------------------------------------------------------------------------------------------------------------------------------------------------------
%---------------------------------------------------------------------------------------------------------------------------------------------------------------------------------------------------------------------------------------------------------------
\section{Conclusions and Future Work}\label{sec:conclusion}
In this paper, we proposed a novel concept called skyline diagram.  Given a set of points, it partitions the plane into a set of skyline polyominos where query points in each polyomino have the same skyline query results. We studied skyline diagram for three kinds of skyline queries and presented several efficient algorithms to compute the skyline diagram. We propose two heuristic algorithms, bottom-up merging algorithm and top-down partitioning algorithm, to efficiently compute the approximate skyline diagram with different tradeoffs. Experimental results on both real and synthetic datasets show that our algorithms are efficient and scalable.

%---------------------------------------------------------------------------------------------------------------------------------------------------------------------------------------------------------------------------------------------------------------
%---------------------------------------------------------------------------------------------------------------------------------------------------------------------------------------------------------------------------------------------------------------
\iffalse
\section{Acknowledgement}
The authors would like to thank the anonymous reviewers for their helpful comments. This work was supported in part by the NSF grant CNS-1618932, the AFOSR grant FA9550-12-1-0240, the NSERC Discovery Grant program, the Canada Research Chair program, the NSERC Strategic Grant program. All opinions, findings, conclusions and recommendations in this paper are those of the authors and do not necessarily reflect the views of the funding agencies.
\fi

%-----------------------------------------------------------------------------------------------------------------------------------------
{\tiny
\bibliographystyle{abbrv}\tiny
\bibliography{SkylineCell}
}

\vspace{-1em}
\begin{IEEEbiography}[{\includegraphics[width=1in,height=1.25in,clip,keepaspectratio]{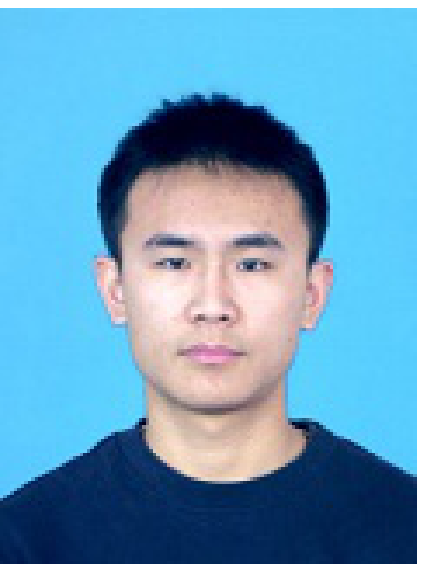}}]{Jinfei Liu}
is a joint postdoctoral research fellow at  Emory University and Georgia Institute of Technology. His research interests include skyline queries, data privacy and security, and machine learning. He has published over 20 papers in premier journals and conferences including TKDE, VLDB, ICDE, CIKM, and IPL.
\end{IEEEbiography}

\vspace{-1em}
\begin{IEEEbiography}[{\includegraphics[width=1in,height=1.25in,clip,keepaspectratio]{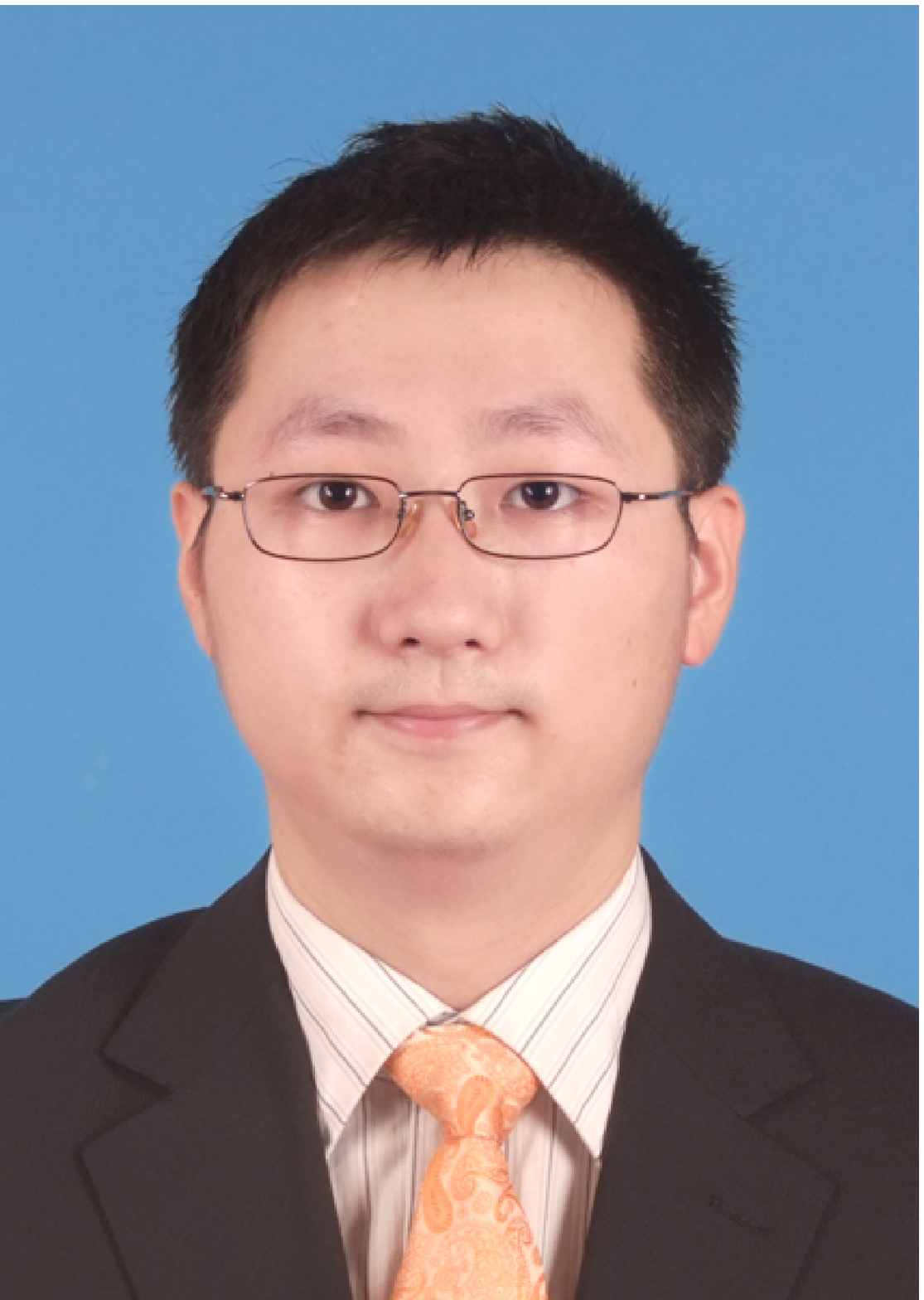}}]{Juncheng Yang}
is Ph.D. student at Carnegie Mellon University. His research interests include computer security, database, smart cache in storage and distributed system. He has published over 10 papers in premier conferences including ICDE and SoCC.
\end{IEEEbiography}

\vspace{-1em}
\begin{IEEEbiography}[{\includegraphics[width=1in,height=1.25in,clip,keepaspectratio]{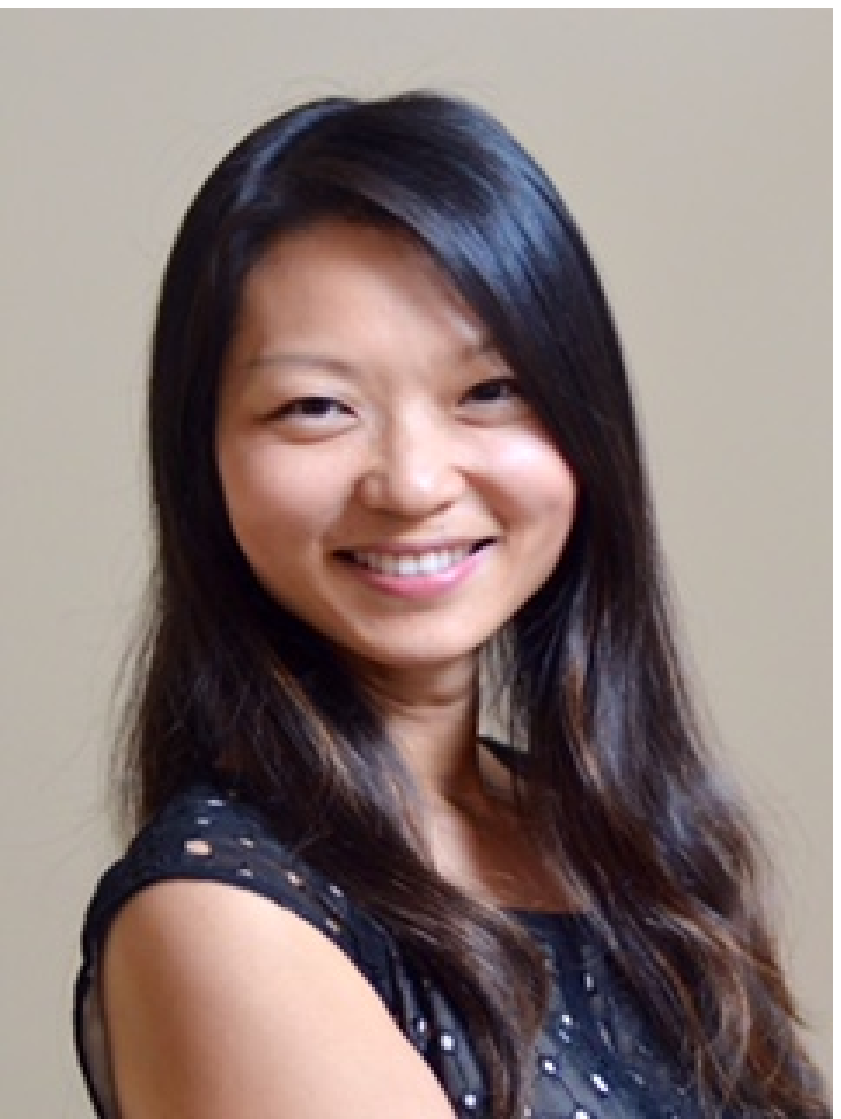}}]{Li Xiong}
is a Professor of Computer Science and Biomedical Informatics at Emory University. She conducts research that addresses both fundamental and applied questions at the interface of data privacy and security, spatiotemporal data management, and health informatics.  She has published over 100 papers in premier journals and conferences including TKDE, JAMIA, VLDB, ICDE, CCS, and WWW. She currently serves as associate editor for IEEE Transactions on Knowledge and Data Engineering (TKDE) and on numerous program committees for data management and data security conferences.
\end{IEEEbiography}

\vspace{-1em}
\begin{IEEEbiography}[{\includegraphics[width=1.5in,height=1.35in,clip,keepaspectratio]{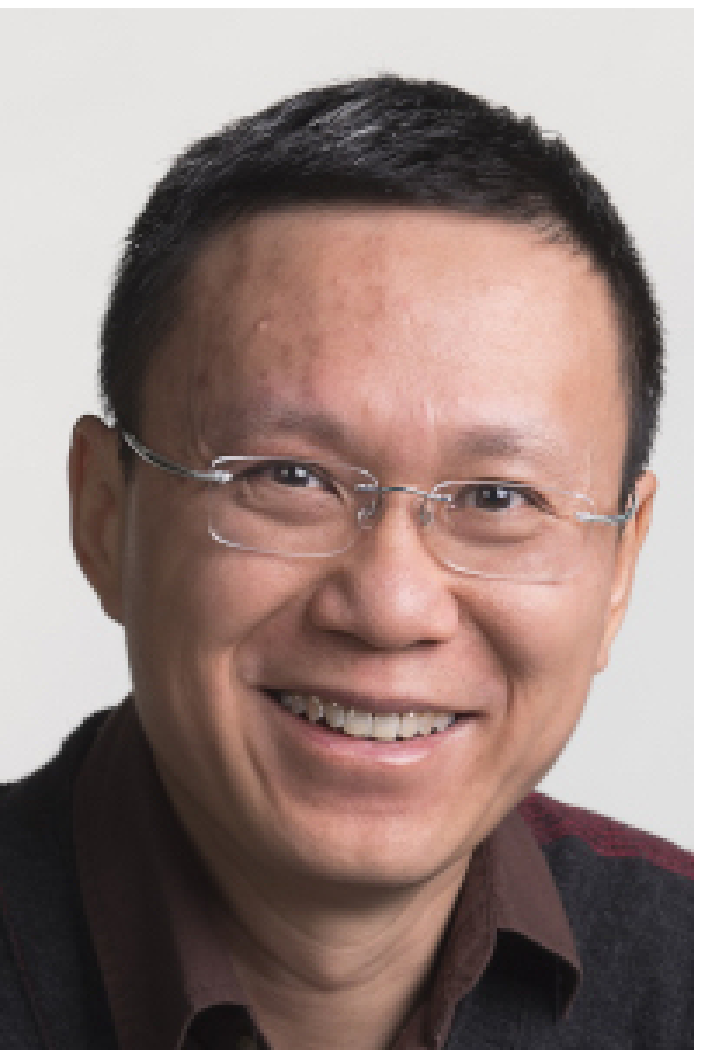}}]{Jian Pei}
is currently a Canada Research Chair (Tier 1) in Big Data Science, a Professor in the School of Computing Science at Simon Fraser University, Canada. He is one of the most cited authors in data mining, database systems, and information retrieval. Since 2000, he has published one textbook, two monographs and over 200 research papers in refereed journals and conferences, which have been cited by more than 77,000 in literature. He was the editor-in-chief of the IEEE Transactions of Knowledge and Data Engineering (TKDE) in 2013-2016, is currently a director of the Special Interest Group on Knowledge Discovery in Data (SIGKDD) of the Association for Computing Machinery (ACM). He is a Fellow of the ACM and of the IEEE.
\end{IEEEbiography}

\vspace{-1em}
\begin{IEEEbiography}[{\includegraphics[width=1in,height=1.35in,clip,keepaspectratio]{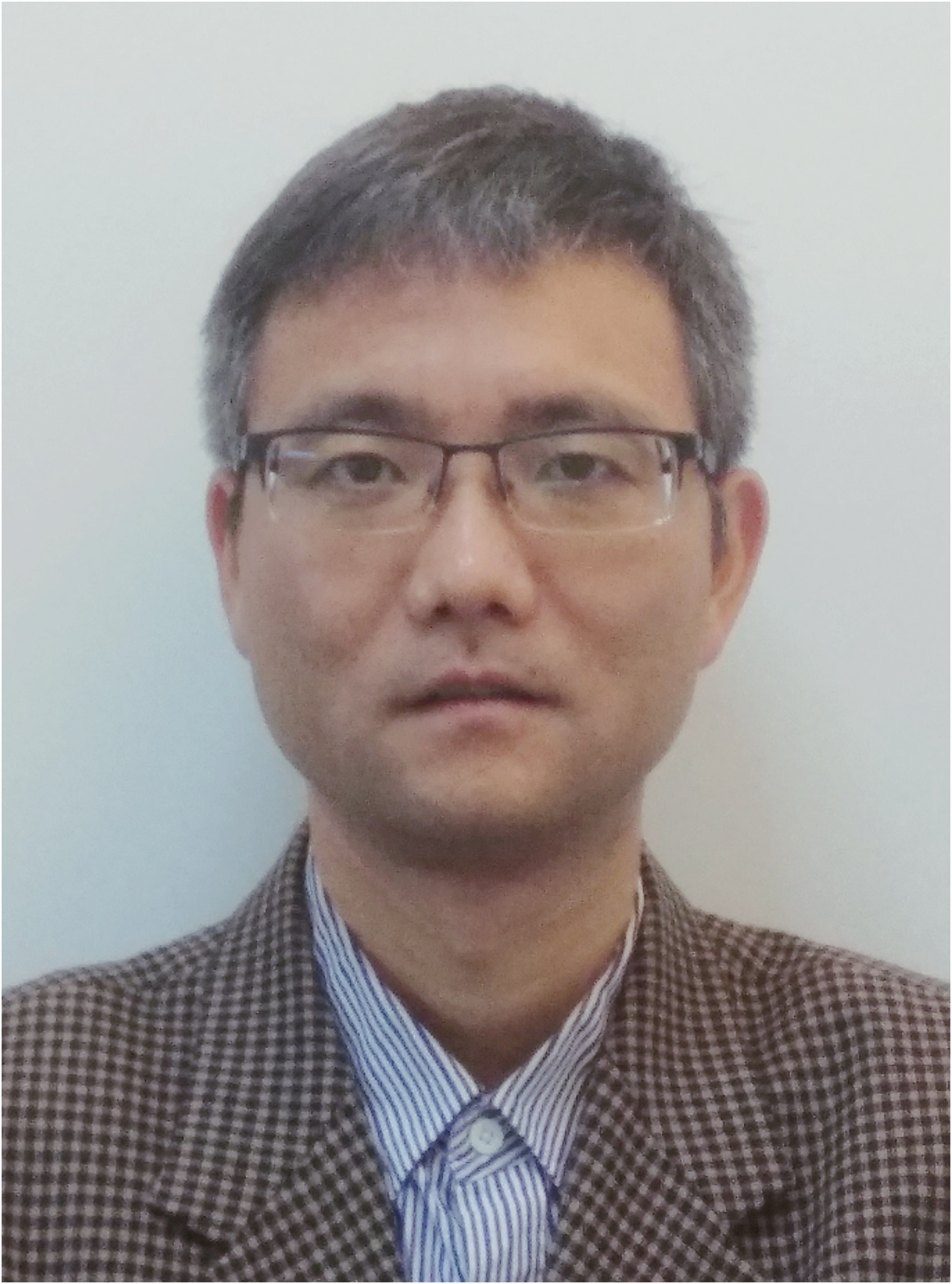}}]{Jun Luo}
is a principal researcher at Lenovo Machine Intelligence Center in Hong Kong. He received his PhD degree in computer science from the University of Texas at Dallas, USA, in 2006. His research interests include big data, machine learning, spatial temporal data mining and computational geometry. He has published over 90 journal and conference papers in these areas.
\end{IEEEbiography}

\vspace{-1em}
\begin{IEEEbiography}[{\includegraphics[width=1in,height=1.35in,clip,keepaspectratio]{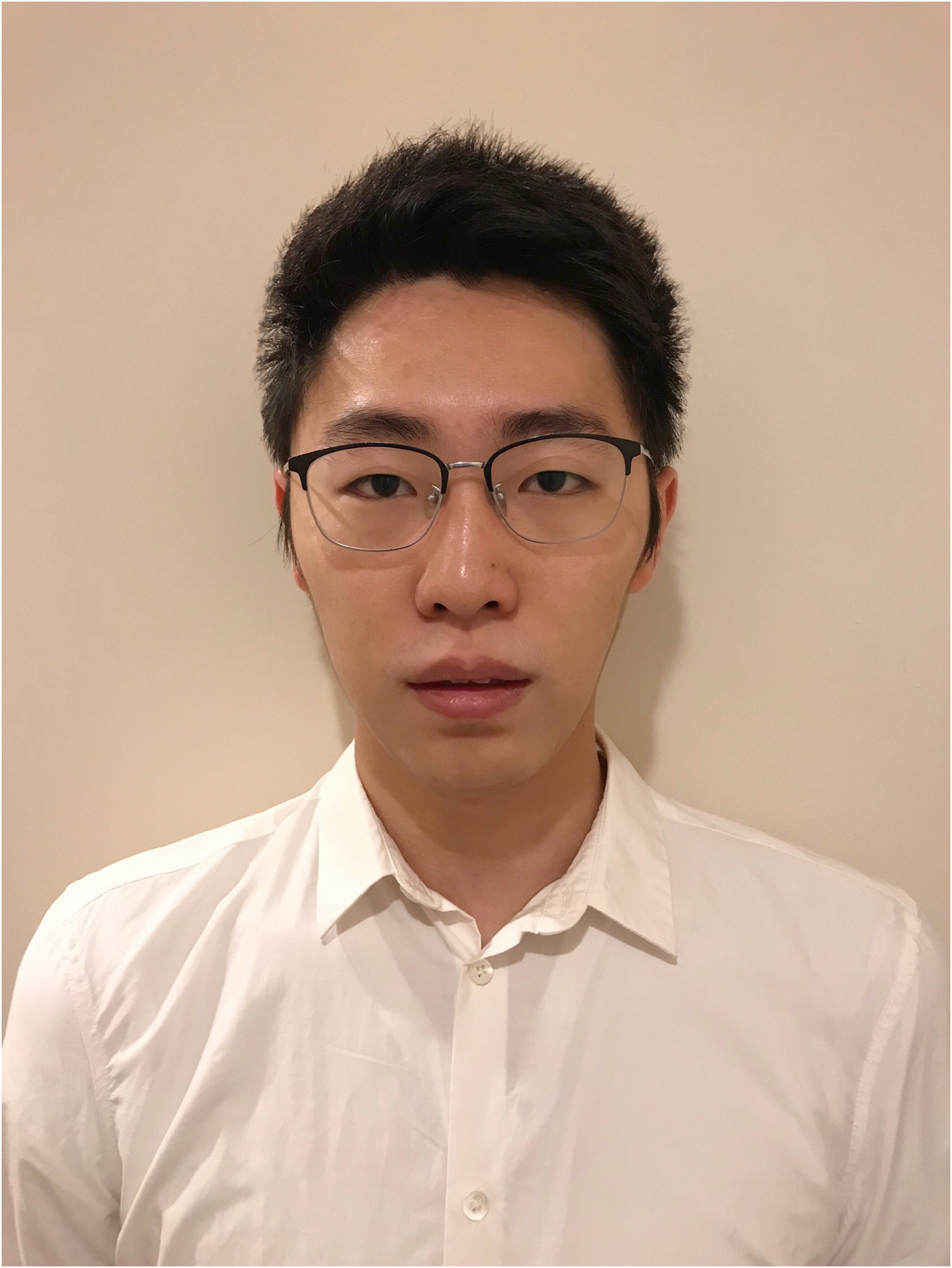}}]{Yuzhang Guo}
is an undergraduate student at Emory University. His research interests include machine learning and data science.
\end{IEEEbiography}

\vspace{-1em}
\begin{IEEEbiography}[{\includegraphics[width=1in,height=1.35in,clip,keepaspectratio]{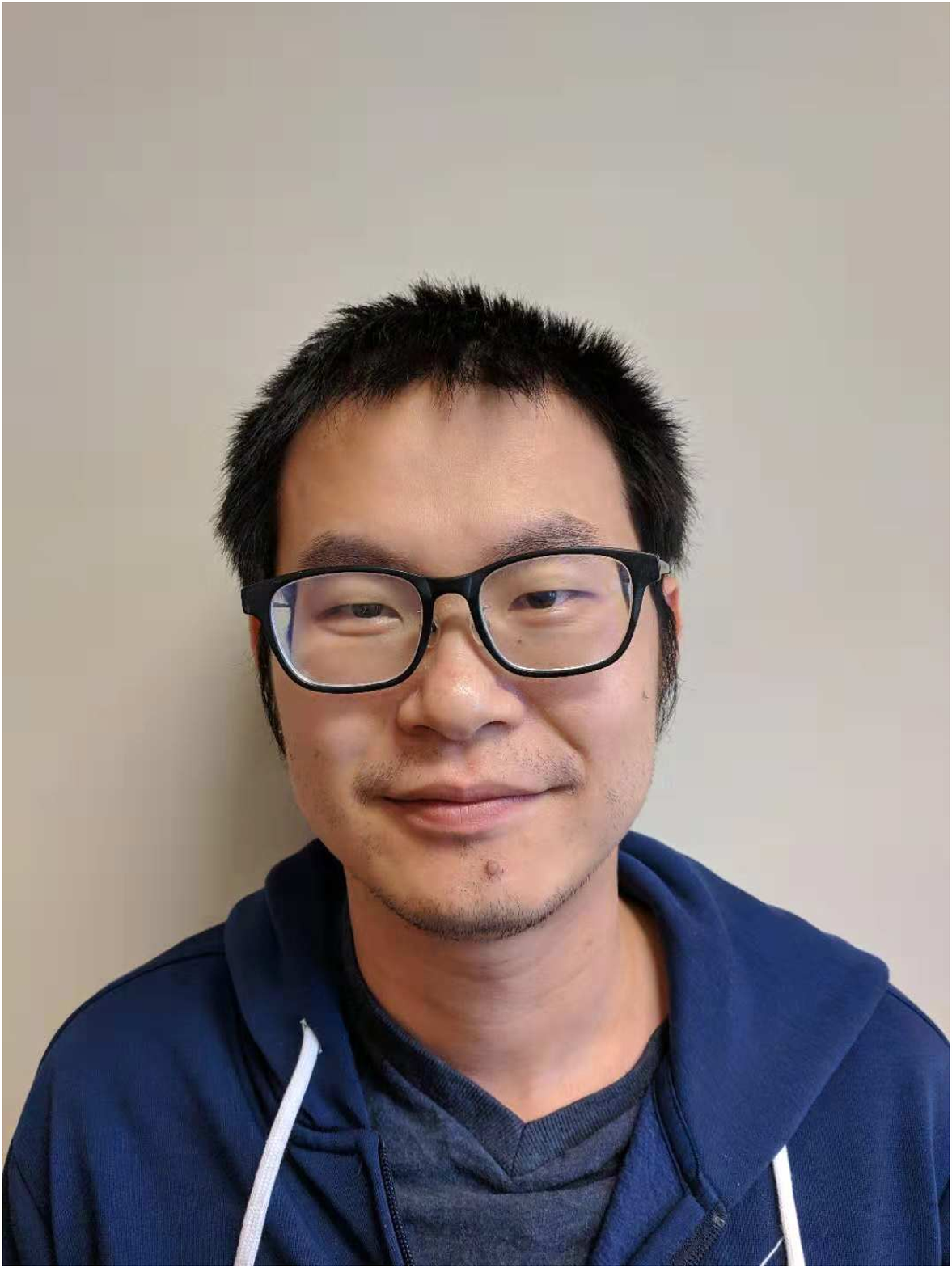}}]{Shuaicheng Ma}
is a master student at University of Central Florida. He is currently a visiting researcher at Emory University. His research interests include data privacy, security, and blockchain.
\end{IEEEbiography}

\vspace{-1em}
\begin{IEEEbiography}[{\includegraphics[width=1in,height=1.35in,clip,keepaspectratio]{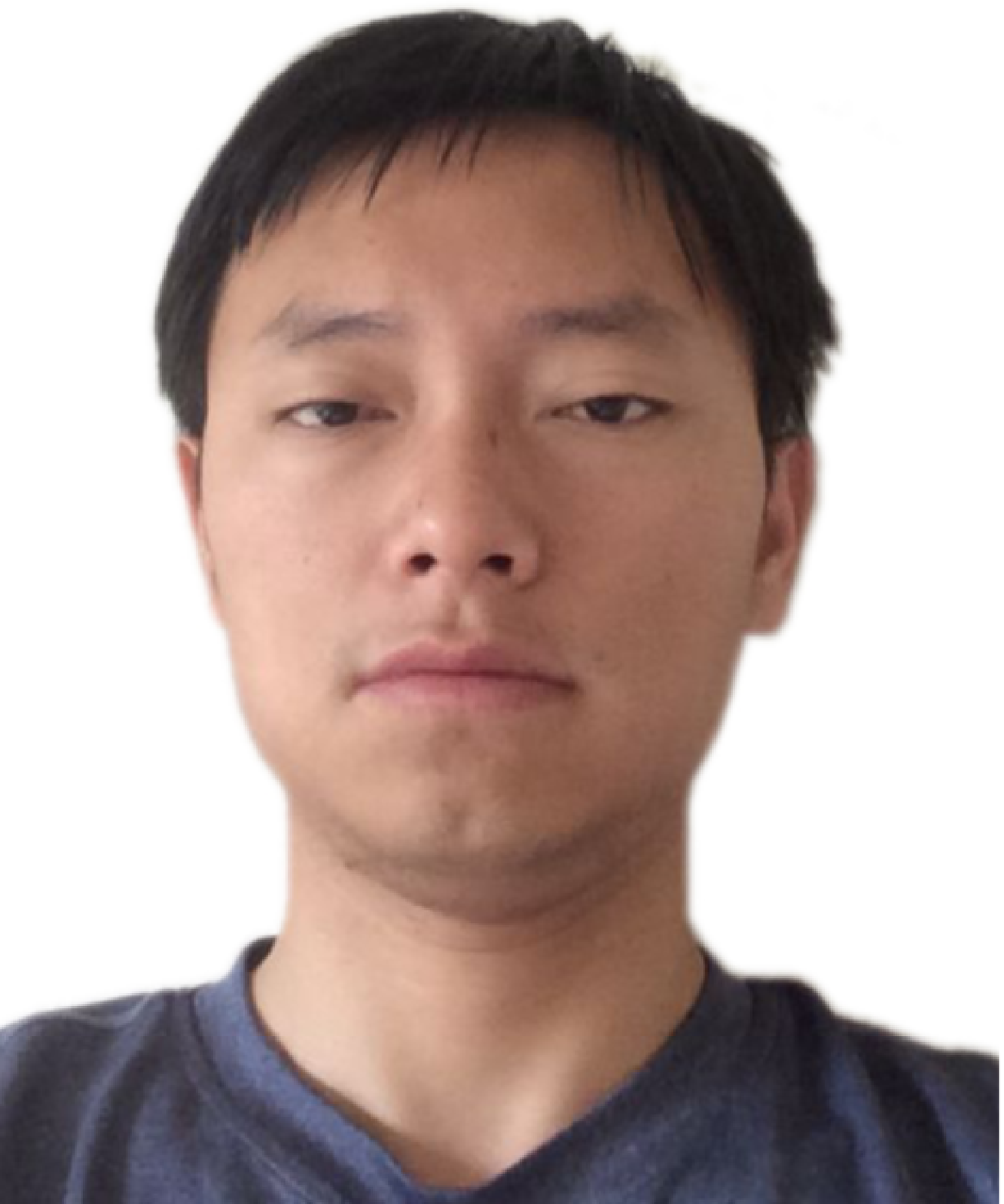}}]{Chenglin Fan}
is a Ph.D. candidate at University of Texas at Dallas. His research interests including algorithm theory, computational geometry and data science. He has published over 10 papers in premier conferences including SODA and SoCG.
\end{IEEEbiography}

\end{document}